\DeclarePairedDelimiter{\abs}{\lvert}{\rvert}
\DeclarePairedDelimiter{\ket}{\lvert}{\rangle}
\DeclarePairedDelimiterX{\braket}[2]{\langle}{\rangle}{#1\delimsize\vert #2}
\DeclarePairedDelimiterX{\ketbra}[2]{\lvert}{\rvert}{#1\delimsize\rangle\delimsize\langle #2}
\DeclarePairedDelimiterX{\expval}[3]{\langle}{\rangle}{#1\delimsize\vert #2\delimsize\vert #3}
\DeclarePairedDelimiter{\gen}{\langle}{\rangle}
\DeclarePairedDelimiter{\set}{\{}{\}}
\DeclarePairedDelimiterX{\setbuilder}[2]{\{}{\}}{#1\mathrel{\delimsize\vert} #2}
\newcommand{\diag}{\mathop{}\mathopen{}\mathrm{diag}}
\newcommand{\poly}{\mathop{}\mathopen{}\mathrm{poly}}
\newtheorem{theorem}{Theorem}
\newtheorem{proposition}{Proposition}
\newtheorem{lemma}{Lemma}
\newtheorem{corollary}{Corollary}
\newtheorem*{problem}{Problem}
\theoremstyle{definition}
\newtheorem{definition}{Definition}
\theoremstyle{remark}
\newtheorem{remark}{Remark}
\begin{document}

\title{A Non-Commuting Stabilizer Formalism}
\author[1]{Xiaotong Ni}
\author[2]{Oliver Buerschaper}
\author[1]{Maarten Van den Nest}
\affil[1]
      {Max-Planck-Institut für Quantenoptik,
       Garching,
       Germany}
\affil[2]
      {Perimeter Institute for Theoretical Physics,
       31 Caroline Street North,
       Waterloo, Ontario,
       Canada, N2L\,2Y5}

\maketitle

\begin{abstract}
    We propose a non-commutative extension of the Pauli stabilizer formalism.
    The aim is to describe a class of many-body quantum states which is richer than the standard Pauli stabilizer states.
    In our framework, stabilizer operators are tensor products of single-qubit operators drawn from the group $\langle \alpha I, X,S\rangle$, where $\alpha=e^{i\pi/4}$ and $S=\operatorname{diag}(1,i)$.
    We provide techniques to efficiently compute various properties related to bipartite entanglement, expectation values of local observables, preparation by means of quantum circuits, parent Hamiltonians etc.
    We also highlight significant differences compared to the Pauli stabilizer formalism.
    In particular, we give examples of states in our formalism which cannot arise in the Pauli stabilizer formalism, such as topological models that support non-Abelian anyons.
\end{abstract}

\newcommand{\PauliGroup}{\mathcal{P}}
\newcommand{\PauliGroupMany}{\mathcal{P}_n}
\newcommand{\PauliSGroup}{\mathcal{P}^S}
\newcommand{\PauliSGroupMany}{\mathcal{P}_n^S}
\newcommand{\PermutationGroup}{\mathfrak{P}}

\newcommand{\Generators}{\mathcal{S}}
\newcommand{\Commutators}{\mathcal{C}_\mathcal{S}}
\newcommand{\Squares}{\mathcal{Q}_\mathcal{S}}
\newcommand{\MagicZ}{\mathcal{Z}}
\newcommand{\DiagonalSubgroup}{G_D}
\newcommand{\ZSubgroup}{G_Z}
\newcommand{\GeneratorsDiagonal}{\mathcal{D}}

\newcommand{\Identity}{I}
\newcommand{\ControlledZ}{\mathrm{C}Z}
\newcommand{\ControlledZLong}{\diag(1,1,1,-1)}
\newcommand{\ControlledControlledZ}{\mathrm{CC}Z}
\newcommand{\ControlledNOT}{\mathrm{CNOT}}
\newcommand{\ControlledS}{\mathrm{C}S}
\newcommand{\ControlledSLong}{\diag(1,1,1,\mathrm{i})}
\newcommand{\LogicalZ}{\bar{Z}}
\newcommand{\LogicalX}{\bar{X}}

\newcommand{\CliffordCircuit}{\mathcal{C}}
\newcommand{\TCSCCZCircuit}{\mathcal{U}}
\newcommand{\ControlledNOTCircuit}{\mathcal{C}_\ControlledNOT}

\newcommand{\ComplexityClass}[1]{\textsc{#1}}
\newcommand{\Problem}[1]{\textsc{#1}}

\newcommand{\Orbit}{\mathcal{O}}
\newcommand{\CovariantFunctions}{\mathcal{F}}
\newcommand{\PauliSVertexStabilizer}{\mathcal{A}}

\newcommand{\OrbitSpace}{V}
\newcommand{\InvariantSetDiagonal}{V_D}
\newcommand{\InvariantSpaceZ}{V_Z}

\newcommand{\CSSCode}[4]{\mathrm{CSS}(#1,#2;#3,#4)}
\newcommand{\ReedMullerCode}{\mathcal{L}}
\newcommand{\RegularXSCode}{\mathcal{H}_G}
\newcommand{\XSCode}{\mathcal{L}_G}

\newcommand{\IdentityMatrix}{\mathds{1}}

\section{Introduction}

Harnessing the properties of many-body entangled states is one of the central aims of quantum information theory.
An important obstacle in understanding many-particle systems is the exponential size of the Hilbert space i.e. exponentially many parameters in $N$ are needed to write down a general quantum state of $N$ particles.
One valid strategy to deal with this problem is to study subclasses of states that may be described with considerably less parameters, while maintaining a sufficiently rich structure to allow for nontrivial phenomena.
The Pauli stabilizer formalism (\textsc{PSF}) is one such class and it is a widely used tool throughout the development of quantum information~\cite{Gottesman1997Stabilizer}.
In the \textsc{PSF}, a quantum state is described in terms of a group of operators that leave the state invariant.
Such groups consist of Pauli operators and are called Pauli stabilizer groups.
An $n$-qubit Pauli operator is a tensor product $g=g^{(1)}\otimes\dots\otimes g^{(n)}$ where each $g^{(i)}$ belongs to the single-qubit Pauli group, i.e. the group generated by the Pauli matrices $X$ and $Z$ and the diagonal matrix~$\mathrm{i}\Identity$.
Since every stabilizer group is fully determined by a small set of generators, the \textsc{PSF} offers an efficient means to describe a subclass of quantum states and gain insight into their properties.
States of interest include the cluster states \cite{Raussendorf2003measurement}, \textsc{GHZ} states \cite{Greenberger1990bell} and the toric code \cite{Kitaev2003fault}; these are entangled states which appear in the contexts of e.g. measurement based quantum computation \cite{Raussendorf2003measurement} and topological phases.
 
Considering the importance of the \textsc{PSF}, it is natural to ask whether we can extend this framework and describe a larger class of states, while keeping as much as possible both a transparent mathematical description and computational efficiency.
In this paper, we provide a generalization of the \textsc{PSF}.
In our setting, we allow for stabilizer operators which are tensor product operators $g^{(1)}\otimes\dots\otimes g^{(n)}$  where each $g^{(i)}$ belongs to the group generated by the matrices~$X$, $S\coloneqq\sqrt{Z}$ and $\sqrt{\mathrm{i}}\Identity$.
Similar to the \textsc{PSF}, we consider states that are invariant under the action of such generalized stabilizer operators.
The resulting stabilizer formalism is called here the \emph{XS-stabilizer formalism}.
It is a subclass of the monomial stabilizer formalism introduced recently in \cite{Vandennest2011monomial}.
Interestingly, the XS-stabilizer formalism allows for non-Abelian stabilizer groups, whereas it is well known that stabilizer groups in the \textsc{PSF} must be Abelian.
 
Even though the definition of the XS-stabilizer formalism is close to that of the original \textsc{PSF},  these frameworks differ in several ways.
In particular, the XS-stabilizer formalism is considerably richer than the \textsc{PSF}, and we will encounter several manifestations of this.
At the same time, the XS-stabilizer formalism keeps many favorable features of the \textsc{PSF}.
For example, XS-stabilizer groups have a simple structure and are easy to manipulate, and there exists a close relation between the stabilizer generators of an XS-stabilizer state/code and the associated Hamiltonian.
Moreover, we will show that (under a mild restriction of the XS-stabilizers) many quantities of interest can be computed efficiently, such as expectation values of local observables, code degeneracy and logical operators.
However, in most cases we found that efficient algorithms could not be obtained by straightforwardly extending methods from the \textsc{PSF}, and new techniques needed to be developed.
 
The purpose of this paper is to introduce the XS-stabilizer formalism, to provide examples of XS-stabilizer states and codes that are not covered by the \textsc{PSF}  and to initiate a systematic development of the XS-stabilizer framework.
In particular, we discuss several properties related to the structure of XS-stabilizer states and codes, their entanglement, their efficient generation by means of quantum circuits and their efficient simulation with classical algorithms.
A detailed statement of our results is given in section \ref{sec:statement of main results}.
Here we briefly highlight two aspects.
 
First, we consider the potential of the XS-stabilizer formalism to describe topological phases.
This is motivated by recent works on classifying quantum phases within the \textsc{PSF}~\cite{Yoshida2011classification,Bombin2012universal}, which is related to the problem of finding a self-correcting quantum memory.
In particular, Haah constructed a novel Pauli stabilizer code for a 3D lattice in~\cite{Haah2011local} and gave evidence that it might be a self-correcting quantum memory even at non-zero temperature.
In the present paper we show that the XS-stabilizer formalism can describe 2D topological phases beyond the \textsc{PSF} and, surprisingly, some of these harbour non-Abelian anyons.
Specific examples of models covered by the XS-stabilizer formalism are the doubled semion model~\cite{LevinWen} and, more generally, the twisted quantum double models for the groups~$\mathbb{Z}_2^k$~\cite{Kitaev2006,Hu2013twisted,Buerschaper}.
 
Second, we study entanglement in the XS-stabilizer formalism.
Various entanglement properties of Pauli stabilizer states have been studied extensively in the past decade~\cite{Hein2004multiparty, Fattal2004entanglement}.
While the bipartite entanglement structure is very well understood, less is known about the multipartite scenario.
For example, recently in Ref.~\cite{Linden2013quantum} the entropy inequalities for Pauli stabilizer states were studied.
Here we will show that, for any bipartition, we can always map any XS-stabilizer state into a Pauli stabilizer state locally, which means their bipartite entanglement is identical.
This implies in particular that all reduced density operators of an XS-stabilizer state are projectors and each single qubit is either fully entangled with the rest of the system or fully disentangled from it.
In contrast, the XS-stabilizer formalism is genuinely richer than the \textsc{PSF} when viewed through the lens of multipartite entanglement.
For example, we will show that there exist XS-stabilizer states that cannot be mapped onto any Pauli stabilizer state under local unitary operations.
Thus there seems to be a complex and intriguing relation between the entanglement properties of Pauli and XS-stabilizer states.
 
We also mention other works that, similar in spirit to the present paper, aim at extending the \textsc{PSF}.
These include: Ref.~\cite{Hartmann2007weighted} which introduced the family of weighted graph states as generalizations of graph and stabilizer states; Ref.~\cite{Kruszynska2009local} where the family of locally maximally entanglable (\textsc{LME}) states were considered (which in turn generalize weighted graph states);  Ref.~\cite{Rossi2013quantum} where hypergraph states were considered.
The XS-stabilizer formalism differs from the aforementioned state families in that its starting point is the representation of states by their stabilizer operators.
We have not yet investigated the potential interrelations between these classes, but it would be interesting to understand this in more detail.

{\bf Outline of the paper.}
Readers who are mainly interested in an overview of our results, rather than in the technical details, may want to focus on sections \ref{sec:the xs-stabilizer formalism} and \ref{sec:statement of main results}.
In section \ref{sec:the xs-stabilizer formalism} we introduce the basic notions of XS-stabilizer states and codes.
In section \ref{sec:statement of main results} we give a summary of the results presented in this paper.
The following sections are dedicated to developing the technical arguments.

\section{The XS-Stabilizer Formalism}
\label{sec:the xs-stabilizer formalism}

In this section we introduce the basic notions of XS-stabilizer states and codes and we provide several examples.

\subsection{Definition}

First we briefly recall the standard Pauli stabilizer formalism.
Let~$X$, $Y$ and~$Z$ be the standard Pauli matrices.
The single-qubit Pauli group is $\gen{\mathrm{i}\Identity,X,Z}$.
For a system consisting of $n$~qubits we use~$X_j$, $Y_j$ and~$Z_j$ to represent the Pauli matrices on the $j$-th qubit.
An operator~$g$ on $n$~qubits is a Pauli operator if it has the form $g=g^{(1)}\otimes\dots\otimes g^{(n)}$ where each~$g^{(i)}$ belongs to the single-qubit Pauli group.
Every $n$-qubit Pauli operator can be written as
\begin{equation}
    g
    =\mathrm{i}^s
     X^{a_1}
     Z^{b_1}\otimes
     \dots\otimes
     X^{a_n}
     Z^{b_n}
\end{equation}
where $s\in\set{0,\dots,3}$, $a_j\in\set{0,1}$ and $b_j\in\set{0,1}$.
We say an $n$-qubit quantum state~$\ket{\psi}\neq 0$ is stabilized by a set of Pauli operators~$\set{g_j}$ if
\begin{equation}
    g_j\mkern2mu
    \ket{\psi}
    =\ket{\psi}
    \qquad
    \text{for all~$j$}.
\end{equation}
The operators~$g_j$ are called \emph{stabilizer operators} of~$\ket{\psi}$.

In this paper, we generalize the Pauli stabilizer formalism by allowing more general stabilizer operators.
Instead of the single-qubit Pauli group, we start from the larger group $\PauliSGroup\coloneqq\gen{\alpha\Identity,X,S}$ where $\alpha=\mathrm{e}^{\mathrm{i}\pi/4}$ and $S=\diag(1,\mathrm{i})$.
Note that the latter group, which we call the \emph{Pauli-S group}, contains the single-qubit Pauli group since $S^2=Z$.
We then consider stabilizer operators $g=g^{(1)}\otimes\dots\otimes g^{(n)}$ where each~$g^{(i)}$ is an element of~$\PauliSGroup$.
It is easy to show
that every such operator can be written as
\begin{equation}
    \label{eq:expansion}
    g
    =\alpha^s
     X^{a_1}
     S^{b_1}\otimes
     \dots\otimes
     X^{a_n}
     S^{b_n}
    \eqqcolon
     \alpha^s
     X(\vec{a}\mkern1mu)\mkern2mu
     S(\vec{b}\mkern1mu)
\end{equation}
where $s\in\set{0,\dots,7}$, $a_j\in\set{0,1}$ and $b_j\in\set{0,\dots,3}$.
Here we also defined $X(\vec{a}\mkern1mu)\coloneqq X^{a_1}\otimes\dots\otimes X^{a_n}$ for $\vec{a}=(a_1,\dots,a_n)$ and similarly $S(\vec{b}\mkern1mu)$ and~$Z(\vec{c}\mkern3mu)$.
These are called \emph{X-type}, \emph{S-type} and \emph{Z-type operators} respectively.

For a set~$\set{g_1,\dots,g_m}$ of such operators we consider the group $G=\gen{g_1,\dots,g_m}$, and we say a state~$\ket{\psi}\neq 0$ is stabilized by~$G$ if we have $g\mkern2mu\ket{\psi}=\ket{\psi}$ for every~$g\in G$.
Whenever such a state exists we call~$G$ an \emph{XS-stabilizer group}.
The space~$\XSCode$ of all states stabilized by~$G$ is referred to as the \emph{XS-stabilizer code} associated with~$G$.
A state which is uniquely stabilized by~$G$ is called an \emph{XS-stabilizer state}.

Thus the XS-stabilizer formalism is a generalization of the Pauli stabilizer formalism.
Perhaps the most striking difference is that XS-stabilizer states/codes may have a \emph{non}-Abelian XS-stabilizer group~$G$ -- while Pauli stabilizer groups must always be Abelian.
We will see examples of this in the next section.

\subsection{Examples}

Here we give several examples of XS-stabilizer states and codes and highlight how their properties differ from the standard Pauli stabilizer formalism.

A first simple example of an XS-stabilizer state is the 6-qubit state $|\psi\rangle$ stabilized by the (non-commuting) operators
\begin{equation}
    \begin{split}
        g_1
        & =X\otimes
           S^3\otimes
           S^3\otimes
           S\otimes
           X\otimes
           X, \\
        g_2
        & =S^3\otimes
           X\otimes
           S^3\otimes
           X\otimes
           S\otimes
           X, \\
        g_3
        & =S^3\otimes
           S^3\otimes
           X\otimes
           X\otimes
           X\otimes
           S.
    \end{split}
\end{equation}
Explicitly, $|\psi\rangle$ is given by
\begin{equation}
    \label{eq:6qubit_example}
    \ket{\psi}
    =\smashoperator[l]{\sum_{x_j
                             =0}^1}
     (-1)^{x_1
           x_2
           x_3}\mkern2mu
     \ket{x_1,
          x_2,
          x_3,
          x_1\oplus
          x_2,
          x_2\oplus
          x_3,
          x_3\oplus
          x_1}.
\end{equation}
It is straightforward to show that $|\psi\rangle$ is the unique (up to a global phase) state stabilized by $g_1$, $g_2$ and $g_3$.
Note that in this example 3 stabilizer operators suffice to uniquely determine the 6-qubit state $|\psi\rangle$.
This is different from the Pauli stabilizer formalism, where 6 stabilizers would be necessary (being equal to the number of qubits).
Notice also that $|\psi\rangle$ contains amplitudes of the form $(-1)^{c(x)}$ where $c(x)$ is a cubic polynomial of the bit string $x=(x_1, x_2, x_3)$.
This shows that $|\psi\rangle$ cannot be a Pauli stabilizer state, since the latter cannot have such cubic amplitudes~\cite{Dehaene2003the}.
This example thus shows that the XS-stabilizer formalism covers a strictly larger set of states than the Pauli stabilizer formalism.
What is more, we will show (cf. section \ref{sec:lu non equivalence}) that the state $|\psi\rangle$ is not equivalent to any Pauli stabilizer state even if arbitrary local basis changes are allowed.
Thus, $|\psi\rangle$ belongs to a different local unitary equivalence class than any Pauli stabilizer state.

\begin{figure}
    \centering
    \includegraphics{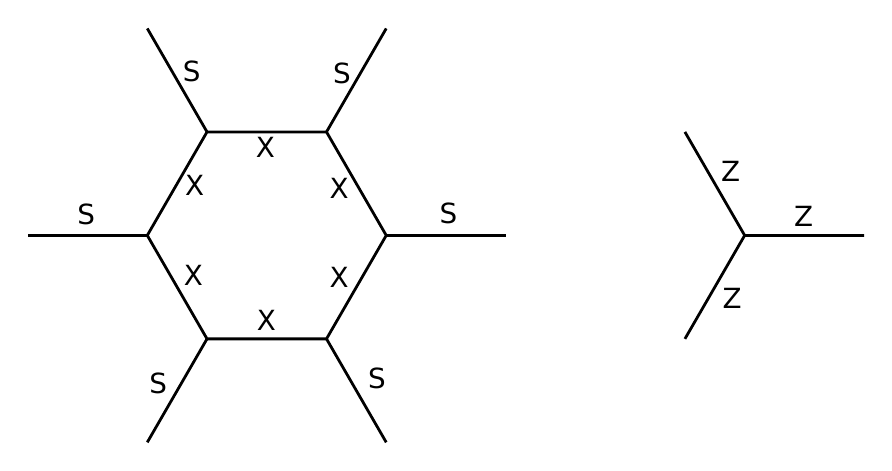}
    \caption{In the doubled semion model, the qubits are on the edges of a honeycomb lattice.
        The ground space of the Hamiltonian can be equivalently described by the two types of XS-stabilizers in the above figure.
        The left one is associated to each face of the lattice and the right one is associated to each vertex.}
    \label{fig:doubled semion}
\end{figure}

A second example is the \emph{doubled semion model} which belongs to the family of string-net models~\cite{LevinWen}.
It is defined on a honeycomb lattice with one qubit per edge and has two types of stabilizer operators%
\footnote{The local single-qubit basis used in~\cite{LevinWen} is different from ours.}
which are shown in Figure~\ref{fig:doubled semion}.
Let~$g_s$ and~$g_p$ be the stabilizer operators corresponding to the vertex~$s$ and the face~$p$ respectively.
Then the ground space of the doubled semion model consists of all states~$\ket{\psi}$ satisfying $g_s\mkern2mu\ket{\psi}=g_p\mkern2mu\ket{\psi}=\ket{\psi}$ for all~$s$ and~$p$.
The doubled semion model is closely related to the toric code which is a Pauli stabilizer code.
The Pauli stabilizer operators of the toric code are obtained from the XS-stabilizer operators of the doubled semion model by replacing all occurrences of~$S$ with~$\Identity$.
This is no coincidence since both the doubled semion model and the toric code are \emph{twisted quantum double models} for the group~$\mathbb{Z}_2$~\cite{Kitaev2006,Hu2013twisted,Buerschaper}.
In spite of this similarity it is known that both models represent different topological phases~\cite{LevinWen}.
Thus, the XS-stabilizer formalism allows one to describe states with genuinely different topological properties compared to any state arising in the Pauli stabilizer formalism~\cite{Yoshida2011classification,Bombin2012universal}.
In fact, we can use XS-stabilizers to describe other, more complex, twisted quantum double models as well, as we will show in section~\ref{sec:twisted_quantum_double}.
Some of these even support non-Abelian anyons.

The third example is related to magic state distillation.
In \cite{Bravyi2005universal} the authors consider a 15 qubit code $\CSSCode{Z}{\ReedMullerCode_2}{XS}{\ReedMullerCode_1}$, where $\ReedMullerCode_1$ and~$\ReedMullerCode_2$ are punctured Reed-Muller codes of order one and two, respectively.
Roughly speaking, this quantum code is built from two types of generators.
One type has the form $Z\otimes\dots\otimes Z$ acting on some of the qubits, while the other type has the form $XS\otimes\dots\otimes XS$.
Surprisingly, this 15 qubit XS-stabilizer code has the same code subspace as the Pauli stabilizer code $\CSSCode{Z}{\ReedMullerCode_2}{X}{\ReedMullerCode_1}$ which is obtained by replacing every $S$ operator with an identity matrix.
From this example we can see that having $S$ in the stabilizer operators does not necessarily mean an XS-stabilizer group and a Pauli stabilizer group stabilize different spaces.

\begin{table}
    \begin{center}
        \begin{tabular}{@{}lccc@{}}
            \toprule
                                             &        Pauli        &      Regular XS     &           General XS          \\
            \midrule
            Commuting stabilizer operators   &         yes         &          no         &               no              \\
            Commuting parent Hamiltonian     &         yes         &         yes         &              yes              \\
            Complexity of stabilizer problem & \ComplexityClass{P} & \ComplexityClass{P} & \ComplexityClass{NP}-complete \\
            Non-Abelian anyons in 2D         &          no         &         yes         &              yes              \\
            \bottomrule
        \end{tabular}
    \end{center}
    \caption{Summary of the properties}
\end{table}

\section{Main Results}
\label{sec:statement of main results}

\subsection{Commuting Parent Hamiltonian}

Even though an XS-stabilizer group $G=\gen{g_1,\dots,g_m}$ is non-Abelian in general, we will show that there always exists a Hamiltonian $H'=\sum_j h_j$ with mutually commuting projectors~$h_j$ whose ground state space coincides with the space stabilized by~$G$ (section \ref{sec:hamiltonians}).
If the generators of~$G$ satisfy some locality condition (e.g. they are $k$-local on some lattice), then the~$h_j$ will satisfy the same locality condition (up to a constant factor).
This means that general properties of ground states of commuting Hamiltonians apply to XS-stabilizer states.
For example, every state uniquely
stabilized by a set of local XS-stabilizers defined on a $D$-dimensional lattice satisfies the area law \cite{Wolf2008area}, and for local XS-stabilizers on a 2D lattice, we can find string like logical operators \cite{Bravyi2010tradeoffs}.

While the ground state spaces of~$H'$ and the non-commuting Hamiltonian $H=\sum_{j=1}^m(g_j+g_j^\dagger)$ are identical, the latter may have a completely different spectrum.
This may turn out important for the purpose of quantum error correction.

\subsection{Computational Complexity of Finding Stabilized States}

In the Pauli stabilizer formalism, it is always computationally easy to determine whether, for a given set of stabilizer operators, there exists a common stabilized state.
However, we will prove that the same question is \ComplexityClass{NP}-\emph{complete} for XS-stabilizers (see section \ref{sec:complexity}).
More precisely, we consider the problem \Problem{XS-Stabilizer} defined as follows: given a set of XS-stabilizer operators $\{g_1,\dots,g_m\}$, the task is to decide whether there exists a state $\ket{\psi}\neq 0$ stabilized by every~$g_j$.
The \ComplexityClass{NP}-hardness part of the \Problem{XS-Stabilizer} problem is proved via a reduction from the \Problem{Positive 1-in-3-Sat} problem.
In order to show that the problem is in~$\ComplexityClass{NP}$, we use tools developed for analyzing monomial stabilizers, as introduced in~\cite{Vandennest2011monomial}.

The \ComplexityClass{NP}-hardness of the \Problem{XS-Stabilizer} problem partially stems from the fact that the group $G= \langle g_1,\dots,g_m\rangle$ may contain \emph{diagonal} operators which have one or more $S$~operators in their tensor product representation~\eqref{eq:expansion}.
In order to render the \Problem{XS-Stabilizer} problem tractable, we impose a (mild) restriction on the group~$G$ and demand that every diagonal operator in~$G$ can be written as a tensor product of~$\Identity$ and~$Z$, i.e. no diagonal operator in~$G$ may contain an $S$~operator.
We call such a group $G$ \emph{regular}.
We will show that, for every regular $G$, the existence of a state stabilized by $G$ can then be checked efficiently (section \ref{sec:regular}).

Finally, we will show that in fact every XS-stabilizer \emph{state} affords a regular stabilizer group (although finding it may be computationally hard), i.e. the condition of regularity does not restrict the set of states that can be described by the XS-stabilizer formalism (Section~\ref{sec:non regular xs stabilizer}).
In contrast, the stabilizer group of an XS-stabilizer \emph{code} cannot always be chosen to be regular.

\subsection{Entanglement}

Given an XS-stabilizer state $\ket{\psi}$ with associated XS-stabilizer group $G$, we show how to compute the entanglement entropy for any bipartition $(A, B)$ (section \ref{sec:entanglement}).
This is achieved by showing that $\ket{\psi}$ can always be transformed into a Pauli stabilizer state $|\phi_{A ,B}\rangle$ (which depends on the bipartition in question) by applying a unitary $U_A\otimes U_B$, where $U_A$ and $U_B$ each only act on the qubits in each party.
Since an algorithm to compute the entanglement entropy of Pauli stabilizer states is known, this yields an algorithm to compute this quantity for the original XS-stabilizer state~$\ket{\psi}$ since the unitary $U_A\otimes U_B$ does not change the entanglement.
Our overall algorithm is efficient (i.e. runs in polynomial time in the number of qubits) for all regular XS-stabilizer groups (cf. also section \ref{sec:algorithms}).
It is worth noting that our method of computing the entanglement entropy uses a very different technique compared to the one typically used for studying the entanglement entropy of Pauli stabilizer states (for example, the methods in~\cite{Linden2013quantum}).

The fact that $\ket{\psi}=U_A\otimes U_B\mkern2mu\ket{\phi_{A,B}}$ for any bipartition~$(A,B)$ implies in particular that any reduced density matrix of~$\ket{\psi}$ is a projector since this is the case for all Pauli stabilizer states~\cite{Hein2006entanglement}.
Consequently, all $\alpha$-Rényi entanglement entropies of an XS-stabilizer state coincide with the logarithm of the Schmidt rank.

We also formulate the following open problem: for every XS-stabilizer state~$\ket{\psi}$, does there exist a single Pauli stabilizer state~$\ket{\phi}$ with the same Schmidt rank as~$\ket{\psi}$ for \emph{every} bipartition? For example, it would be interesting to know whether the inequalities in \cite{Linden2013quantum} hold for XS-stabilizer states.

As far as multipartite entanglement is concerned, we finally show that the 6-qubit XS-stabilizer state~\eqref{eq:6qubit_example} is \emph{not} equivalent to any Pauli stabilizer state even if arbitrary local basis changes are allowed.

\subsection{Efficient Algorithms}

In section~\ref{sec:algorithms} we show that several basic tasks can be solved efficiently for an XS-stabilizer state~$\ket{\psi}$, provided its regular XS-stabilizer group is known:
\begin{enumerate}
    \item Compute the entanglement entropy for any bipartition.
    \item Compute the expectation value of any local observable.
    \item Prepare $\ket{\psi}$ on a quantum computer with a poly-size quantum circuit.
    \item Compute the function $f(x)$ in the standard basis expansion
        \begin{equation}
            \ket{\psi}
            =\sum_x
             f(x)\mkern2mu
             \ket{x}.
        \end{equation}
\end{enumerate}
Moreover, we can efficiently construct a basis $\set{\ket{\psi_1},\dots,\ket{\psi_d}}$ for any XS-stabilizer code with a \emph{regular} XS-stabilizer group.
In particular, we can efficiently compute the degeneracy~$d$ of the code.
For each~$\ket{\psi_j}$ we can again solve all the above tasks efficiently.
Finally, we can also efficiently compute logical operators.

The algorithms given in section \ref{sec:algorithms} depend heavily on the technical results for XS-stabilizer states and codes given in section \ref{sec:constructing_basis}, where we characterize several structural properties of these states and codes.

\section{Basic Group Theory}
\label{sec:diagonal subgroup}

In this section we introduce some further basic notions, discuss basic manipulations of XS-stabilizer operators and describe some important subsets and subgroups of XS-stabilizer groups.

\subsection{Pauli-S Group}

Let us write
\begin{equation}
    [g,
     h]
    \coloneqq
     g
     h
     g^{-1}
     h^{-1}
\end{equation}
for the commutator of any two group elements~$g$ and~$h$.
In the following we always assume the elements of a set $\set{g_1,\dots,g_m}\subset\PauliSGroupMany$ to be given in the standard form
\begin{equation}
    \label{eq:standard_form}
    g_j
    =\alpha^{s_j}
     X(\vec{a}_j)\mkern2mu
     S(\vec{b}_j).
\end{equation}

\begin{lemma}[Commutators]
    \label{lem:commutators}
    \begin{equation}
        [g_1,
         g_2]
        =\bigotimes_{k
                     =1}^n\mkern3mu
         (-1)^{a_{1
                  k}
               a_{2
                  k}
               (b_{1
                   k}+
                b_{2
                   k})}\mkern2mu
         (\mathrm{i}
          Z_k)^{a_{1
                   k}
                b_{2
                   k}-
                a_{2
                   k}
                b_{1
                   k}}.
    \end{equation}
\end{lemma}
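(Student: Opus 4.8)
The plan is to reduce the identity to a one-qubit statement and then verify that statement by bringing products of single-qubit operators into standard form. The scalars $\alpha^{s_1},\alpha^{s_2}$ are central, so they cancel in $g_1g_2g_1^{-1}g_2^{-1}$ and may be taken to be $1$; writing $g_j=\bigotimes_{k=1}^n g_j^{(k)}$ with $g_j^{(k)}=X^{a_{jk}}S^{b_{jk}}$ and using $(A\otimes B)(A'\otimes B')=AA'\otimes BB'$, one sees at once that $[g_1,g_2]=\bigotimes_{k=1}^n[g_1^{(k)},g_2^{(k)}]$. Hence it remains to prove, for $a,a'\in\set{0,1}$ and $b,b'\in\set{0,\dots,3}$, the single-qubit identity $[X^aS^b,X^{a'}S^{b'}]=(-1)^{aa'(b+b')}(\mathrm{i}Z)^{ab'-a'b}$, and then tensor the result over~$k$.

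Next I would record the one-qubit relations I need. From $S=\diag(1,\mathrm{i})$ and the fact that $X$ swaps the two computational basis states, a direct calculation gives $XSX^{-1}=\mathrm{i}S^{-1}$, and hence $S^bX^a=\mathrm{i}^{ab}X^aS^{(-1)^ab}$ for $a\in\set{0,1}$ and every integer~$b$. Pushing every $S$ past every $X$ then turns a product of two single-qubit operators into standard form, $(X^aS^b)(X^{a'}S^{b'})=\mathrm{i}^{a'b}X^{a\oplus a'}S^{b'+(-1)^{a'}b}$ (using $X^2=\Identity$). I would also use $S^2=Z$, $Z^2=\Identity$ and $XZX^{-1}=-Z$; the last of these gives $X^cZ^kX^c=(-1)^{ck}Z^k$ for $c\in\set{0,1}$.

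Then comes the computation. Applying the product rule to both orders shows that $g_1g_2$ and $g_2g_1$ carry the \emph{same} $X$-part $X^{a\oplus a'}$, differing only in a scalar and in the $S$-part. Since $[g_1,g_2]=(g_1g_2)(g_2g_1)^{-1}$, the two $X$-parts cancel, and combining the $S$-exponents via $1-(-1)^a=2a$ leaves $\mathrm{i}^{a'b-ab'}X^{a\oplus a'}S^{2(ab'-a'b)}X^{a\oplus a'}$. Replacing $S^{2(ab'-a'b)}$ by $Z^{ab'-a'b}$, conjugating this $Z$-power through the two factors $X^{a\oplus a'}$ via $X^cZ^kX^c=(-1)^{ck}Z^k$, and writing $(\mathrm{i}Z)^{ab'-a'b}=\mathrm{i}^{ab'-a'b}Z^{ab'-a'b}$, the whole expression reduces to a power of~$\mathrm{i}$ times $(\mathrm{i}Z)^{ab'-a'b}$. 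All that is left is to identify that leftover power of~$\mathrm{i}$ with $(-1)^{aa'(b+b')}$.

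This final phase bookkeeping is the only delicate point. The residual scalar works out to $\mathrm{i}^{2(a'b-ab')}(-1)^{(a\oplus a')(ab'-a'b)}$; using $\mathrm{i}^{2m}=(-1)^m$ it becomes $(-1)^{a'b+ab'+(a\oplus a')(ab'-a'b)}$, and expanding modulo~$2$ (using $a^2\equiv a$, and $-1\equiv1$ so that signs inside $Z$-exponents are irrelevant) collapses the exponent to $aa'(b+b')$, as required. The care needed here is in keeping fourth roots of unity straight and remembering that $\mathrm{i}Z$ has order~$4$ while $Z$ has order~$2$, so that exponents must be matched modulo~$4$ for the scalar part and modulo~$2$ for~$Z$. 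Once the one-qubit identity is in hand, tensoring it over all~$k$ yields the lemma.
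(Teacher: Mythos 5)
Your proof is correct and follows essentially the same route as the paper: reduce to a single qubit (the central phases cancel and the commutator factorizes over tensor components), push $S$-powers past $X$-powers with an exchange relation, and verify the residual phase. The paper uses the equivalent relation $S^bX^a=(-\mathrm{i}Z)^{ab}X^aS^b$ to read off the commutation phase directly, whereas you use $XSX^{-1}=\mathrm{i}S^{-1}$ and compare standard forms, but this is only a difference in bookkeeping, and your final reduction of the exponent to $aa'(b+b')$ modulo $2$ checks out.
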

\begin{proof}
    It suffices to prove this for~$\PauliSGroup$ and $\alpha^s=1$.
    So let $g_j=X^{a_j}S^{b_j}$.
    Then
    \begin{equation}
        g_2
        g_1
        =(-1)^{a_1
               a_2
               (b_1+
                b_2)}\mkern2mu
         (\mathrm{i}
          Z)^{a_2
              b_1-
              a_1
              b_2}
         g_1
         g_2
    \end{equation}
    where we used $S^bX^a=(-\mathrm{i}Z)^{ab}X^aS^b$.
    The claim for the tensor product group~$\PauliSGroupMany$ follows from applying the above to each component.
\end{proof}

\begin{lemma}[Squares]
    \label{lem:squares}
    Let $g=\alpha^s X^a S^b\in \PauliSGroup$.
    Then
    \begin{equation}
        g^2
        =\mathrm{i}^{s+
                     a
                     b}\mkern2mu
         Z^{(a+
             1)
            b}.
    \end{equation}
\end{lemma}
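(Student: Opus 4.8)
The plan is to prove the identity by a direct computation that reuses the single-qubit relation already invoked in the proof of Lemma~\ref{lem:commutators}. Writing $g=\alpha^s X^a S^b$, I would first peel off the scalar: since $\alpha^2=\mathrm{e}^{\mathrm{i}\pi/2}=\mathrm{i}$ we get $g^2=\alpha^{2s}\,X^a S^b X^a S^b=\mathrm{i}^s\,X^a S^b X^a S^b$, so everything reduces to simplifying the operator part $X^a S^b X^a S^b$. The next step is to push the inner $X^a$ to the left past $S^b$ using the identity $S^b X^a=(-\mathrm{i}Z)^{ab}X^aS^b$ that already appears in the proof of Lemma~\ref{lem:commutators}; this yields $X^a S^b X^a S^b=(-\mathrm{i})^{ab}\,X^a Z^{ab} X^a\,S^{2b}$.

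It then remains to simplify $X^a Z^{ab} X^a$ and collect the phases. Using $(X^a)^2=\Identity$ one has $X^a Z^{ab} X^a=(X^a Z X^a)^{ab}=\bigl((-1)^a Z\bigr)^{ab}=(-1)^{a^2 b}Z^{ab}$, and since $a\in\set{0,1}$ we have $a^2=a$, so this equals $(-1)^{ab}Z^{ab}$. The scalar prefactors then combine as $(-\mathrm{i})^{ab}(-1)^{ab}=\bigl((-\mathrm{i})(-1)\bigr)^{ab}=\mathrm{i}^{ab}$. Finally I would use $S^2=Z$ to rewrite $S^{2b}=Z^b$ and merge exponents via $Z^{ab}Z^b=Z^{(a+1)b}$, obtaining $g^2=\mathrm{i}^s\,\mathrm{i}^{ab}\,Z^{(a+1)b}=\mathrm{i}^{s+ab}Z^{(a+1)b}$, as claimed.

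There is no real obstacle here; the one thing to be careful about is the phase bookkeeping. In particular, one must use that $a$ is a bit in order to reduce $(-1)^{a^2 b}$ to $(-1)^{ab}$, and one must make sure the factors $(-\mathrm{i})^{ab}$ and $(-1)^{ab}$ combine to $\mathrm{i}^{ab}$ rather than to, say, $(-\mathrm{i})^{ab}$. It is also worth keeping in mind that $Z$ has order $2$ and $S$ has order $4$, so the exponents $(a+1)b$ and $s+ab$ in the answer are implicitly taken modulo $2$ and $4$ respectively; to guard against a dropped sign one can check the formula on a small case such as $g=XS$, where $(XS)^2=\mathrm{i}\Identity$ matches $\mathrm{i}^{0+1\cdot1}Z^{(1+1)\cdot1}=\mathrm{i}\Identity$.
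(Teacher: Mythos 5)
Your computation is correct, and the paper itself gives no proof of this lemma (it is left as a routine verification); your argument uses exactly the single-qubit relation $S^bX^a=(-\mathrm{i}Z)^{ab}X^aS^b$ that the paper employs in its proof of Lemma~\ref{lem:commutators}, so it is essentially the intended approach. The phase bookkeeping, the reduction $a^2=a$ for a bit $a$, and the sanity check $(XS)^2=\mathrm{i}\Identity$ are all in order.
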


\begin{lemma}[Multiplication]
    \label{lem:multiplication}
    There exists~$\vec{b}'$ such that
    \begin{equation}
        g_1
        g_2
        \propto
         X(\vec{a}_1\oplus
           \vec{a}_2)\mkern2mu
         S(\vec{b}').
    \end{equation}
\end{lemma}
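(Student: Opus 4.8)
The plan is to reduce the statement to the single-qubit case and then invoke the preceding two lemmas to keep track of the diagonal factors. Writing $g_1 = \alpha^{s_1} X(\vec{a}_1) S(\vec{b}_1)$ and $g_2 = \alpha^{s_2} X(\vec{a}_2) S(\vec{b}_2)$ in standard form, the product is $g_1 g_2 = \alpha^{s_1 + s_2}\, X(\vec{a}_1)\, S(\vec{b}_1)\, X(\vec{a}_2)\, S(\vec{b}_2)$. Since all operators involved are tensor products, it suffices to prove the claim on each qubit separately, i.e. to show that for any $a_1, a_2 \in \{0,1\}$ and $b_1, b_2 \in \{0,\dots,3\}$ there is some $b' \in \{0,\dots,3\}$ with $X^{a_1} S^{b_1} X^{a_2} S^{b_2} \propto X^{a_1 \oplus a_2} S^{b'}$.

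The key step is to push the middle $X^{a_2}$ to the left past $S^{b_1}$. Using the relation $S^b X^a = (-\mathrm{i} Z)^{ab} X^a S^b$ already recorded in the proof of Lemma~\ref{lem:commutators}, we get $S^{b_1} X^{a_2} = (-\mathrm{i} Z)^{a_2 b_1} X^{a_2} S^{b_1}$, so on a single qubit $X^{a_1} S^{b_1} X^{a_2} S^{b_2} = (-\mathrm{i})^{a_2 b_1}\, X^{a_1} Z^{a_2 b_1} X^{a_2} S^{b_1 + b_2}$. Now $Z$ and $X$ on one qubit satisfy $Z^c X^a = (-1)^{ac} X^a Z^c$, and moreover $Z^c = S^{2c}$, so $X^{a_1} Z^{a_2 b_1} X^{a_2} = (-1)^{a_1 a_2 b_1} X^{a_1} X^{a_2} Z^{a_2 b_1} = (-1)^{a_1 a_2 b_1} X^{a_1 \oplus a_2} S^{2 a_2 b_1}$, where we used $X^2 = I$. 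Collecting everything, $X^{a_1} S^{b_1} X^{a_2} S^{b_2} \propto X^{a_1 \oplus a_2} S^{b_1 + b_2 + 2 a_2 b_1}$, and we may take $b'$ to be this exponent reduced modulo $4$ (legitimate since $S^4 = I$). Tensoring over all qubits gives $g_1 g_2 \propto X(\vec{a}_1 \oplus \vec{a}_2)\, S(\vec{b}\,')$ with $\vec{b}\,' = \vec{b}_1 + \vec{b}_2 + 2\,\vec{a}_2 \cdot \vec{b}_1$ componentwise mod $4$.

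There is no real obstacle here; the only thing to be careful about is bookkeeping of the scalar prefactors (the powers of $-\mathrm{i}$, $-1$, and $\alpha$), but since the statement is only up to proportionality these can be absorbed without comment. One could alternatively phrase this as: the set of single-qubit operators of the form (phase)$\cdot X^a S^b$ is closed under multiplication because $S$ normalizes $\langle X \rangle$ up to elements of $\langle S \rangle$ and phases — but the explicit commutation computation above is the cleanest route and directly exhibits $\vec{b}\,'$.
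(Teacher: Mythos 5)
Your proof is correct, and it takes the route the paper itself implicitly relies on: the lemma is stated without proof as a routine consequence of the single-qubit relation $S^bX^a=(-\mathrm{i}Z)^{ab}X^aS^b$ already used in the proof of Lemma~\ref{lem:commutators}, applied componentwise exactly as you do. One harmless slip: the phase from commuting $Z^{a_2b_1}$ past $X^{a_2}$ is $(-1)^{a_2b_1}$ rather than $(-1)^{a_1a_2b_1}$, but since the statement is only up to proportionality this does not affect the argument.
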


\subsection{Important Subgroups}

For any group~$G\subset\PauliSGroupMany$ there are two important subgroups.

\begin{definition}
    The group
    \begin{equation}
        \DiagonalSubgroup
        \coloneqq
         G\cap
         \gen{\alpha
              \Identity,
              S_1,
              \dots,
              S_n}
        =G\cap
         \setbuilder[\big]{\alpha^s
                           S(\vec{b}\mkern1mu)}
                          {s\in
                           \set{0,
                                \dots,
                                7},
                           \vec{b}\in
                           \set{0,
                                \dots
                                3}^n}
    \end{equation}
    is called the \emph{diagonal subgroup} and
    \begin{equation}
        \label{eq:Z-subgroup}
        \ZSubgroup
        \coloneqq
         G\cap
         \gen{\alpha
              \Identity,
              Z_1,
              \dots,
              Z_n}
        =G\cap
         \setbuilder[\big]{\alpha^s
                           Z(\vec{c}\mkern3mu)}
                          {s\in
                           \set{0,
                                \dots,
                                7},
                           \vec{c}\in
                           \set{0,
                                1}^n}
    \end{equation}
    is called the \emph{Z-subgroup}.
\end{definition}

In other words, the diagonal subgroup~$\DiagonalSubgroup$ contains all elements of~$G$ which are diagonal matrices in the computational basis.
These are precisely the elements which do not contain any $X$~operators in their tensor product representation~\eqref{eq:expansion}.
The Z-subgroup~$\ZSubgroup$ consists of all Z-type operators.
In particular, all commutators and squares of elements in~$G$ are contained in~$\ZSubgroup$, as can be seen from Lemmas~\ref{lem:commutators} and~\ref{lem:squares}.

If $G$ is an XS-stabilizer group, then all its elements must have an eigenvalue~$1$.
Clearly, its Z-subgroup~$\ZSubgroup$ must then be contained in~$\gen{\pm Z_1,\dots,\pm Z_n}\setminus\set{-\Identity}$, otherwise $\ZSubgroup$ (and thus~$G$) may contain elements which lack the eigenvalue~$1$, as is evident from~\eqref{eq:Z-subgroup}.
In particular, $G$ cannot contain~$-\Identity$.
This implies that $\ZSubgroup$ lies in the centre~$Z(G)$ of~$G$.
Indeed, every $\MagicZ\in\ZSubgroup$ either commutes or anticommutes with all elements of~$G$,
however, $[\MagicZ,g]=-\Identity\in G$ for some $g\in G$ would give a contradiction.
Furthermore one can easily see from the above that all elements of~$\ZSubgroup$ have an order of at most~$2$, thus we conclude that $g^4=\Identity$ for all $g\in G$ since $g^2\in\ZSubgroup$.
We have just proved

\begin{proposition}
    \label{prop:stabilizer_group}
    Every XS-stabilizer group~$G$ satisfies
    \begin{enumerate}
        \item $-\Identity\not\in G$,
        \item $\ZSubgroup\subset\gen{\pm Z_1,\dots, \pm Z_n}\setminus\set{-\Identity}=\set{(-1)^s Z(\vec{c}\mkern3mu)}\setminus\set{-\Identity}$,
        \item $\ZSubgroup\subset Z(G)$,
        \item $g^4=\Identity$ for all $g\in G$.
    \end{enumerate}
\end{proposition}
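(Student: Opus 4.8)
The plan is to establish the four claims essentially in the order stated, since each one feeds into the next. First I would observe that, as noted in the text, if $G$ is an XS-stabilizer group then every $g \in G$ must have eigenvalue $1$ (it fixes the nonzero stabilized state), which immediately rules out $g = -\Identity$; this gives claim~(1). For claim~(2), I would use the fact from Lemma~\ref{lem:squares} and Lemma~\ref{lem:commutators} that every square $g^2$ and every commutator $[g_1,g_2]$ is a Z-type operator, so $\ZSubgroup$ is nontrivial in general, but more to the point any element $\alpha^s Z(\vec c\,) \in \ZSubgroup$ must itself have eigenvalue $1$. A diagonal operator $\alpha^s Z(\vec c\,)$ has eigenvalues $\alpha^s (\pm 1)$, so having eigenvalue $1$ forces $\alpha^s \in \{+1, -1\}$, i.e. $s \in \{0,4\}$, and moreover if $\vec c \neq \vec 0$ then the eigenvalue $+1$ already appears, while if $\vec c = \vec 0$ we need $\alpha^s = 1$; either way the operator lies in $\{(-1)^s Z(\vec c\,)\} \setminus \{-\Identity\}$. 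That is exactly claim~(2).

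Next, for claim~(3), I would argue that $\ZSubgroup \subset Z(G)$. Take $\MagicZ \in \ZSubgroup$ and any $g \in G$. By Lemma~\ref{lem:commutators}, the commutator $[\MagicZ, g]$ is again a Z-type operator, and since $\MagicZ$ is a tensor product of $\Identity$'s and $Z$'s while $g$ has some tensor structure, a direct check shows $[\MagicZ, g] \in \{\pm \Identity\}$ — that is, $\MagicZ$ and $g$ either commute or anticommute (this is the familiar Pauli-type dichotomy, which survives here because $Z$-type operators behave like Pauli operators). If they anticommuted, then $[\MagicZ, g] = -\Identity \in G$, contradicting claim~(1). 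Hence every such commutator is $\Identity$, so $\MagicZ$ is central. Finally, claim~(4): by Lemma~\ref{lem:squares}, $g^2$ is a Z-type operator, hence $g^2 \in \ZSubgroup$, and by claim~(2) every element of $\ZSubgroup$ squares to $\Identity$ (it is of the form $(-1)^s Z(\vec c\,)$ and both $Z(\vec c\,)^2 = \Identity$ and $((-1)^s)^2 = 1$). Therefore $g^4 = (g^2)^2 = \Identity$.

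The main obstacle, such as it is, lies in claim~(2): one has to be careful that $\ZSubgroup$ really is forced to sit inside $\gen{\pm Z_1, \dots, \pm Z_n}$ and not merely inside the larger group $\gen{\alpha\Identity, Z_1, \dots, Z_n}$ that appears in the definition \eqref{eq:Z-subgroup}. The point is precisely the eigenvalue constraint: an element $\alpha^s Z(\vec c\,)$ with $s \notin \{0,4\}$ has no eigenvalue equal to $1$, so it cannot belong to an XS-stabilizer group at all. Once this spectral observation is in hand, claims~(1), (3) and~(4) follow with only the routine commutator and square computations already packaged in Lemmas~\ref{lem:commutators} and~\ref{lem:squares}. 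I would present the argument as a short linear chain of implications, flagging at the start that the only substantive input is ``every element of an XS-stabilizer group has eigenvalue~$1$,'' from which everything else is bookkeeping.
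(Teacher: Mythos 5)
Your proposal is correct and follows essentially the same route as the paper: the single substantive input is that every element of an XS-stabilizer group has eigenvalue~$1$, from which the spectral constraint on $\ZSubgroup$, the exclusion of $-\Identity$, the commute-or-anticommute dichotomy giving $\ZSubgroup\subset Z(G)$, and $g^4=\Identity$ via $g^2\in\ZSubgroup$ all follow exactly as in the text preceding Proposition~\ref{prop:stabilizer_group}. Your treatment of the $\vec{c}=\vec{0}$ versus $\vec{c}\neq\vec{0}$ cases in claim~(2) is in fact slightly more explicit than the paper's.
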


\subsection{Admissible Generating Sets}

Typically it is computationally hard to check the above necessary conditions for the \emph{entire} group~$G$.
Instead, we focus on a small set of generators which fully determine~$G$, like in the Pauli stabilizer formalism.
We are interested in finding necessary conditions for such a set to generate an XS-stabilizer group.

While we can build arbitrary words from the generators, of course, commutators and squares of generators will play a distinguished role in this article.

\begin{definition}
    Let $\Generators=\set{g_1,\dots,g_m}\subset\PauliSGroupMany$.
    Then
    \begin{align}
        \Commutators
        & \coloneqq
           \setbuilder{[g_j,
                        g_k]}
                      {g_j,
                       g_k\in
                       \Generators\wedge
                       j
                       \neq
                        k}, \\
        \Squares
        & \coloneqq
           \setbuilder{g_j^2}
                      {g_j\in
                       \Generators}.
    \end{align}
\end{definition}

\begin{definition}
    A set $\Generators=\set{g_1,\dots,g_m}\subset\PauliSGroupMany$ is called an \emph{admissible generating set} if
    \begin{enumerate}
        \item every~$g_j$ has an eigenvalue~$1$,
        \item every~$[g_j,g_k]$ has an eigenvalue~$1$,
        \item $[[g_j,g_k],g_l]=\Identity$,
        \item $[g_j^2,g_k]=\Identity$.
    \end{enumerate}
\end{definition}

Clearly, if $G=\gen{\Generators}$ is an XS-stabilizer group, then $\Generators$ must be an admissible generating set by Proposition~\ref{prop:stabilizer_group} (and the discussion preceding it).
The converse is not true: there exist admissible generating sets~$\Generators$ for which $\langle \Generators\rangle$ is \emph{not} an XS-stabilizer group.

Note that the properties in the above definition are independent in the sense that the first $k$~properties do not imply the next one.
It can be checked in $\poly(n,m)$ time whether a given generating set~$\Generators$ is admissible.

We then have the following lemma:

\begin{lemma}[Relative standard form]
    \label{lem:relative_standard_form}
    If $\Generators=\set{g_1,\dots,g_m}\subset\PauliSGroupMany$ is an admissible generating set, then the elements of~$G=\gen{\Generators}$ are given by
    \begin{equation}
        \label{eq:relative_standard_form}
        \MagicZ\mkern2mu
        g(\vec{x}\mkern1mu)
        \coloneqq
         \MagicZ
         g_1^{x_1}
         \cdots
         g_m^{x_m}
    \end{equation}
    where $\vec{x}\in\mathbb{Z}_2^m$ and $\MagicZ\in\gen{\Commutators\cup\Squares}\subset\ZSubgroup$.
    
    Furthermore, for two elements $h=\MagicZ\mkern2mu g(\vec{x}\mkern1mu)$ and $h'=\MagicZ' g(\vec{x}\mkern1mu')$ we have
    \begin{equation}
        h
        h'
        =\MagicZ''
         g(\vec{x}\oplus
           \vec{x}\mkern1mu').
    \end{equation}
\end{lemma}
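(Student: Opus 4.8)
The plan is to prove the two assertions of Lemma~\ref{lem:relative_standard_form} in sequence, using the structural facts about $\ZSubgroup$ and $\gen{\Commutators\cup\Squares}$ together with the admissibility hypotheses. First I would observe that by Lemmas~\ref{lem:commutators} and~\ref{lem:squares}, every element of $\Commutators\cup\Squares$ is a Z-type operator, and (using properties 2--4 of an admissible generating set in the form of Proposition~\ref{prop:stabilizer_group}'s proof) these Z-type operators commute with every $g_j$; hence $N\coloneqq\gen{\Commutators\cup\Squares}$ is a subgroup that commutes with all generators, so it is central in $G$, and $N\subseteq\ZSubgroup$. I would then argue that the set $\setbuilder{\MagicZ\mkern2mu g(\vec{x}\mkern1mu)}{\MagicZ\in N,\ \vec{x}\in\mathbb{Z}_2^m}$ is closed under multiplication and inverses, hence equals $G$ since it contains each generator.

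The key computation is the ``collection'' or normal-form argument: given an arbitrary word $g_{j_1}g_{j_2}\cdots g_{j_\ell}$ in the generators, I would show it can be rewritten as $\MagicZ\mkern2mu g_1^{x_1}\cdots g_m^{x_m}$ with $\MagicZ\in N$. This proceeds by repeatedly using two moves: (i) swapping two adjacent generators $g_kg_j = [g_k,g_j]\mkern1mu g_jg_k$ at the cost of a factor $[g_k,g_j]\in\Commutators^{\pm1}\subseteq N$, which is central and so can be pushed freely to the front; and (ii) collapsing $g_j^2 = (g_j^2)\in\Squares\subseteq N$, again central, to reduce each exponent mod~$2$. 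Since every factor extracted lies in the central subgroup $N$, we can accumulate all of them into a single prefix $\MagicZ\in N$ and sort the remaining generators into the canonical order $g_1^{x_1}\cdots g_m^{x_m}$ with $x_i\in\set{0,1}$. This establishes that every element of $G$ has the claimed form.

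For the ``furthermore'' part, I would take $h=\MagicZ\mkern2mu g(\vec{x}\mkern1mu)$ and $h'=\MagicZ'\mkern2mu g(\vec{x}\mkern1mu')$ and compute $hh' = \MagicZ\MagicZ'\mkern2mu g(\vec{x}\mkern1mu)\mkern2mu g(\vec{x}\mkern1mu')$, using centrality of $N$ to move $\MagicZ'$ past $g(\vec{x}\mkern1mu)$. Then $g(\vec{x}\mkern1mu)\mkern2mu g(\vec{x}\mkern1mu')$ is again a word in the generators, which by the normal-form procedure above equals $\MagicZ'''\mkern2mu g(\vec{x}\oplus\vec{x}\mkern1mu')$ for some $\MagicZ'''\in N$: indeed, reordering $g_1^{x_1}\cdots g_m^{x_m}g_1^{x_1'}\cdots g_m^{x_m'}$ only produces commutator factors (in $N$) and squares $g_i^2$ (in $N$) when $x_i=x_i'=1$, leaving exponents $x_i\oplus x_i'$. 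Setting $\MagicZ''=\MagicZ\MagicZ'\MagicZ'''\in N\subseteq\ZSubgroup$ gives the result.

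The main obstacle I anticipate is bookkeeping rather than conceptual: one must verify carefully that every factor picked up during the collection process genuinely lands in $N=\gen{\Commutators\cup\Squares}$ and not merely in $\ZSubgroup$, and that centrality of $N$ in $G$ is legitimately available — this is where admissibility properties 3 and 4 ($[[g_j,g_k],g_l]=\Identity$ and $[g_j^2,g_k]=\Identity$) are essential, since without them the commutators and squares need not be central and the whole normal form collapses. A secondary subtlety is that $\MagicZ$ in the representation need not be unique (there can be relations), so the lemma should be read as an existence statement for the form; I would note this explicitly to avoid over-claiming. Beyond that, the argument is a routine application of commutator collection in a group with central commutator and square subgroups.
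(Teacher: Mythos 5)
Your proposal is correct and follows essentially the same route as the paper: a commutator-collection argument in which commutators and squares of generators, central by admissibility properties 3 and 4 and Z-type by Lemmas~\ref{lem:commutators} and~\ref{lem:squares}, are extracted and pushed to the left, with the product rule following from the same reordering. Your explicit remarks on centrality of $\gen{\Commutators\cup\Squares}$ and non-uniqueness of~$\MagicZ$ are sound elaborations of what the paper leaves implicit.
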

\begin{proof}
    Let $h=g_{\beta_1}g_{\beta_2}\cdots g_{\beta_p}\in G$ an arbitrary word in the generators~$\Generators$.
    We will show how to reduce it to the form~\eqref{eq:relative_standard_form}.
    Suppose $\beta_{j-1}>\beta_j$ for some~$j$.
    Since $g_{\beta_{j-1}}g_{\beta_j}=\MagicZ g_{\beta_j}g_{\beta_{j-1}}$ for some~$\MagicZ\in\Commutators$ we can reorder the generators locally and move any commutator~$\MagicZ$ to the left.
    (Since $\Generators$ is admissible, $\MagicZ$ commutes with all generators.)
    Repeating this procedure we arrive at $h=\MagicZ g_1^{x_1}\cdots g_m^{x_m}$ for some~$\MagicZ\in\gen{\Commutators}$, where the exponents~$x_j$ may still be arbitrary integers.
    We can restrict them to~$\{0,1\}$ by extracting squares of generators and moving them to the left.
    We obtain $h=\MagicZ\MagicZ'g_1^{x_1}\cdots g_m^{x_m}$ for some $\MagicZ'\in\gen{\Squares}$ which proves the first claim.
    The second claim follows easily from a similar argument.
\end{proof}

The diagonal subgroup~$\DiagonalSubgroup$ will play an important role in this paper.
Here we give a method to compute the generators of the diagonal subgroup~$\DiagonalSubgroup$ efficiently.

\begin{lemma}
    \label{lem:generators of diagonal subgroup}
    If $\Generators=\set{g_1,\dots,g_m}\subset\PauliSGroupMany$ is an admissible generating set and $G=\gen{\Generators}$, then a generating set of~$\DiagonalSubgroup$ can be found in $\poly(n,m)$ time.
\end{lemma}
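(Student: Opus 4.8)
The plan is to use the "relative standard form" of Lemma~\ref{lem:relative_standard_form} to reduce the problem to linear algebra over $\mathbb{Z}_2$. By that lemma, every element of $G$ has the form $\MagicZ\, g(\vec{x})$ with $\vec{x}\in\mathbb{Z}_2^m$ and $\MagicZ\in\gen{\Commutators\cup\Squares}\subset\ZSubgroup\subset\gen{\pm Z_1,\dots,\pm Z_n}$. The key observation is that $\MagicZ$ is always diagonal, so $\MagicZ\,g(\vec{x})$ is diagonal if and only if $g(\vec{x})=g_1^{x_1}\cdots g_m^{x_m}$ is diagonal. By Lemma~\ref{lem:multiplication}, $g(\vec{x})$ is proportional to $X(\vec a_1 x_1\oplus\dots\oplus\vec a_m x_m)\,S(\vec b')$ for some $\vec b'$, so it is diagonal precisely when $\bigoplus_j x_j \vec a_j = \vec 0$ in $\mathbb{Z}_2^n$. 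Hence the set of "diagonal exponent vectors" is the kernel $K$ of the $\mathbb{Z}_2$-linear map $\vec x\mapsto\bigoplus_j x_j\vec a_j$, i.e. the left null space of the matrix whose rows are $\vec a_1,\dots,\vec a_m$.

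First I would compute a basis $\{\vec x^{(1)},\dots,\vec x^{(r)}\}$ of this kernel $K\subset\mathbb{Z}_2^m$ by Gaussian elimination over $\mathbb{Z}_2$, which takes $\poly(n,m)$ time. Each basis vector yields a diagonal group element $g(\vec x^{(i)})$, which can be brought to standard form $\alpha^{s_i}S(\vec b^{(i)})$ in $\poly(n,m)$ time by repeatedly applying the relation $S^bX^a=(-\mathrm iZ)^{ab}X^aS^b$ used in the proof of Lemma~\ref{lem:commutators} (sweeping all $X$'s to the right, where they cancel since $\bigoplus x^{(i)}_j\vec a_j=\vec 0$). Next I would take as the candidate generating set of $\DiagonalSubgroup$ the union $\{g(\vec x^{(1)}),\dots,g(\vec x^{(r)})\}\cup\Generators_D$, where $\Generators_D$ is a generating set of $\gen{\Commutators\cup\Squares}$; the latter is obtained directly from the explicit formulas in Lemmas~\ref{lem:commutators} and~\ref{lem:squares}, which give each commutator and square as an explicit Z-type (hence diagonal) operator, and there are only $O(m^2)$ of them.

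It remains to argue that this set actually generates all of $\DiagonalSubgroup$. Given an arbitrary diagonal $h\in G$, write $h=\MagicZ\,g(\vec x)$ by Lemma~\ref{lem:relative_standard_form}; as noted, diagonality of $h$ forces $\vec x\in K$, so $\vec x=\bigoplus_{i\in T}\vec x^{(i)}$ for some index set $T$. Using the multiplication rule from the second part of Lemma~\ref{lem:relative_standard_form}, the product $\prod_{i\in T} g(\vec x^{(i)})$ equals $\MagicZ' g(\vec x)$ for some $\MagicZ'\in\gen{\Commutators\cup\Squares}$, so $h = (\MagicZ(\MagicZ')^{-1})\prod_{i\in T}g(\vec x^{(i)})$ with $\MagicZ(\MagicZ')^{-1}\in\gen{\Commutators\cup\Squares}=\gen{\Generators_D}$. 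This exhibits $h$ as a product of the proposed generators, completing the correctness argument; the running time is dominated by the Gaussian elimination and the standard-form reductions, all $\poly(n,m)$.

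The main obstacle I anticipate is purely bookkeeping rather than conceptual: one must be careful that the phase factors $\alpha^{s}$ are tracked correctly when sweeping $X$'s past $S$'s (the relation introduces factors of $-\mathrm i$ and $Z$'s, and $Z\in\ZSubgroup$ is still diagonal, so nothing leaves the diagonal subgroup), and that the $\gen{\Commutators\cup\Squares}$ part really is captured by finitely many explicit Z-type generators. Since $\ZSubgroup\subset Z(G)$ and every element of $\ZSubgroup$ has order at most $2$ (Proposition~\ref{prop:stabilizer_group}), $\gen{\Commutators\cup\Squares}$ is an elementary abelian $2$-group generated by the $O(m^2)$ explicit operators from Lemmas~\ref{lem:commutators} and~\ref{lem:squares}, so no further reduction of that part is needed for the statement (though one could optionally prune it by another Gaussian elimination over $\mathbb{Z}_2$ on the exponent vectors of the $Z$'s).
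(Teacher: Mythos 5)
Your proposal is correct and follows essentially the same route as the paper: reduce via the relative standard form to the question of which $g(\vec{x}\mkern1mu)$ are diagonal, characterize these by the linear condition $\bigoplus_j x_j\vec{a}_j=0$ over~$\mathbb{Z}_2$ (Lemma~\ref{lem:multiplication}), compute a kernel basis by Gaussian elimination, and adjoin $\Commutators\cup\Squares$. Your closing argument that any diagonal $h=\MagicZ\mkern2mu g(\vec{x}\mkern1mu)$ is generated by these elements is just a slightly more explicit version of the paper's observation that $g(\vec{u}_i\oplus\vec{u}_j)=\MagicZ\mkern2mu g(\vec{u}_i)\mkern2mu g(\vec{u}_j)$.
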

\begin{proof}
    We see from Lemma~\ref{lem:relative_standard_form} that $\DiagonalSubgroup$ is generated by $\Commutators$, $\Squares$ and those elements~$g(\vec{x}\mkern1mu)$ which are diagonal.
    Hence we only need to find a generating set for the latter.
    Assume that the generators of~$G$ are given in the standard form~\eqref{eq:standard_form} and define the $n\times m$ matrix
    \begin{equation}
        A
        \coloneqq
         [\begin{matrix}
              \vec{a}_1 & \dots & \vec{a}_m
          \end{matrix}]
    \end{equation}
    whose columns are the bit strings~$\vec{a}_j$.
    It follows from Lemma~\ref{lem:multiplication} that $g(\vec{x}\mkern1mu)\propto X(A\vec{x}\mkern1mu)\mkern2mu S(\vec{b}')$ for some~$\vec{b}'$.
    This implies that $g(\vec{x}\mkern1mu)$ is a diagonal operator if and only if $A\vec{x}=0$ over~$\mathbb{Z}_2$.
    Denote a basis of the solution space of this linear system by $\set{\vec{u}_i}$.
    Such a basis can be computed efficiently.
    Notice that by Lemma~\ref{lem:relative_standard_form} we have $g(\vec{u}_i\oplus\vec{u}_j)=\MagicZ\mkern2mu g(\vec{u}_i)\mkern2mu g(\vec{u}_j)$ for any two basis vectors~$\vec{u}_i$ and~$\vec{u}_j$ and some $\MagicZ\in\gen{\Commutators\cup\Squares}$.
    This implies that all diagonal elements~$g(\vec{x}\mkern1mu)$ can be generated by $\Commutators$, $\Squares$ and~$\set{g(\vec{u}_i)}$, and so can~$\DiagonalSubgroup$.
    Finally we note that the length of this generating set is $\poly(m,n)$.
\end{proof}

\section{Commuting Parent Hamiltonian}
\label{sec:hamiltonians}

In this section we show that the space stabilized by $\{g_j\}$ can also be described by the ground space of a set of commuting Hamiltonians.
In fact, the Hamiltonians are monomial.

Let $G=\gen{\Generators}$ be an XS-stabilizer group with the generators~$\Generators=\set{g_1,\dots,g_m}$ and the corresponding code~$\XSCode$.
While it is straightforward to turn each generator into a Hermitian projector
onto its stabilized subspace, these projectors will \emph{not} commute with each other in general.
Perhaps surprisingly, we can still construct a commuting parent Hamiltonian for~$\XSCode$ by judiciously choosing a subset of~$G$ such that a) this subset yields a commuting Hamiltonian with the \emph{larger} ground state space~$\mathcal{L}\supset\XSCode$, and b) all generators mutually commute when restricted to~$\mathcal{L}$.
We will call~$\mathcal{L}$ the \emph{gauge-invariant subspace} in the following.

We claim that the subset~$\Commutators\cup\Squares\subset G$ precisely fits this strategy.
First, let us define $P_g\coloneqq(\Identity+g)/2$ for arbitrary $g\in G$.
It is easy to see that all~$P_\MagicZ$ with $\MagicZ\in\Commutators\cup\Squares$ are Hermitian projectors which commute with each other and all elements of~$G$.
We may define the gauge-invariant subspace as the image of the Hermitian projector $P\coloneqq\prod_\MagicZ P_\MagicZ$ which commutes with all~$P_\MagicZ$ and all elements of~$G$ by construction.
Moreover, note that
\begin{equation}
    \label{eq:sponge}
    P
    \MagicZ
    =P,
\end{equation}
in other words, the gauge-invariant subspace \enquote{absorbs} commutators and squares of generators.
Second, it is easy to check that all~$P P_{g_j}$ with $g_j\in\Generators$ are Hermitian projectors which mutually commute.
Indeed, they are projectors since $(P P_{g_j})^2=(P^2+2P^2 g_j+P^2 g_j^2)/4=P P_{g_j}$
where we used~\eqref{eq:sponge}.
Moreover, they are Hermitian since $(P g_j)^\dagger=g_j^3 P=P g_j$
where we used Proposition~\ref{prop:stabilizer_group} and~\eqref{eq:sponge}.
Finally, they commute with each other because 
\begin{equation}
    P
    g_k
    P
    g_j
    =P
     g_k
     g_j
    =P
     \MagicZ
     g_j
     g_k
    =P
     g_j
     g_k
    =P
     g_j
     P
     g_k
\end{equation}
for some $\MagicZ\in\Commutators$ which is absorbed by virtue of~\eqref{eq:sponge}.

We can now define the commuting Hamiltonian associated with~$G$ (and~$\Generators$) by
\begin{equation}
    \label{eq:commuting_Hamiltonian}
    H_{G,
       \Generators}
    \coloneqq
     \sum_\MagicZ
     (\Identity-
      P_\MagicZ)+
     \smashoperator[l]{\sum_{g_j\in
                             \Generators}}
     (\Identity-
      P
      P_{g_j}).
\end{equation}
It remains to show that the space annihilated by~$H_{G,\Generators}$ is precisely the XS-stabilizer code~$\XSCode$.
It is easy to see that a state~$\ket{\psi}$ has zero energy if it is stabilized by~$G$.
Conversely, if $\ket{\psi}$ has zero energy then $P_\MagicZ\mkern2mu\ket{\psi}=\ket{\psi}$ and $PP_{g_j}\mkern2mu\ket{\psi}=\ket{\psi}$ follow directly.
The former condition actually implies $P\mkern2mu\ket{\psi}=\ket{\psi}$, hence the latter turns into $P_{g_j}\ket{\psi}=\ket{\psi}$ from which we deduce $g_j\mkern2mu\ket{\psi}=\ket{\psi}$.

\begin{remark}[Locality]
    It is not hard to see the above construction of a commuting Hamiltonian can be modified to preserve the locality of $g_j$.
    Assume $g_j$ is local on a $d$-dimension lattice.
    Then by construction, $P_{\MagicZ}$ are also local.
    Thus the only nonlocal terms in the Hamiltonian are $PP_{g_j}$, and below we show how to make a modification such that they become local.
    We say $g_k$ is a neighbour of $g_j$ if $g_j$ and $g_k$ act on some common qubits, and we denote that by $k\in n(j)$ (we also set $j\in n(j)$ for our purpose).
    It is easy to check that if we replace the $PP_{g_j}$ terms in the Hamiltonian by
    \begin{equation}
        \Bigl(\prod_{k\in
                     n(j)}
              P_{j
                 k}\Bigr)
        P_{g_j},
    \end{equation}
    the Hamiltonian is still commuting, while it is now local on the lattice.
\end{remark}

\begin{remark}[Quantum error correcting code]
    We can use XS-stabilizer codes~$\XSCode$ for quantum error correction.
    Here it is important that error syndromes can be measured \emph{simultaneously} which seems impossible if the XS-stabilizer group~$G$ is non-Abelian.
    Yet we can exploit the commuting stabilizers constructed above and extract the error syndromes in two rounds.
    First we measure the syndromes of the mutually commuting stabilizers in the subset~$\Commutators\cup\Squares$ and correct as necessary.
    We are now guaranteed to be in the gauge-invariant subspace where the original generators~$\set{g_j}$ commute.
    We can thus measure their syndromes simultaneously in the second round.
\end{remark}

\section{Concepts From the Monomial Matrix Formalism}
\label{sec:monomial}

In this subsection we introduce some definitions and theorems from~\cite{Vandennest2011monomial}, and explain how they are connected to this work.

In~\cite{Vandennest2011monomial}, we consider a group $G=\langle U_1,\dots, U_m\rangle$, where each $U_j$ is a unitary monomial operator, i.e.
\begin{equation}
    U_j
    =P_j
     D_j
\end{equation}
where $P_j$ is a permutation matrix and $D_j$ is a diagonal unitary matrix.
Define $\PermutationGroup$ to be the permutation group generated by $P_j$.
The goal of~\cite{Vandennest2011monomial} is to study the space of states that satisfy
\begin{equation}
    U_j\mkern2mu
    \ket{\psi}
    =\ket{\psi}
    \qquad
    \text{for every $j=1,\dots,m$}.
\end{equation}
Given a computational basis state $\ket{x}$, following~\cite{Vandennest2011monomial} we define the orbit $\Orbit_x$ to be
\begin{equation}
    \Orbit_x
    =\setbuilder{\ket{y}}
                {\exists
                 P\in
                 \PermutationGroup\colon
                 P\mkern2mu
                 \ket{x}
                 =\ket{y}}.
\end{equation}
We also define $G_x$ to be the subgroup of all $U\in G$ that have~$\ket{x}$ as an eigenvector.
Then we have the following theorem
\begin{theorem}
    \label{thm:monomial}
    Consider a group $G=\langle U_1,\dots, U_m\rangle$ of monomial unitary matrices.
    
    (a) There exists a state $\ket{\psi}\neq 0$ stabilized by $G$ if and only if there exists a computational basis state $|x\rangle$ such that
    \begin{equation}
        \label{eq:condition_support}
        U\mkern2mu
        \ket{x}
        =\ket{x}
        \qquad
        \text{for all $U\in G_x$}.
    \end{equation}
    
    (b) For every computational basis state $|x\rangle$ satisfying~\eqref{eq:condition_support}, there exists a state $\ket{\psi_x}$ stabilized by $G$, which is of the form
    \begin{equation}
        \ket{\psi_x}
        =\frac{1}
              {\sqrt{\abs{\Orbit_x}}}
         \sum_{\ket{y}\in
               \Orbit_x}
         f(y)\mkern2mu
         \ket{y},
    \end{equation}
    where $|f(y)|=1$ for all $y\in \Orbit_x$.
    Moreover, there exists a subset $\{|x_1\rangle, \dots, |x_d\rangle\}$ (each satisfying~\eqref{eq:condition_support}) such that
    \begin{itemize}
        \item the orbits $\Orbit_{x_i}$ are mutually disjoint;
        \item the set of all $x$ satisfying~\eqref{eq:condition_support} is precisely $\Orbit_{x_1}\cup\dots\cup \Orbit_{x_d}$;
        \item $\set{\ket{\psi_{x_1}},\dots,\ket{\psi_{x_d}}}$ is a basis of the space stabilized by $G$.
            In particular, $d$ is the dimension of this space.
    \end{itemize}
\end{theorem}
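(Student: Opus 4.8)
The plan is to exploit the fact that a monomial operator $U=PD$ sends each computational basis ray $\mathbb{C}\ket{x}$ to another ray $\mathbb{C}\ket{y}$. Thus $G$ permutes the basis rays, the underlying permutation group being exactly $\PermutationGroup$, and the Hilbert space splits as an orthogonal direct sum $\bigoplus_{\Orbit}\mathcal{H}_{\Orbit}$ over the orbits $\Orbit$ of $\PermutationGroup$ on basis states, each summand $\mathcal{H}_{\Orbit}=\operatorname{span}\{\ket{y}:\ket{y}\in\Orbit\}$ being $G$-invariant. Consequently the stabilized space is the direct sum of the stabilized subspaces of the individual $\mathcal{H}_{\Orbit}$, so it suffices to analyze one orbit at a time and then reassemble.

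First I would treat the easy direction of (a) together with a uniqueness-within-an-orbit statement. If $\ket{\psi}\neq 0$ is stabilized, choose an orbit $\Orbit$ on which the component $\ket{\psi_{\Orbit}}$ is nonzero and a basis state $\ket{x}\in\Orbit$ with $\braket{x}{\psi}\neq 0$. For any $U\in G_x$ the underlying permutation must fix $x$, so $U=PD$ acts on $\ket{x}$ just through the diagonal phase $d_x$, i.e.\ $U\ket{x}=d_x\ket{x}$ and $\bra{x}U=d_x\bra{x}$; combining this with $U\ket{\psi}=\ket{\psi}$ and $\braket{x}{\psi}\neq 0$ forces $d_x=1$, which is precisely~\eqref{eq:condition_support}. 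The same monomial computation, applied to two group elements sending $\ket{x}$ to scalar multiples of one and the same $\ket{y}$, shows that once~\eqref{eq:condition_support} holds the phase $c_y$ defined by $U\ket{x}=c_y\ket{y}$ depends only on $y$, because the ratio of the two candidate elements lies in $G_x$ and hence has eigenvalue $1$ on $\ket{x}$.

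Next, for the converse of (a) and for (b): given $\ket{x}$ satisfying~\eqref{eq:condition_support}, set $f(y):=c_y$ for $\ket{y}\in\Orbit_x$ (well defined by the previous step, of modulus $1$ since $U$ is unitary, with $f(x)=1$) and $\ket{\psi_x}:=\abs{\Orbit_x}^{-1/2}\sum_{\ket{y}\in\Orbit_x}f(y)\ket{y}$. The identity $U\ket{\psi_x}=\ket{\psi_x}$ for all $U\in G$ reduces to the transport relation $f(\pi_U(y))=\mu_{U,y}\,f(y)$, where $U\ket{y}=\mu_{U,y}\ket{\pi_U(y)}$; this holds because $UU_y$ (with $U_y\ket{x}=c_y\ket{y}$) is again a group element and carries $\ket{x}$ to $\ket{\pi_U(y)}$ with phase $\mu_{U,y}c_y$. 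For the basis claim I would show that the stabilized subspace of a single $\mathcal{H}_{\Orbit}$ is at most one-dimensional: for stabilized $\ket{\phi}\in\mathcal{H}_{\Orbit}$ and $\ket{y}\in\Orbit$, picking $U_y\in G$ with $U_y\ket{x}=c_y\ket{y}$ gives $\bra{y}U_y=c_y\bra{x}$, hence $\braket{y}{\phi}=c_y\braket{x}{\phi}$, so $\ket{\phi}$ is a fixed scalar multiple of $\ket{\psi_x}$ (and the scalar is forced to $0$ when $\Orbit$ contains no solution of~\eqref{eq:condition_support}, by the first step). Choosing one representative $\ket{x_i}$ per orbit that meets the solution set of~\eqref{eq:condition_support} then produces $\set{\ket{\psi_{x_1}},\dots,\ket{\psi_{x_d}}}$; these are orthogonal (distinct orbits) and span the stabilized space by the orbit decomposition, giving $d=\dim$. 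That the solution set of~\eqref{eq:condition_support} equals $\Orbit_{x_1}\cup\dots\cup\Orbit_{x_d}$ follows because each $\ket{\psi_{x_i}}$ is stabilized with unit-modulus amplitude on every $\ket{y}\in\Orbit_{x_i}$, so the first step applies verbatim to each such $\ket{y}$.

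The main obstacle I expect is the well-definedness of the amplitude function $f$ on an orbit, i.e.\ verifying that the phase accumulated while transporting $\ket{x}$ to $\ket{y}$ along a word in the generators is independent of the chosen word. This is exactly where hypothesis~\eqref{eq:condition_support} on $G_x$ is essential: the obstruction to path-independence is measured by an element of $G_x$ that fixes the ray $\mathbb{C}\ket{x}$ but possibly not the vector $\ket{x}$, and~\eqref{eq:condition_support} is precisely the statement that no such element exists. Once this is in place, the remaining verifications are routine bookkeeping with monomial matrices.
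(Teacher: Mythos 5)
Your argument is correct: the orbit decomposition into $G$-invariant subspaces, the observation that elements of $G_x$ act on $\ket{x}$ by a pure phase (forcing that phase to be $1$ whenever $\braket{x}{\psi}\neq 0$), the phase-transport construction of $\ket{\psi_x}$ with well-definedness of $f$ guaranteed precisely by~\eqref{eq:condition_support}, and the at-most-one-dimensionality of the stabilized subspace within each orbit together give all of (a) and (b). Note that the paper itself states Theorem~\ref{thm:monomial} without proof, importing it from~\cite{Vandennest2011monomial}; your proof is essentially the same orbit/phase-transport argument used there, so there is nothing to flag.
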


\section{Computational Complexity of the XS-Stabilizer Problem}
\label{sec:complexity}

Here we address the computational complexity of determining whether a subgroup~$G$ of the Pauli-S group, specified in terms of a generating set, \emph{is} an XS-stabilizer group, i.e. whether there exists a quantum state~$\ket{\psi}\neq 0$ that is stabilized by~$G$.
More precisely, the problem can be formulated as
\begin{description}
    \item[Problem]
        \Problem{XS-Stabilizer}.
    \item[Input]
        A list of $s_j\in\set{0,\dots,7}$, $\vec{a}_j\in\set{0,1}^n$ and $\vec{b}_j\in\set{0,\dots,3}^n$ where $j=1,\dots,m$, which describe a set $\Generators=\{g_1, \dots, g_m\}\subset\PauliSGroupMany$.
    \item[Output]
        If there exists a quantum state~$\ket{\psi}\neq 0$ such that $g_j\mkern2mu\ket{\psi}=\ket{\psi}$ for every~$j$ then output \textsc{YES}; otherwise output \textsc{NO}.
\end{description}

We have the following theorem:

\begin{theorem}
    The \Problem{XS-Stabilizer} problem is \ComplexityClass{NP}-hard.
\end{theorem}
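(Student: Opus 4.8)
The plan is to reduce from \Problem{Positive 1-in-3-Sat}, which is well known to be \ComplexityClass{NP}-complete: given Boolean variables $x_1,\dots,x_n$ and clauses $C_1,\dots,C_m$, where each $C_\ell=\set{x_{i_\ell},x_{j_\ell},x_{k_\ell}}$ is a triple of (un-negated) variables, decide whether there is an assignment $x\in\set{0,1}^n$ making \emph{exactly one} variable of each clause equal to~$1$. From such an instance I would build, in polynomial time, an instance of \Problem{XS-Stabilizer} on $n$~qubits with one generator per clause: for $C_\ell$ take
\[
    g_\ell
    \coloneqq
     \alpha^{6}\,
     S_{i_\ell}
     S_{j_\ell}
     S_{k_\ell},
\]
i.e. in the input format of \Problem{XS-Stabilizer} one sets $s_\ell=6$, $\vec{a}_\ell=\vec{0}$, and $\vec{b}_\ell\in\set{0,1}^n$ the indicator vector of $C_\ell$. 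Note every $g_\ell$ is an $S$-type (hence diagonal) operator, so the whole construction lives inside the diagonal world.

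The key computation is the action of $g_\ell$ on a computational basis state. Since $S\ket{t}=\mathrm{i}^{t}\ket{t}$ for $t\in\set{0,1}$ and $\alpha^{6}=\mathrm{e}^{3\mathrm{i}\pi/2}=-\mathrm{i}$, one gets
\[
    g_\ell\mkern2mu
    \ket{x}
    =\alpha^{6}\,
     \mathrm{i}^{\,x_{i_\ell}+x_{j_\ell}+x_{k_\ell}}\,
     \ket{x}
    =\mathrm{i}^{\,(x_{i_\ell}+x_{j_\ell}+x_{k_\ell})-1}\,
     \ket{x}.
\]
Because $x_{i_\ell}+x_{j_\ell}+x_{k_\ell}\in\set{0,1,2,3}$, this eigenvalue equals~$1$ if and only if $x_{i_\ell}+x_{j_\ell}+x_{k_\ell}=1$; in other words, $g_\ell\mkern2mu\ket{x}=\ket{x}$ precisely when $x$ satisfies the $\ell$-th clause in the $1$-in-$3$ sense (in particular each $g_\ell$ does have an eigenvalue~$1$). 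Thus the clause gadget exactly implements the $1$-in-$3$ predicate through the order-$4$ phases produced by $S$ — something unavailable to Pauli stabilizers.

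It then remains to verify that $G\coloneqq\gen{g_1,\dots,g_m}$ is an XS-stabilizer group if and only if the \Problem{Positive 1-in-3-Sat} instance is satisfiable. Since all generators are diagonal, so is every element of~$G$. If $\ket{\psi}=\sum_x f(x)\mkern2mu\ket{x}\neq 0$ is stabilized by~$G$, then applying each $g_\ell$ forces $f(x)=0$ whenever $g_\ell\mkern2mu\ket{x}\neq\ket{x}$, so the support of~$f$ consists of basis vectors fixed by \emph{all} $g_\ell$, i.e. of $1$-in-$3$ satisfying assignments. Conversely, if $x^{*}$ is such an assignment then $\ket{x^{*}}$ is fixed by every generator, hence stabilized by all of~$G$. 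Therefore a stabilized state exists iff a satisfying assignment exists, which completes the reduction.

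The one step that needs an argument rather than a direct calculation — and hence the main (minor) obstacle — is the implication ``no satisfying assignment $\Rightarrow$ no stabilized state'': one must exclude an \emph{accidental} superposition being stabilized even though no basis state is. This is exactly what the diagonality of~$G$ rules out in the paragraph above, and the same observation makes the necessary conditions of Proposition~\ref{prop:stabilizer_group} (e.g. $-\Identity\notin G$) automatic in the satisfiable case, so no separate verification is required. Everything else — polynomial running time of the reduction, and the phase arithmetic for $\alpha$ and $S$ — is routine.
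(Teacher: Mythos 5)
Your proposal is correct and is essentially the paper's own proof: the same reduction from \Problem{Positive 1-in-3-Sat}, with the same clause gadget ($\alpha^6 S_{i}S_{j}S_{k}=\mathrm{i}^3 S_{i}S_{j}S_{k}$) and the same observation that, all generators being diagonal, a stabilized state exists iff some computational basis state is fixed by every generator, i.e. iff the instance is satisfiable. Your extra paragraph ruling out an ``accidental'' stabilized superposition just spells out explicitly the step the paper states tersely, so there is no substantive difference.
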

\begin{proof}
    We will show this via a reduction from the \Problem{Positive 1-in-3-Sat} problem which is \ComplexityClass{NP}-complete~\cite{Schaefer}.
    The \Problem{Positive 1-in-3-Sat} problem is to determine whether a set of logical clauses in $n$~Boolean variables can be satisfied simultaneously or not.
    Each clause has three variables exactly one of which must be satisfied.
    We may express such a clause~$C_j$ as
    \begin{equation}
        \label{eq:1in3sat}
        x_{w_{j
              1}}+
        x_{w_{j
              2}}+
        x_{w_{j
              3}}
        =1
    \end{equation}
    for variables~$x_{w_{jk}}\in\{0,1\}$ and $1\leq w_{jk}\leq n$.
    
    We construct a corresponding instance of the \Problem{XS-Stabilizer} problem by encoding each clause~$C_j$ in a generator
    \begin{equation}
        g_j
        =\mathrm{i}^3
         S_{w_{j
               1}}
         S_{w_{j
               2}}
         S_{w_{j
               3}}\in
         \PauliSGroupMany.
    \end{equation}
    Since all~$g_j$ are diagonal this stabilizer problem is equivalent to determining whether there exists a computational basis state~$\ket{x_1,\dots,x_n}$ stabilized by every~$g_j$, i.e. whether all equations
    \begin{equation}
        \label{eq:1in3sat_stabilizer}
        \mathrm{i}^{3+
                    x_{w_{j
                          1}}+
                    x_{w_{j
                          2}}+
                    x_{w_{j
                          3}}}
        =1
    \end{equation}
    have a common solution.
    Since~\eqref{eq:1in3sat} and~\eqref{eq:1in3sat_stabilizer} are equivalent we have shown that the \Problem{XS-Stabilizer} problem is at least as hard as the \Problem{Positive 1-in-3-Sat} problem.
\end{proof}

The \ComplexityClass{NP}-hardness of the \Problem{XS-Stabilizer} problem is in sharp contrast with the corresponding problem in the Pauli stabilizer formalism, which is known to be in~\ComplexityClass{P}.

Next we show that the \Problem{XS-Stabilizer} problem is in~\ComplexityClass{NP}, which means there is an efficient classical proof allowing to verify whether a group is indeed an XS-stabilizer group.
\begin{theorem}
    \label{thm:inside NP}
    The \Problem{XS-Stabilizer} problem is in \ComplexityClass{NP}.
\end{theorem}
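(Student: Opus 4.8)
The plan is to give a polynomial-time verifier whose certificate is simply a bit string $x\in\set{0,1}^n$. Each $g_j=\alpha^{s_j}X(\vec a_j)S(\vec b_j)$ is a monomial unitary (a bit-flip permutation times a diagonal unitary), so $G=\gen{\Generators}$ is a group of monomial matrices and Theorem~\ref{thm:monomial}(a) applies: $G$ is an XS-stabilizer group if and only if there exists a computational basis state $\ket{x}$ satisfying~\eqref{eq:condition_support}, i.e. $U\ket{x}=\ket{x}$ for all $U$ in the subgroup $G_x$ of elements of $G$ having $\ket{x}$ as an eigenvector. Thus $x$ is the natural witness, and the whole task reduces to verifying~\eqref{eq:condition_support} in $\poly(n,m)$ time. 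As a preliminary step the verifier checks in $\poly(n,m)$ time whether $\Generators$ is admissible; if it is not, then $G$ is not an XS-stabilizer group (by Proposition~\ref{prop:stabilizer_group} and the discussion following it) and the verifier rejects.

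Assuming $\Generators$ is admissible, the key observation is that $G_x=\DiagonalSubgroup$ for \emph{every} $x$. Indeed, by Lemma~\ref{lem:relative_standard_form} every element of $G$ has the form $\MagicZ\,g(\vec x\mkern1mu)$ with $\MagicZ\in\ZSubgroup$ diagonal, and by Lemma~\ref{lem:multiplication} we have $g(\vec x\mkern1mu)\propto X(A\vec x\mkern1mu)\,S(\vec b'\mkern1mu)$, so the permutation part of $\MagicZ\,g(\vec x\mkern1mu)$ is $X(A\vec x\mkern1mu)$; such an element has $\ket{x}$ as an eigenvector iff $A\vec x=0$ over $\mathbb{Z}_2$ iff the element is diagonal. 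Hence $G_x=\DiagonalSubgroup$, and~\eqref{eq:condition_support} becomes: $D\ket{x}=\ket{x}$ for all $D\in\DiagonalSubgroup$. Now invoke Lemma~\ref{lem:generators of diagonal subgroup} to compute, in $\poly(n,m)$ time, a generating set $\set{D_1,\dots,D_r}$ of $\DiagonalSubgroup$ with $r=\poly(n,m)$; each $D_i=\alpha^{s}S(\vec b\mkern1mu)$ acts as $D_i\ket{x}=\lambda_i(x)\ket{x}$ with $\lambda_i(x)=\alpha^{s}\mathrm{i}^{\sum_j b_j x_j}$, a phase computable in polynomial time. The verifier accepts $x$ iff $\lambda_i(x)=1$ for all $i$. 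Since $D\mapsto(\text{eigenvalue of }D\text{ on }\ket{x})$ is a homomorphism $\DiagonalSubgroup\to U(1)$, checking the generators is equivalent to checking all of $\DiagonalSubgroup$, so the verifier accepts some $x$ iff some $x$ satisfies~\eqref{eq:condition_support}, which by Theorem~\ref{thm:monomial}(a) happens iff $G$ is an XS-stabilizer group. This places \Problem{XS-Stabilizer} in~\ComplexityClass{NP}.

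The routine parts are the admissibility check and the phase evaluation. The step I expect to require the most care is the identification $G_x=\DiagonalSubgroup$: it rests on the relative standard form of group elements and therefore on the admissibility precondition (which is why it is verified first), and one must not confuse an element that fixes $\ket{x}$ up to a scalar different from $1$ with one that stabilizes $\ket{x}$. The homomorphism argument handles exactly this subtlety — and, as a bonus, it guarantees that when $\DiagonalSubgroup$ contains a forbidden element such as $-\Identity$, every candidate $x$ is correctly rejected, since then $\lambda(x)=-1$ for the corresponding word regardless of $x$.
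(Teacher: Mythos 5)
Your proposal is correct and follows essentially the same route as the paper: check admissibility, observe that $G_x=\DiagonalSubgroup$ for every $x$ so that Theorem~\ref{thm:monomial}(a) reduces the problem to finding a basis state fixed by the diagonal subgroup, compute generators of $\DiagonalSubgroup$ via Lemma~\ref{lem:generators of diagonal subgroup}, and use the string $x$ itself as the certificate with the generator equations verified in polynomial time. The only cosmetic difference is that you derive $G_x=\DiagonalSubgroup$ through the relative standard form, whereas the paper gets it directly from the fact that every Pauli-S operator sends $\ket{x}$ to a phase times $\ket{x\oplus\vec{a}\mkern1mu}$; both arguments are fine.
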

\begin{proof}
    We first determine if $\Generators$ is an admissible generating set, which can be done efficiently.
    If it is not, $\gen{\Generators}$ cannot be an XS-stabilizer group, hence we output \textsc{NO}.
    If $\Generators$ is found to be an admissible generating set, we proceed with the group $G=\gen{\Generators}$ as follows.
    Given a computational basis state~$\ket{x}$, recall the definition of the set~$G_x$ in Section~\ref{sec:monomial}.
    Note that every XS-operator $\alpha^s X(\vec{a}\mkern1mu)\mkern2mu S(\vec{b}\mkern1mu)$ maps~$\ket{x}$ to $\lambda\mkern2mu\ket{x\oplus\vec{a}\mkern1mu}$ for some complex phase~$\lambda$.
    This implies that $G_x=\DiagonalSubgroup$ for every~$x$.
    Then, by Theorem~\ref{thm:monomial}(a), to check whether~$G$ is an XS-stabilizer group, we only need to check whether there is a computational basis state~$\ket{z}$ stabilized by~$\DiagonalSubgroup$.
    Note that a generating set $\set{D_1,\dots,D_r}$ of~$\DiagonalSubgroup$ can be computed efficiently owing to Lemma~\ref{lem:generators of diagonal subgroup}.
    Furthermore, $\ket{z}$ is stabilized by~$\DiagonalSubgroup$ if and only if it is stabilized by every generator~$D_j$.
    Summarizing, we find that $G$ is an XS-stabilizer group iff there exists a computational basis state~$\ket{z}$ satisfying $D_j\mkern2mu\ket{z}=\ket{z}$ for all~$j$.
    So if $G$ is an XS-stabilizer group, a classical string~$z$ satisfying these conditions will serve as a proof since the equations $D_j\mkern2mu\ket{z}=\ket{z}$ can be verified efficiently.
    This shows that the \Problem{XS-Stabilizer} problem is in~\ComplexityClass{NP}.
\end{proof}

\begin{corollary}
    \label{cor:NP-complete}
    The \Problem{XS-Stabilizer} problem is \ComplexityClass{NP}-complete.
\end{corollary}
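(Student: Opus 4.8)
**The final statement is Corollary \ref{cor:NP-complete}: the \Problem{XS-Stabilizer} problem is \ComplexityClass{NP}-complete.**

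The plan is almost immediate: \ComplexityClass{NP}-completeness is by definition the conjunction of \ComplexityClass{NP}-hardness and membership in \ComplexityClass{NP}, and both halves have just been established. First I would invoke the preceding theorem showing that \Problem{XS-Stabilizer} is \ComplexityClass{NP}-hard, via the reduction from \Problem{Positive 1-in-3-Sat}: given a 1-in-3-SAT instance with clauses $x_{w_{j1}}+x_{w_{j2}}+x_{w_{j3}}=1$, one forms the diagonal generators $g_j=\mathrm{i}^3 S_{w_{j1}}S_{w_{j2}}S_{w_{j3}}$, and a common stabilized state exists iff some computational basis state satisfies all clauses. Second I would invoke Theorem \ref{thm:inside NP}, which places \Problem{XS-Stabilizer} in \ComplexityClass{NP}: one checks admissibility of the generating set efficiently, computes generators of the diagonal subgroup $\DiagonalSubgroup$ via Lemma \ref{lem:generators of diagonal subgroup}, observes that $G_x=\DiagonalSubgroup$ for every $x$ since every XS-operator permutes computational basis states up to phase, and then applies Theorem \ref{thm:monomial}(a) so that a witness string $z$ with $D_j\ket{z}=\ket{z}$ for all generators $D_j$ of $\DiagonalSubgroup$ serves as an efficiently verifiable certificate.

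Combining these two facts yields the corollary directly: a problem that is both \ComplexityClass{NP}-hard and in \ComplexityClass{NP} is \ComplexityClass{NP}-complete. There is essentially no obstacle here — this is a bookkeeping step assembling the two previous results. If anything, the only thing worth a sentence is confirming that the hardness reduction is genuinely a polynomial-time many-one reduction (it clearly is: the generators $g_j$ are written down in time linear in the size of the SAT instance, and the decision outputs coincide), so that the notion of \ComplexityClass{NP}-hardness used matches the one required for \ComplexityClass{NP}-completeness. Thus the proof is a one-line appeal to Theorem (\ComplexityClass{NP}-hardness) and Theorem \ref{thm:inside NP}.

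Concretely, I expect the proof to read: \emph{This follows immediately by combining the \ComplexityClass{NP}-hardness of \Problem{XS-Stabilizer} with Theorem \ref{thm:inside NP}.} The main "work" was already done in the two preceding results; the corollary merely records their conjunction.
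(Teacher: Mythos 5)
Your proposal is correct and matches the paper exactly: the corollary is stated without further argument precisely because it is the immediate conjunction of the \ComplexityClass{NP}-hardness theorem (reduction from \Problem{Positive 1-in-3-Sat}) and Theorem~\ref{thm:inside NP} (membership in \ComplexityClass{NP}). Your extra remark that the reduction is a polynomial-time many-one reduction is a fine sanity check but adds nothing beyond what the paper's two results already provide.
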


\section{Regular XS-Stabilizer Groups}
\label{sec:regular}

The results in section \ref{sec:complexity} imply that working with the XS-stabilizer formalism is computationally hard in general.
In order to recover tractability, we may impose certain restrictions on the type of stabilizers we can have.
In particular, we have the following theorem.

\begin{theorem}
    \label{thm:efficient result}
    Let $\Generators=\set{g_1,\dots,g_m}\subset\PauliSGroupMany$ and $G=\gen{\Generators}$.
    If $\DiagonalSubgroup=\ZSubgroup$ then the \Problem{XS-Stabilizer} problem is in~\ComplexityClass{P}.
    Moreover, this condition can be checked efficiently.
\end{theorem}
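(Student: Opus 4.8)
The plan is to reuse, almost verbatim, the reduction carried out in the proof of Theorem~\ref{thm:inside NP}, and to notice that the hypothesis $\DiagonalSubgroup=\ZSubgroup$ collapses its one remaining search step to linear algebra over $\mathbb{F}_2$. Recall that the algorithm first checks in $\poly(n,m)$ time whether $\Generators$ is admissible; if not, $\gen{\Generators}$ cannot be an XS-stabilizer group and we answer \textsc{NO}, so this instance is trivially in \ComplexityClass{P}. Assume henceforth that $\Generators$ is admissible. By Lemma~\ref{lem:generators of diagonal subgroup} we can then compute in $\poly(n,m)$ time a generating set $\set{D_1,\dots,D_r}$ of $\DiagonalSubgroup$ with $r=\poly(n,m)$, and, as observed in the proof of Theorem~\ref{thm:inside NP}, $G_x=\DiagonalSubgroup$ for every computational basis state $\ket{x}$. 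Hence Theorem~\ref{thm:monomial}(a) reduces the \Problem{XS-Stabilizer} problem to deciding whether there is a bit string $\vec{z}$ with $D_j\mkern2mu\ket{\vec{z}}=\ket{\vec{z}}$ for all $j$.

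Everything so far is identical to the \ComplexityClass{NP} argument; the new ingredient is that, when $\DiagonalSubgroup=\ZSubgroup$, each diagonal generator $D_j$ is not merely of the form $\alpha^{s_j}S(\vec{b}_j)$ but a genuine $Z$-type operator $\alpha^{s_j}Z(\vec{c}_j)$ with $\vec{c}_j\in\set{0,1}^n$ -- equivalently, every entry of every $\vec{b}_j$ is even. The condition $D_j\mkern2mu\ket{\vec{z}}=\ket{\vec{z}}$ then reads $\alpha^{s_j}(-1)^{\vec{c}_j\cdot\vec{z}}=1$: it has no solution unless $s_j\in\set{0,4}$, and in that case it is the single affine equation $\vec{c}_j\cdot\vec{z}=s_j/4$ over $\mathbb{F}_2$. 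So the algorithm tests $s_j\in\set{0,4}$ for every $j$ (answering \textsc{NO} if this fails), and otherwise decides the consistency of the resulting system of $r$ affine $\mathbb{F}_2$-equations by Gaussian elimination, answering \textsc{YES} iff it is consistent. Correctness is immediate from Theorem~\ref{thm:monomial}(a), and every step runs in $\poly(n,m)$ time.

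It remains to check that the hypothesis $\DiagonalSubgroup=\ZSubgroup$ is itself efficiently decidable. Since $Z_k=S_k^2$ one always has $\ZSubgroup\subseteq\DiagonalSubgroup$, so the hypothesis is equivalent to $\DiagonalSubgroup\subseteq\ZSubgroup$. Because $\setbuilder[\big]{\alpha^s Z(\vec{c}\mkern3mu)}{s\in\set{0,\dots,7},\vec{c}\in\set{0,1}^n}$ is a group, this inclusion holds iff every generator $D_j$ produced by Lemma~\ref{lem:generators of diagonal subgroup} lies in it, i.e.\ iff each $\vec{b}_j$ has only even entries -- again a $\poly(n,m)$ test (run after the $\poly(n,m)$ admissibility check).

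I do not expect a genuine obstacle here: the argument is bookkeeping on top of Lemma~\ref{lem:generators of diagonal subgroup}, Theorem~\ref{thm:monomial} and the reduction already used for Theorem~\ref{thm:inside NP}, the only real insight being that $Z$-type stabilizer constraints are affine equations over $\mathbb{F}_2$. The one point worth emphasising is the contrast with the general, \ComplexityClass{NP}-complete case: a diagonal generator carrying an honest $S$ (an odd entry of some $\vec{b}_j$) would impose a constraint modulo~$4$ in the Boolean unknowns $z_k$, and such constraints are exactly what encodes \Problem{Positive 1-in-3-Sat}; the condition $\DiagonalSubgroup=\ZSubgroup$ rules these out and keeps the whole system linear over $\mathbb{F}_2$.
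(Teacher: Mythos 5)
Your proposal is correct and follows essentially the same route as the paper: reduce via the monomial-formalism argument of Theorem~\ref{thm:inside NP} to finding a basis state stabilized by $\DiagonalSubgroup$, then observe that under $\DiagonalSubgroup=\ZSubgroup$ the stabilization conditions become a polynomial-size system of affine equations over $\mathbb{Z}_2$, with the regularity test itself being a check on the generators from Lemma~\ref{lem:generators of diagonal subgroup}. Your explicit handling of the phase condition $s_j\in\set{0,4}$ is in fact slightly more careful than the paper, which deduces the form $(-1)^{s_j}Z(\vec{c}_j)$ by appealing to Proposition~\ref{prop:stabilizer_group}.
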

\begin{proof}
    As in the proof of Theorem~\ref{thm:inside NP}, we only need to consider admissible generating sets, and we need to decide whether there exists a computational basis state stabilized by~$\DiagonalSubgroup$.
    Since Lemma~\ref{lem:generators of diagonal subgroup} states that~$\DiagonalSubgroup$ is generated by $\set{D_1,\dots,D_r}=\Commutators\cup\Squares\cup\set{g(\vec{u}_i)}$ with $r=\poly(m,n)$ we can efficiently check whether $\set{g(\vec{u}_i)}\subset\ZSubgroup$ or some~$g(\vec{u}_i)$ contains an $S$~operator, i.e. whether $\DiagonalSubgroup=\ZSubgroup$ or not.
    Now let us assume that $\DiagonalSubgroup=\ZSubgroup$.
    By Proposition~\ref{prop:stabilizer_group} we immediately deduce that $D_j=(-1)^{s_j}Z(\vec{c}_j)$ for some $s_j\in\set{0,1}$ and $\vec{c}_j\in\set{0,1}^n$.
    Then a computational basis state~$\ket{\vec{z}\mkern2mu}$ stabilized by~$\DiagonalSubgroup$ is equivalent to a nontrivial common solution of the equations $\vec{c}_j\cdot\vec{z}=s_j$.
    These are polynomially many linear equations in $n$~variables over~$\mathbb{Z}_2$ and can hence be solved in polynomial time.
    This proves that the restricted \Problem{XS-Stabilizer} problem is indeed in~\ComplexityClass{P}.
\end{proof}

Motivated by this result, we call any XS-stabilizer group~$G$ with $\DiagonalSubgroup=\ZSubgroup$ a \emph{regular XS-stabilizer group}.

\begin{remark}
    We want to mention that diagonal elements containing $S$~operators play a crucial role in certain examples, which we will discuss in section~\ref{sec:non regular xs stabilizer}.
    However, as will be proved in theorem~\ref{thm:non-regular state has regular group}, we can construct a basis of the space stabilized by a general XS-stabilizer group such that each basis state is described by a regular XS-stabilizer group.
\end{remark}

Here we give a sufficient condition for an XS-stabilizer group to be regular.

\begin{lemma}
    \label{lemma:sufficient condition for regular group}
    If an XS-stabilizer group $G\subset\PauliSGroupMany$ has a generating set $\Generators=\set{g_1,\dots,g_t,g_{t+1},\dots,g_m}$ where
    \begin{equation}
        \label{eq:standard form when no S}
        g_j
        =\begin{cases}
             \alpha^{s_j}
             X(\vec{a}_j)\mkern2mu
             S(\vec{b}_j)
             & \text{if $j\leq t$}, \\
             (-1)^{s_j}
             Z(\vec{c}_j)
             & \text{else}
         \end{cases}
    \end{equation}
    and $\set{\vec{a}_1,\dots,\vec{a}_t}$ are linearly independent over $\mathbb{Z}_2$, then it is regular.
\end{lemma}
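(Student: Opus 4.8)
The plan is to prove the inclusion $\DiagonalSubgroup \subseteq \ZSubgroup$; since $\ZSubgroup \subseteq \DiagonalSubgroup$ always holds, this yields $\DiagonalSubgroup = \ZSubgroup$, i.e.\ regularity. First I would note that, because $G$ is an XS-stabilizer group, $\Generators$ is automatically an admissible generating set, so Lemma~\ref{lem:relative_standard_form} applies and every element of $G$ can be written as $\MagicZ\, g(\vec{x})$ with $\vec{x} \in \mathbb{Z}_2^m$ and $\MagicZ \in \gen{\Commutators \cup \Squares} \subseteq \ZSubgroup$. Since $\MagicZ$ is a Z-type (in particular diagonal) operator, $\MagicZ\, g(\vec{x})$ is diagonal if and only if $g(\vec{x})$ is diagonal; hence it suffices to show that every diagonal $g(\vec{x})$ is in fact Z-type.

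For this I would use Lemma~\ref{lem:multiplication}: with $A = [\begin{matrix} \vec{a}_1 & \dots & \vec{a}_m \end{matrix}]$ one has $g(\vec{x}) \propto X(A\vec{x})\, S(\vec{b}')$, so $g(\vec{x})$ is diagonal precisely when $A\vec{x} = 0$ over $\mathbb{Z}_2$. The crucial point is the special form~\eqref{eq:standard form when no S} of the generating set: for $j > t$ the generator $g_j = (-1)^{s_j} Z(\vec{c}_j)$ carries no $X$, so $\vec{a}_j = 0$, and therefore $A\vec{x} = \sum_{j=1}^t x_j \vec{a}_j$. By the hypothesis that $\vec{a}_1, \dots, \vec{a}_t$ are linearly independent over $\mathbb{Z}_2$, the equation $A\vec{x} = 0$ forces $x_1 = \cdots = x_t = 0$. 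Consequently a diagonal word $g(\vec{x})$ equals $g_{t+1}^{x_{t+1}} \cdots g_m^{x_m}$, a product of the Z-type generators $g_{t+1}, \dots, g_m$, and hence is itself Z-type.

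Combining the two paragraphs, an arbitrary diagonal element of $G$ has the form $\MagicZ\, g(\vec{x})$ with both $\MagicZ$ and $g(\vec{x})$ Z-type, so the product lies in $G \cap \gen{\alpha\Identity, Z_1, \dots, Z_n} = \ZSubgroup$ (the overall phase is automatically a sign by Proposition~\ref{prop:stabilizer_group}). This establishes $\DiagonalSubgroup \subseteq \ZSubgroup$ and finishes the proof. I do not anticipate a genuine obstacle; the only place demanding care is the bookkeeping that keeps $\MagicZ$ and the generators $g_{t+1}, \dots, g_m$ \emph{genuinely} Z-type rather than merely diagonal — this is exactly the role played by the structural assumption~\eqref{eq:standard form when no S} together with linear independence of $\{\vec{a}_1, \dots, \vec{a}_t\}$, which jointly rule out any $S$-carrying diagonal word in $G$.
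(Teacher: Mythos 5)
Your proposal is correct and follows essentially the same route as the paper's proof: both reduce to showing that every diagonal word $g(\vec{x})$ is Z-type via Lemma~\ref{lem:relative_standard_form}, then use Lemma~\ref{lem:multiplication} with the matrix $A=[\vec{a}_1\,\cdots\,\vec{a}_t\,0\,\cdots\,0]$ and linear independence of the $\vec{a}_j$ to force $x_1=\cdots=x_t=0$, so that any diagonal element is a product of the Z-type generators $g_{t+1},\dots,g_m$ (times an element of $\gen{\Commutators\cup\Squares}\subset\ZSubgroup$). The only difference is bookkeeping: you carry the factor $\MagicZ$ explicitly, while the paper phrases the same step as $\DiagonalSubgroup$ being generated by $\Commutators$, $\Squares$ and the diagonal $g(\vec{x})$.
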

\begin{proof}
    We see from Lemma~\ref{lem:relative_standard_form} that $\DiagonalSubgroup$ is generated by $\Commutators$, $\Squares$ and those elements~$g(\vec{x}\mkern1mu)=g_1^{x_1}\cdots g_m^{x_m}$ which are diagonal.
    In order to show that $\DiagonalSubgroup=\ZSubgroup$ we only need to show that all diagonal elements~$g(\vec{x}\mkern1mu)$ are Z-type operators.
    It follows from Lemma~\ref{lem:multiplication} that $g(\vec{x}\mkern1mu)\propto X(A\vec{x}\mkern1mu)\mkern2mu S(\vec{b}')$ for the $n\times m$ matrix
    \begin{equation}
        A
        =[\begin{matrix}
              \vec{a}_1 & \dots & \vec{a}_t & 0 & \dots & 0
          \end{matrix}]
    \end{equation}
    and some~$\vec{b}'$.
    This implies that $g(\vec{x}\mkern1mu)$ is a diagonal operator if and only if $A\vec{x}=0$ over~$\mathbb{Z}_2$.
    Since the bit strings~$\vec{a}_j$ are linearly independent this can only be true if $x_j=0$ for all $j\leq t$.
    Thus every diagonal element $g(\vec{x}\mkern1mu)=g_{t+1}^{x_{t+1}}\cdots g_m^{x_m}$ is indeed a Z-type operator.
\end{proof}

\begin{theorem}[Normal form]
    \label{thm:normal_form}
    Every XS-stabilizer group $G\subset\PauliSGroupMany$ has a generating set $\Generators=\set{g_1,\dots,g_t,g_{t+1},\dots,g_m}$ where
    \begin{equation}
        \label{eq:normal_form}
        g_j
        =\begin{cases}
             \alpha^{s_j}
             X(\vec{e}_j,
               \vec{w}_j)\mkern2mu
             S(\vec{b}_j)
             & \text{if $j\leq t$}, \\
             \mathrm{i}^{s_j}
             S(\vec{b}_j)
             & \text{else}
         \end{cases}
    \end{equation}
    for the canonical basis vectors $\vec{e}_j\in\mathbb{Z}_2^t$ and some $\vec{w}_j\in\mathbb{Z}_2^{n-t}$.
    Furthermore, $\DiagonalSubgroup=\gen{g_{t+1},\dots,g_m}$.
    
    Any other generating set $\Generators'=\set{h_1,\dots,h_l}$ of~$G$ can be reduced to~$\Generators$ in $\poly(n,l)$ time and the length of~$\Generators$ is $\poly(n,l)$.
\end{theorem}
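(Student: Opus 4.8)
The plan is to produce $\Generators$ from an arbitrary generating set of $G$ by separating its action into an \enquote{X-part} and a diagonal part, and handling each by linear algebra over $\mathbb{Z}_2$. Since $G$ is an XS-stabilizer group, every generating set of it is admissible, so the results of Section~\ref{sec:diagonal subgroup} apply. Starting from a generating set $\Generators'=\set{h_1,\dots,h_l}$ in standard form~\eqref{eq:standard_form}, I would use Lemma~\ref{lem:multiplication} to observe that any product of the $h_j$ lies in $G$ and has X-part equal to the $\mathbb{Z}_2$-sum of the X-parts of the factors. Hence the X-parts of elements of $G$ form a subspace $V_X\subseteq\mathbb{Z}_2^n$, namely the $\mathbb{Z}_2$-span of the X-parts of $h_1,\dots,h_l$; set $t\coloneqq\dim V_X$.

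To construct $g_1,\dots,g_t$ I would take a basis of $V_X$, realize each basis vector as the X-part of an explicit product of the $h_j$, then pick a set $T$ of $t$ qubits onto which $V_X$ projects bijectively (possible since $\dim V_X=t$) and relabel the qubits so that $T=\set{1,\dots,t}$. A further $\mathbb{Z}_2$-linear change of basis of $V_X$, realized on the group side again by taking products, yields a basis whose restriction to $T$ consists of the canonical vectors $\vec e_1,\dots,\vec e_t$; the corresponding elements $g_1,\dots,g_t$ of $G$ then have the form $g_j=\alpha^{s_j}X(\vec e_j,\vec w_j)\mkern2mu S(\vec b_j)$ for suitable $s_j$, $\vec b_j$, and $\vec w_j\in\mathbb{Z}_2^{n-t}$.

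For $g_{t+1},\dots,g_m$ I would apply Lemma~\ref{lem:generators of diagonal subgroup} to obtain a generating set of $\DiagonalSubgroup$ of size $\poly(n,l)$ in $\poly(n,l)$ time; each of its elements is a diagonal element of $G$, hence of the form $\alpha^s S(\vec b)$. Using $g^4=\Identity$ for all $g\in G$ (Proposition~\ref{prop:stabilizer_group}) together with $S^4=\Identity$ one gets $\alpha^{4s}=1$, so $s$ is even and $\alpha^s$ is a power of $\mathrm i$; rewriting each generator as $\mathrm i^{s_j}S(\vec b_j)$ gives the second line of~\eqref{eq:normal_form}, and $\DiagonalSubgroup=\gen{g_{t+1},\dots,g_m}$ holds by construction. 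To check $\gen{\Generators}=G$, the inclusion $\subseteq$ is clear, and for $g\in G$ its X-part lies in $V_X$, which is spanned by $\vec a_1=(\vec e_1,\vec w_1),\dots,\vec a_t=(\vec e_t,\vec w_t)$, so for a suitable $\vec x\in\mathbb{Z}_2^t$ the element $g_1^{x_1}\cdots g_t^{x_t}$ has the same X-part as $g$ (Lemma~\ref{lem:multiplication}); then $g\mkern2mu(g_1^{x_1}\cdots g_t^{x_t})^{-1}$ is a diagonal element of $G$, hence lies in $\DiagonalSubgroup=\gen{g_{t+1},\dots,g_m}$, so $g\in\gen{\Generators}$. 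For the complexity claims, $t\leq n$, the linear algebra over $\mathbb{Z}_2$ is $\poly(n,l)$, and Lemma~\ref{lem:generators of diagonal subgroup} adds $\poly(n,l)$ generators, so $m=\poly(n,l)$ and the whole reduction runs in $\poly(n,l)$ time.

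The step I expect to require the most care is verifying that the basis manipulations on $V_X$ really are legitimate generating-set transformations of $G$ — that the elements written down genuinely lie in $G$ and continue to generate it — and that invoking Lemma~\ref{lem:generators of diagonal subgroup} directly is precisely what guarantees that \emph{all} of $\DiagonalSubgroup$ is recovered, rather than merely the diagonal products that happen to appear during the X-part elimination. Everything else is routine linear algebra over $\mathbb{Z}_2$ combined with the lemmas of Section~\ref{sec:diagonal subgroup}.
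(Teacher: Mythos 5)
Your proposal is correct and follows essentially the same route as the paper's proof: both invoke Lemma~\ref{lem:generators of diagonal subgroup} to generate $\DiagonalSubgroup$, and both reduce the X-parts of the original generators by Gaussian elimination over $\mathbb{Z}_2$ (column operations realized as products of generators, row permutations as qubit relabelings) to obtain the $X(\vec e_j,\vec w_j)$ form, then argue any $g\in G$ differs from a matching product $g_1^{x_1}\cdots g_t^{x_t}$ by a diagonal element. Your explicit use of $g^4=\Identity$ to force the diagonal generators into the $\mathrm{i}^{s_j}S(\vec b_j)$ form is a slightly more detailed version of the paper's one-line remark, but the substance is identical.
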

\begin{proof}
    We will prove this by explicitly reducing~$\Generators'=\set{h_1,\dots,h_l}$.
    By Lemma~\ref{lem:generators of diagonal subgroup} we can efficiently find a generating set $\set{D_1,\dots,D_r}$ of~$\DiagonalSubgroup$ from~$\Generators'$.
    Next we can extend it to a generating set of~$G$ by adding a minimal subset of~$\Generators'$ which can be found efficiently.
    Indeed, suppose we added all generators $h_j=\alpha^{s_j}X(\vec{a}_j)\mkern2mu S(\vec{b}_j)$.
    If $\set{\vec{a}_1,\dots,\vec{a}_l}$ are linearly dependent over~$\mathbb{Z}_2$, we may assume that $\vec{a}_l=y_1\vec{a}_1\oplus\dots\oplus y_{l-1}\vec{a}_{l-1}$ for some $y_j\in\mathbb{Z}_2$ without loss of generality.
    Then Lemma~\ref{lem:multiplication} implies $h_1^{y_1}\cdots h_{l-1}^{y_{l-1}}=D h_l$ for some diagonal element~$D\subset\PauliSGroupMany$.
    It is not difficult to see that actually $D\in G$ and hence $D\in\DiagonalSubgroup$.
    This means that the generator $h_l\in\gen{D_1,\dots,D_r,h_1,\dots,h_{l-1}}$ is redundant.
    So we can find a minimal subset of~$\Generators'$ by finding bit strings~$\vec{a}_j$ which form a basis of the linear space $\gen{\vec{a}_1,\dots,\vec{a}_l}$.
    Note that this only involves Gaussian elimination over~$\mathbb{Z}_2$.
    Now let $\set{h_1,\dots,h_t}$ be the desired subset of~$\Generators'$ after relabeling the generators~$h_j$, so that $G=\gen{h_1,\dots,h_t,D_1,\dots,D_r}$.
        
    We can arrange the bit strings~$\vec{a}_1$, …, $\vec{a}_t$ in the $n\times t$ matrix
    \begin{equation}
        A
        \coloneqq
         [\begin{matrix}
              \vec{a}_1 & \dots & \vec{a}_t
          \end{matrix}].
    \end{equation}
    Since the~$\vec{a}_j$ are linearly independent, by Gaussian elimination and suitable permutation of columns, we can efficiently transform~$A$ into
    \begin{equation}
        P
        A
        R
        =\begin{bmatrix}
             \IdentityMatrix_t \\
             W
         \end{bmatrix}
    \end{equation}
    for some $n\times n$ permutation matrix~$P$, $t\times t$ invertible matrix~$R$ and $(n-t)\times t$ matrix~$W$.
    By relabeling the qubits according to the permutation defined by $P$ and by multiplying the generators $h_1$, …, $h_t$ according to the transformation $R$, we obtain an equivalent set of generators $\set{g_1,\dots,g_t}$ such that $g_j=\alpha^{s_j'} X(\vec{e}_j,\vec{w}_j)\mkern2mu S(\vec{b}_j')$ for some $s_j'$ and $\vec{b}_j'$.
    Here $\vec{w}_j$ denotes the $j$-th column of~$W$.
    
    Finally note that $D_j=\mathrm{i}^{s_j'}S(\vec{b}_j')$ for some~$s_j'$ and~$\vec{b}_j'$ since~$G$ is an XS-stabilizer group.
    We conclude that $\Generators=\set{g_1,\dots,g_t,D_1,\dots,D_r}$ is the desired generating set of~$G$.    
\end{proof}

\begin{remark}
    The converse does not hold since there may not exist any state $\ket{\psi}\neq 0$ which is stabilized by generators of the form~\eqref{eq:normal_form}.
    Deciding this is an \ComplexityClass{NP}-complete problem by Corollary~\ref{cor:NP-complete}.
\end{remark}

\begin{corollary}
    \label{cor:normal_form_regular}
    Every regular XS-stabilizer group $G\subset\PauliSGroupMany$ has a generating set $\Generators=\set{g_1,\dots,g_t,g_{t+1},\dots,g_m}$ where
    \begin{equation}
        g_j
        =\begin{cases}
             \alpha^{s_j}
             X(\vec{e}_j,
               \vec{w}_j)\mkern2mu
             S(\vec{b}_j)
             & \text{if $j\leq t$}, \\
             (-1)^{s_j}
             Z(\vec{c}_j)
             & \text{else}
         \end{cases}
    \end{equation}
    for the canonical basis vectors $\vec{e}_j\in\mathbb{Z}_2^t$ and some $\vec{w}_j\in\mathbb{Z}_2^{n-t}$.
    Furthermore, $\DiagonalSubgroup=\gen{g_{t+1},\dots,g_m}$.
\end{corollary}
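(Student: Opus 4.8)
The plan is to simply invoke Theorem~\ref{thm:normal_form} and then specialize using the regularity hypothesis $\DiagonalSubgroup=\ZSubgroup$. First I would apply Theorem~\ref{thm:normal_form} to the regular XS-stabilizer group $G$ to obtain a generating set $\Generators=\set{g_1,\dots,g_t,g_{t+1},\dots,g_m}$ in the normal form~\eqref{eq:normal_form}. This already gives the first $t$ generators in precisely the required shape $g_j=\alpha^{s_j}X(\vec{e}_j,\vec{w}_j)\,S(\vec{b}_j)$ with $\vec{e}_j\in\mathbb{Z}_2^t$ the canonical basis vectors, while the remaining generators satisfy $g_j=\mathrm{i}^{s_j}S(\vec{b}_j)$ and one has $\DiagonalSubgroup=\gen{g_{t+1},\dots,g_m}$.

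Next I would use regularity to rewrite the diagonal block. By definition $\DiagonalSubgroup=\ZSubgroup$, and $\ZSubgroup$ consists entirely of Z-type operators by~\eqref{eq:Z-subgroup}. In particular each generator $g_j$ with $j>t$ lies in $\ZSubgroup$. Combining this with Proposition~\ref{prop:stabilizer_group}(2), which asserts $\ZSubgroup\subset\set{(-1)^s Z(\vec{c}\mkern3mu)}\setminus\set{-\Identity}$, yields $g_j=(-1)^{s_j}Z(\vec{c}_j)$ for suitable $s_j\in\set{0,1}$ and $\vec{c}_j\in\set{0,1}^n$. This puts the diagonal part of the generating set in exactly the claimed form, and the identity $\DiagonalSubgroup=\gen{g_{t+1},\dots,g_m}$ is inherited verbatim from Theorem~\ref{thm:normal_form}.

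There is essentially no obstacle here: the statement is a direct specialization of the normal form theorem. The only thing worth noting is that the $S$-type diagonal generators $\mathrm{i}^{s_j}S(\vec{b}_j)$ produced by Theorem~\ref{thm:normal_form} are automatically Z-type once $G$ is regular, which is immediate from the definition of regularity together with Proposition~\ref{prop:stabilizer_group}; no further manipulation of the generating set is needed.
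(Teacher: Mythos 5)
Your proposal is correct and follows exactly the route the paper intends: the corollary is stated without a separate proof precisely because it is the immediate specialization of Theorem~\ref{thm:normal_form}, with regularity ($\DiagonalSubgroup=\ZSubgroup$) and Proposition~\ref{prop:stabilizer_group}(2) converting the diagonal generators $\mathrm{i}^{s_j}S(\vec{b}_j)$ into Z-type operators $(-1)^{s_j}Z(\vec{c}_j)$. Nothing further is needed.
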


Unless stated otherwise, we will always work with regular XS-stabilizer groups from this point on.

\section{Constructing a Basis of a Regular XS-Stabilizer Code}
\label{sec:general phase}

The goal of this section is to construct a basis for a regular XS-stabilizer code.
For each state of the basis, we will give an explicit form of its expansion in the computational basis.
To achieve this (in sections \ref{sec:constructing_basis}, \ref{sec:logical_operators} and \ref{sec:stronger characterization}), we will first introduce some preliminary material on quadratic and cubic functions (in section \ref{sec:cubic}).

\subsection{Quadratic and Cubic Functions}
\label{sec:cubic}

In this paper we will often deal with functions of the form $i^{\sum x_j x_k}$ or $(-1)^{\sum x_j x_k x_l}$ with $x_j\in\{0, 1\}$.
Here we list some properties of such functions that will become useful later.
First we have the following lemma.

\begin{lemma}[Exponentials of parities]
    \label{lem:parity to high order}
    Let $x_1$, …, $x_n\in\{0,1\}$.
    Then
    \begin{align}
        \alpha^{x_1\oplus
                \dots\oplus
                x_n}
        & =\alpha^{\sum_j
                   x_j}\mkern2mu
           \mathrm{i}^{-\mkern-3mu
                       \sum_{j
                             <k}
                       x_j
                       x_k}\mkern2mu
           (-1)^{\sum_{j
                       <k
                       <l}
                 x_j
                 x_k
                 x_l} \\
    \intertext{and}
        i^{x_1\oplus
           \dots\oplus
           x_n}
        & =\mathrm{i}^{\sum_j
                       x_j}\mkern2mu
           (-1)^{\sum_{j
                       <k}
                 x_j
                 x_k}.
    \end{align}
\end{lemma}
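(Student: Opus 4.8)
The plan is to reduce everything to one integer identity: the parity bit $x_1\oplus\dots\oplus x_n$ equals a fixed integer polynomial in the $x_j$, and then to project that identity modulo $8$ for the first formula and modulo $4$ for the second, using $\alpha^2=\mathrm{i}$, $\alpha^4=-1$ and $\mathrm{i}^2=-1$.

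First I would prove that, for all $x_j\in\set{0,1}$,
\[
    x_1\oplus\dots\oplus x_n
    =\sum_{\emptyset\neq T\subseteq\set{1,\dots,n}}(-2)^{\abs{T}-1}\prod_{j\in T}x_j .
\]
The slickest argument uses $(-1)^{x_j}=1-2x_j$: since the two exponents agree modulo $2$,
\[
    1-2(x_1\oplus\dots\oplus x_n)
    =(-1)^{x_1\oplus\dots\oplus x_n}
    =\prod_{j=1}^n(1-2x_j)
    =1+\sum_{\emptyset\neq T}(-2)^{\abs{T}}\prod_{j\in T}x_j ,
\]
and dividing by $-2$ gives the claimed identity. (Alternatively one can induct on $n$ via $a\oplus b=a+b-2ab$ for bits, but the product expansion is cleaner.)

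Next I would truncate. In the sum above every term indexed by a set $T$ with $\abs{T}\ge 4$ carries a coefficient divisible by $8$, so modulo $8$
\[
    x_1\oplus\dots\oplus x_n\equiv\sum_j x_j-2\sum_{j<k}x_jx_k+4\sum_{j<k<l}x_jx_kx_l \pmod 8 .
\]
Since $\alpha$ is a root of unity of order $8$, only the residue class of the exponent matters, so raising $\alpha$ to both sides and substituting $\alpha^{-2}=\mathrm{i}^{-1}$, $\alpha^{4}=-1$ yields the first formula. For the second, every term with $\abs{T}\ge 3$ has coefficient divisible by $4$, so modulo $4$ we get $x_1\oplus\dots\oplus x_n\equiv\sum_j x_j-2\sum_{j<k}x_jx_k\pmod 4$; raising $\mathrm{i}$ (of order $4$) to both sides and using $\mathrm{i}^{-2}=-1$ gives the second. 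There is essentially no obstacle here beyond bookkeeping; the only point worth stating explicitly is why the mod-$8$ (resp. mod-$4$) reduction is legitimate, namely that $\alpha$ (resp. $\mathrm{i}$) has exactly that order, so the higher-order elementary-symmetric terms, whose integer coefficients vanish modulo that order, contribute nothing.
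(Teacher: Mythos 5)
Your proof is correct. It differs from the paper in that the paper gives no argument at all for this lemma: it simply cites~\cite{BravyiHaah} for the first identity and asserts that the second follows by similar reasoning. Your route is a clean, self-contained replacement. The exact integer identity
\begin{equation*}
    x_1\oplus\dots\oplus x_n
    =\sum_{\emptyset\neq T\subseteq\set{1,\dots,n}}(-2)^{\abs{T}-1}\prod_{j\in T}x_j,
\end{equation*}
obtained from $(-1)^{x_1\oplus\dots\oplus x_n}=\prod_j(1-2x_j)$ and division by $-2$, is valid, and the truncations are sound: for $\abs{T}\geq 4$ the coefficient $(-2)^{\abs{T}-1}$ is divisible by $8$, and for $\abs{T}\geq 3$ it is divisible by $4$, so reducing the exponent modulo the exact order of $\alpha$ (respectively of $\mathrm{i}$) and substituting $\alpha^{-2}=\mathrm{i}^{-1}$, $\alpha^4=-1$, $\mathrm{i}^{-2}=-1$ reproduces both stated formulas, including the sign convention $\mathrm{i}^{-\sum_{j<k}x_jx_k}$ in the first. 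What your approach buys is a single closed-form identity that yields both statements (and would equally give the analogous expansion for any $2^k$-th root of unity) without appealing to an external reference; what the paper's citation buys is brevity. You might also note in passing that a one-line induction via $a\oplus b=a+b-2ab$ gives the same identity, as you mention, but the product expansion you chose is indeed the tidier bookkeeping.
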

\begin{proof}
    The first equation is proved in~\cite{BravyiHaah}.
    The second equation follows by similar arguments.
\end{proof}

All exponents in this Lemma are homogeneous polynomials of degree at most~$3$.
Below we will often only be interested in whether a given exponent is a quadratic or cubic polynomial, but not in its concrete form.
We will therefore use $l(x)$, $q(x)$ and $c(x)$ to represent arbitrary linear, quadratic and cubic polynomials in~$\mathbb{Z}[x_1,\dots,x_n]$ respectively.
(These polynomials need not be homogeneous.)
Using this notation, the Lemma can be summarized as
\begin{align}
    \alpha^{x_1\oplus
            \dots\oplus
            x_n}
    & =\alpha^{l(x)}\mkern2mu
       \mathrm{i}^{q(x)}\mkern2mu
       (-1)^{c(x)}, \\
    \mathrm{i}^{x_1\oplus
                \dots\oplus
                x_n}
    & =i^{l(x)}\mkern2mu
       (-1)^{q(x)}.
\end{align}

Let $\CovariantFunctions$ denote the class of all functions $f\colon\{0, 1\}^n\to \mathbb{C}$ having the form
\begin{equation}
    f(x)
    =\alpha^{l(x)}\mkern2mu
     \mathrm{i}^{q(x)}\mkern2mu
     (-1)^{c(x)}
    \qquad
    \text{for all $x\in\set{0,1}^n$}.
\end{equation}
Note that $\CovariantFunctions$ is closed under multiplication, i.e. if $f$ and $g$ belong to this class, then so does $fg$.

\begin{remark}
    \label{rem:covariant_functions_generators}
    Linear phases~$\alpha^{l(x)}$ are generated by $\{\alpha^{x_j}\}$ via multiplication since $\alpha^{l(x)}=\prod_{j=1}^n (\alpha^{x_j})^{\lambda_j}$ for a linear polynomial $l(x)=\sum_{j=1}^n \lambda_j x_j$.
    By the same token, $\{\mathrm{i}^{x_j},\mathrm{i}^{x_j x_k}\}$ generate all quadratic phases~$\mathrm{i}^{q(x)}$ and $\{(-1)^{x_j},(-1)^{x_j x_k},(-1)^{x_j x_k x_l}\}$ all cubic phases~$(-1)^{c(x)}$.
    This implies
    \begin{equation}
        \CovariantFunctions
        =\gen{\alpha^{x_j},
              \mathrm{i}^{x_j
                          x_k},
              (-1)^{x_j
                    x_k
                    x_l}}.
    \end{equation}
    \qed
\end{remark}

\begin{lemma}[Covariance]
    Let $Q$ be a linear map in the vector space~$\mathbb{Z}_2^n=\{0,1\}^n$.
    If a function~$f$ belongs to~$\CovariantFunctions$ then so does $f\circ Q$.
\end{lemma}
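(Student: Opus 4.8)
The plan is to use the characterization of $\CovariantFunctions$ as a multiplicatively generated class from Remark~\ref{rem:covariant_functions_generators}, namely $\CovariantFunctions=\gen{\alpha^{x_j},\mathrm{i}^{x_j x_k},(-1)^{x_j x_k x_l}}$. Since composition with $Q$ distributes over products, i.e. $(fg)\circ Q=(f\circ Q)(g\circ Q)$, and since $\CovariantFunctions$ is closed under multiplication, it suffices to show that each generator, when precomposed with $Q$, lies back in $\CovariantFunctions$. So the problem reduces to checking the three types of generators separately.

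First I would handle the linear generators. Writing $Q=(Q_{jk})$ as a $\mathbb{Z}_2$-matrix, the $j$-th coordinate of $Qx$ is $(Qx)_j=\bigoplus_k Q_{jk} x_k$, a parity of a subset of the $x_k$. Then $\alpha^{x_j}\circ Q=\alpha^{(Qx)_j}=\alpha^{x_{i_1}\oplus\dots\oplus x_{i_r}}$ where $\{i_1,\dots,i_r\}$ is the support of the $j$-th row of $Q$. By Lemma~\ref{lem:parity to high order} (in the summarized $\CovariantFunctions$-form), this equals $\alpha^{l(x)}\mathrm{i}^{q(x)}(-1)^{c(x)}$ for suitable polynomials, hence lies in $\CovariantFunctions$. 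For the quadratic generators $\mathrm{i}^{x_j x_k}\circ Q=\mathrm{i}^{(Qx)_j (Qx)_k}$, I would expand the product of the two parities: $(Qx)_j(Qx)_k=\bigl(\bigoplus_a u_a\bigr)\bigl(\bigoplus_b v_b\bigr)$ where $u_a,v_b$ range over the relevant $x$-variables. The key subtlety here is that this is a product of \emph{parities} (XOR-sums), not of ordinary integer sums, so one cannot naively expand $\bigoplus u_a = \sum u_a$. Instead, I would first apply the second identity of Lemma~\ref{lem:parity to high order}, $\mathrm{i}^{u_1\oplus\dots} = \mathrm{i}^{\sum u_a}(-1)^{q(x)}$, to pull one parity into an honest integer sum inside the exponent of $\mathrm{i}$; then $\mathrm{i}^{(Qx)_j(Qx)_k}=\mathrm{i}^{\left(\sum_a u_a + 2q(x)\right)(Qx)_k}$, and since $\mathrm{i}^{2q(x)(Qx)_k}=(-1)^{q(x)(Qx)_k}$ and $(-1)$ only sees things mod $2$ where $(Qx)_k$ can be replaced by $\sum_b v_b$, everything becomes a polynomial of degree $\le 3$ in the exponents of $\mathrm{i}$ and $(-1)$ — again an element of $\CovariantFunctions$. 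The cubic generators $(-1)^{x_j x_k x_l}\circ Q=(-1)^{(Qx)_j(Qx)_k(Qx)_l}$ are easiest: since the base is $-1$, only the exponent mod $2$ matters, so each parity $(Qx)_j$ can be replaced by the ordinary sum $\sum Q_{ja}x_a$, and the product of three such linear forms is a polynomial of degree $\le 3$, giving $(-1)^{c(x)}\in\CovariantFunctions$.

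The main obstacle is the bookkeeping in the quadratic case, precisely because of the mismatch between XOR-sums and integer sums: a product of two parities is genuinely not the same as a product of two integer sums, so one has to carefully use the degree-lifting identities of Lemma~\ref{lem:parity to high order} to convert parities to polynomial exponents before multiplying, while tracking that cross-terms only push the total degree up to $3$ and no higher. Once that is set up carefully the rest is routine. I would close the argument by remarking that, since each of the three generating types maps into $\CovariantFunctions$ under $-\circ Q$ and $\CovariantFunctions$ is multiplicatively closed, any $f=\prod_i h_i$ with $h_i$ among the generators satisfies $f\circ Q=\prod_i(h_i\circ Q)\in\CovariantFunctions$, which proves the lemma.
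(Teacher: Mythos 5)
Your proposal is correct and follows essentially the same route as the paper: reduce to the three generator types via Remark~\ref{rem:covariant_functions_generators} and multiplicative closure, then handle each by converting parities into polynomial exponents with Lemma~\ref{lem:parity to high order}, invoking it twice in the quadratic case $\mathrm{i}^{(Qx)_j(Qx)_k}$ exactly as the paper does. Your explicit flagging of the XOR-versus-integer-sum subtlety is the same point the paper addresses with its $(\mathrm{i}^{y_k})^{l(x)}$ step, so there is nothing missing.
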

\begin{proof}
    Let $x=(x_1,\dots,x_n)\in\mathbb{Z}_2^n$ and $y=Q(x)$, so each~$y_j=\bigoplus_{k=1}^n Q_{jk}x_k$ is the parity of some substring of~$x$.
    By Remark~\ref{rem:covariant_functions_generators} it is enough to show that the phases~$\alpha^{y_j}$, $\mathrm{i}^{y_j y_k}$ and $(-1)^{y_j y_k y_l}$ belong to~$\CovariantFunctions$ since any~$f(y)$ equals some product of them.
    
    It is immediate from Lemma~\ref{lem:parity to high order} that every $\alpha^{y_j}\in\CovariantFunctions$.
    Furthermore, Lemma~\ref{lem:parity to high order} implies
    \begin{equation}
        \mathrm{i}^{y_j
                    y_k}
        =\bigl(\mathrm{i}^{l(x)}\mkern2mu
               (-1)^{q(x)}\bigr)^{y_k}
        =\mathrm{i}^{l(x)\mkern2mu
                     y_k}\mkern2mu
         (-1)^{q(x)\mkern2mu
               y_k}
        =(\mathrm{i}^{y_k})^{l(x)}\mkern2mu
         (-1)^{c(x)}
    \end{equation}
    for some linear, quadratic and cubic polynomials~$l$, $q$ and~$c$ respectively.
    Invoking Lemma~\ref{lem:parity to high order} once more we see that $(\mathrm{i}^{y_k})^{l(x)}\in\CovariantFunctions$, thus $\mathrm{i}^{y_j y_k}\in\CovariantFunctions$.
    Finally, it is easy to check that $(-1)^{y_j y_k y_l}=(-1)^{c'(x)}$ for some cubic polynomial~$c'$, which shows $(-1)^{y_j y_k y_l}\in\CovariantFunctions$.
\end{proof}

\subsection{Constructing a Basis}
\label{sec:constructing_basis}

Consider an $n$-qubit XS-stabilizer code $\RegularXSCode$ with regular stabilizer group $G=\langle g_1, \dots, g_m\rangle$.
Without loss of generality, we may assume that the generators $g_j$ have the form given in Corollary~\ref{cor:normal_form_regular}.
We will construct a basis for this XS-stabilizer code by applying the monomial matrix method outlined in section \ref{sec:monomial}.

Denote the $(n-t)\times t$ matrix $W\coloneqq[\vec{c}_1|\cdots|\vec{c}_t]$ (as in the proof of Theorem~\ref{thm:normal_form}).
Define $\OrbitSpace$ to be the linear subspace of $\mathbb{Z}_2^n$ consisting of all couples $(x, Wx)$ with $x\in\mathbb{Z}_2^t$.
A basis of this space is given by the vectors $(\vec{e}_i, \vec{c}_i)$ where $\vec{e}_i$ is the $i$-th canonical basis vector in $\mathbb{Z}_2^t$.
Furthermore consider the set~$\InvariantSetDiagonal$ of those $n$-bit strings $z$ satisfying $D|z\rangle = |z\rangle$ for all $D\in\DiagonalSubgroup$.
This coincides with the set of all $z$ satisfying $g_j|z\rangle = |z\rangle$ for all $j=t+1,\dots,m$, since these $g_j$ generate the diagonal subgroup by Corollary~\ref{cor:normal_form_regular}.
Since each of these $g_j$ has the form $(-1)^{s_j} Z(\vec{b}_j)$, $\InvariantSetDiagonal$ is the set of all $z$ satisfying $\vec{b}_j^Tz = s_j$ for all $j=t+1,\dots,m$.
The set~$\InvariantSetDiagonal$ is thus an affine subspace of $\mathbb{Z}_2^n$.
A basis of~$\InvariantSetDiagonal$ can be computed efficiently.

Note that every Pauli-S operator $g= i^{s} X(\vec{a})S(\vec{b})$ is a monomial unitary matrix, where $X(\vec{a})$ is the corresponding permutation matrix and $i^{s} S(\vec{b})$ the corresponding diagonal matrix (recall subsection~\ref{sec:monomial}).
It follows that the permutation group $\PermutationGroup$ associated with $G$ is generated by the operators $X_j X(\vec{c}_j)$.
Recalling the definition of the space~$\OrbitSpace$, this implies that $\PermutationGroup= \{X(v)\mid v\in\OrbitSpace\}$.
Furthermore, the orbit of a computational basis state $|x\rangle$ is the coset of~$\OrbitSpace$ containing $x$ i.e. $\Orbit_x= x+\OrbitSpace$.
To see this, note that
\begin{equation}
    \setbuilder{X(v)\mkern2mu
                \ket{x}}
               {v\in
                \OrbitSpace}
    =\setbuilder{\ket{x+
                      v}}
                {v\in
                 \OrbitSpace}.
\end{equation}
Applying theorem \ref{thm:monomial}(b), we conclude that there exist orbits $\Orbit_1$, …, $\Orbit_d$ such that
\begin{equation}
    \InvariantSetDiagonal
    =\Orbit_1\cup
     \dots\cup
     \Orbit_d
\end{equation}
and $d$ coincides with the dimension of the XS-stabilizer code.
Note that we can efficiently compute $d$: each orbit has size~$|\OrbitSpace|$ and thus $d|\OrbitSpace|= |\InvariantSetDiagonal|$; since both~$|\OrbitSpace|$ and~$|\InvariantSetDiagonal|$ can be computed efficiently (as we know bases for both of these spaces), we can efficiently compute $d$.
Note that $d$ is a power of two, since both~$|\OrbitSpace|$ and~$|\InvariantSetDiagonal|$ are powers of two.
Finally, a set of strings $\vec{\lambda}_1,\dots,\vec{\lambda}_d\in\InvariantSetDiagonal$ such that $\Orbit_i = \vec{\lambda}_i + \OrbitSpace$ can be computed in $\poly(n,m,d)$ time.

As a corollary of the above discussion, we also note:

\begin{lemma}
    \label{thm:dimension}
    The XS-stabilizer code $\RegularXSCode$ is one-dimensional, i.e. it is an XS-stabilizer state, iff $|\OrbitSpace|= |\InvariantSetDiagonal|$.
\end{lemma}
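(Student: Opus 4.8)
The plan is to read off the result directly from the orbit decomposition that was just established. By Theorem~\ref{thm:monomial}(b) applied to the monomial group $G$, together with the identifications made above, the dimension $d$ of the XS-stabilizer code $\RegularXSCode$ is exactly the number of distinct orbits needed to cover $\InvariantSetDiagonal$, i.e. $d|\OrbitSpace| = |\InvariantSetDiagonal|$ since every orbit $\Orbit_x = x + \OrbitSpace$ is a coset of the \emph{fixed} subspace $\OrbitSpace$ and hence has cardinality $|\OrbitSpace|$. (Here one should note that $\InvariantSetDiagonal\neq\emptyset$ is not automatic; but if $\InvariantSetDiagonal=\emptyset$ then $d=0$, the code is the zero space, hence in particular not one-dimensional, and also $|\InvariantSetDiagonal| = 0 \neq |\OrbitSpace|$ since $|\OrbitSpace|\geq 1$, so the claimed equivalence holds vacuously in that degenerate case as well.)

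First I would restate the counting identity $d = |\InvariantSetDiagonal| / |\OrbitSpace|$, which is valid whenever $\InvariantSetDiagonal$ is nonempty and which follows from the three bullet points of Theorem~\ref{thm:monomial}(b): the orbits $\Orbit_i$ are mutually disjoint, their union is all of $\InvariantSetDiagonal$ (this uses that $G_x = \DiagonalSubgroup$ for every $x$, so that condition~\eqref{eq:condition_support} is precisely membership in $\InvariantSetDiagonal$), and the associated states $\ket{\psi_{x_i}}$ form a basis of the stabilized space. Second, I would simply observe that $d = 1$ if and only if $|\InvariantSetDiagonal| = |\OrbitSpace|$, which is the assertion of the lemma. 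Since $\OrbitSpace \subseteq \InvariantSetDiagonal$ whenever $\InvariantSetDiagonal$ is a coset of $\OrbitSpace$ — more precisely each orbit is a full coset of $\OrbitSpace$ and sits inside $\InvariantSetDiagonal$ — the equality of cardinalities is equivalent to there being exactly one coset, i.e. $\InvariantSetDiagonal$ being a single orbit.

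There is essentially no obstacle here: the lemma is a bookkeeping corollary of the orbit structure derived in the paragraph immediately preceding it, and the only point requiring a line of care is the boundary case $\InvariantSetDiagonal = \emptyset$ (equivalently, $G$ is not an XS-stabilizer group / the code is trivial), where one checks that both sides of the stated equivalence are false, so the ``iff'' still holds. I would therefore keep the proof to two or three sentences, citing Theorem~\ref{thm:monomial}(b) and the computation $d|\OrbitSpace| = |\InvariantSetDiagonal|$ already spelled out above, and conclude.
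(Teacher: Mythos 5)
Your proposal is correct and follows essentially the same route as the paper, which presents the lemma as an immediate corollary of the preceding discussion: the decomposition $\InvariantSetDiagonal=\Orbit_1\cup\dots\cup\Orbit_d$ into disjoint cosets of $\OrbitSpace$ gives $d\,\abs{\OrbitSpace}=\abs{\InvariantSetDiagonal}$, so $d=1$ iff the two cardinalities agree. Your extra treatment of $\InvariantSetDiagonal=\emptyset$ is harmless but unnecessary here, since the section's standing assumption that $G$ is a (regular) XS-stabilizer group already guarantees a stabilized state and hence $\InvariantSetDiagonal\neq\emptyset$.
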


By theorem \ref{thm:monomial}, for each vector $\vec{\lambda}\in\InvariantSetDiagonal$ there exists a state
\begin{equation}
    \label{eq:stab_state}
    \ket{\psi}
    =\smashoperator[l]{\sum_{x\in
                             \mathbb{Z}_2^t}}
     g(x)\mkern2mu
     \ket{x+
          \vec{\lambda}_1,
          W
          x+
          \vec{\lambda}_2}.
\end{equation}
stabilized by $G$, where $g(x)$ is some function that satisfies $|g(x)|=\abs{\OrbitSpace}^{-\frac{1}{2}}$ for all $x$ and where we have partitioned $\vec{\lambda}=(\vec{\lambda}_1, \vec{\lambda}_2)$ with $\vec{\lambda}_1$ representing the first $t$ components and $\vec{\lambda}_2$ the last $n-t$ components of $\lambda$.
By performing the substitution $x\mapsto x+\vec{\lambda}_1$ and denoting $f(x)\coloneqq g(x+\vec{\lambda}_1)$ and $\vec{\mu}\coloneqq\vec{\lambda}_2 + W\vec{\lambda}_1$, we find
\begin{equation}
    \label{eq:the W matrix}
    \ket{\psi}
    =\smashoperator[l]{\sum_{x\in
                             \mathbb{Z}_2^t}}
     f(x)\mkern2mu
     \ket{x,
          W
          x+
          \vec{\mu}}.
\end{equation}

\begin{lemma}
    \label{thm:compute_form_psi}
    Suppose the state $|\psi\rangle$ of the form~\eqref{eq:stab_state} is stabilized by a regular XS-stabilizer group $G$.
    Then based on the generators $g_j$ of $G$ and the string $\vec{\lambda}$, the following data in equation~\eqref{eq:the W matrix} can be computed efficiently:  (i) the matrix $W$ and the string $\vec{\mu}$; (ii) the function $f(x)$.
    Moreover, we can efficiently find a complete set of stabilizers that uniquely stabilize $|\psi\rangle$.
\end{lemma}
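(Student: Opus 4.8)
The plan is to establish parts (i) and (ii) by explicitly unpacking what the stabilizer conditions $g_j\ket{\psi}=\ket{\psi}$ impose on the amplitudes, and then to derive the complete stabilizer set as a bookkeeping corollary.

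First, for part (i): the matrix $W$ and the string $\vec\mu$ are determined purely by the support of $\ket{\psi}$, not its phases. The support is the orbit $\Orbit_{\vec\lambda}=\vec\lambda+\OrbitSpace$, so I would extract $W$ directly as the $(n-t)\times t$ matrix whose $i$-th column is $\vec c_i$, read off from the normal form of the generators $g_1,\dots,g_t$ in Corollary~\ref{cor:normal_form_regular}; this is immediate. The string $\vec\mu=\vec\lambda_2+W\vec\lambda_1$ is then a single matrix-vector product over $\mathbb{Z}_2$. Both steps are $\poly(n,m)$.

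Second, and this is the substantive part, I would compute $f(x)$ by propagating the stabilizer equations. Fix $x\in\mathbb{Z}_2^t$. Since $\vec a_j=(\vec e_j,\vec w_j)$ for $j\le t$, applying $g_j=\alpha^{s_j}X(\vec e_j,\vec w_j)S(\vec b_j)$ to the basis ket $\ket{x,Wx+\vec\mu}$ flips exactly the $j$-th of the first $t$ coordinates (since $\vec e_j$ is canonical) and flips the coordinates indicated by $\vec w_j$ among the last $n-t$; the diagonal part $S(\vec b_j)$ contributes a phase $i^{\vec b_j\cdot(x,Wx+\vec\mu)}$ and $\alpha^{s_j}$ a constant phase. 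Matching amplitudes of $\ket{\psi}=g_j\ket{\psi}$ on the ket indexed by $x\oplus\vec e_j$ gives a recursion of the form $f(x\oplus\vec e_j)=\omega_j(x)\,f(x)$ where $\omega_j(x)$ is an explicit root of unity depending on $x$ through a linear function (the exponent of $i$). Starting from $f(\vec 0)$ — which is fixed up to the global phase and the normalization $\abs{\OrbitSpace}^{-1/2}$ — one reaches any $x$ by flipping the bits in its support one at a time, multiplying the corresponding $\omega_j$'s along the way. Using Lemma~\ref{lem:parity to high order} to convert the accumulated parities $x_1\oplus\cdots\oplus x_k$ appearing in the products into honest polynomials, one obtains a closed form $f(x)=\alpha^{l(x)}i^{q(x)}(-1)^{c(x)}$, i.e.\ $f\in\CovariantFunctions$, with the coefficients of $l,q,c$ computable in $\poly(n,m)$ time by tracking the recursion symbolically. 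I would also note that the result is independent of which path through the hypercube one chooses — this consistency is guaranteed because $\ket{\psi}$ is assumed to exist and is stabilized by $G$, so no separate verification is needed.

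Third, for the complete stabilizer set: once $f$, $W$, $\vec\mu$ are in hand, $\ket{\psi}$ is pinned down. To produce generators that stabilize it uniquely, I would take the $g_1,\dots,g_m$ already given (which handle the ``$X$-type'' and diagonal directions) together with $t$ additional $Z$-type operators enforcing the affine constraint $\{(x,y):y=Wx+\vec\mu\}$ on the support — namely, for each of the $n-t$ rows of $[W\,|\,I_{n-t}]$ a generator $(-1)^{\mu_r}Z(\text{row}_r)$ — and finally, if the phase function $f$ is nontrivial, diagonal operators of the form $\alpha^{?}S(\cdot)$ or $(-1)^{?}Z(\cdot)$ that act as the identity on the support and whose eigenvalue conditions encode the cubic/quadratic structure of $f$; concretely one reads these off from the generators of $\CovariantFunctions$ in Remark~\ref{rem:covariant_functions_generators}, restricted to the subspace $\OrbitSpace$. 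Counting dimensions via Lemma~\ref{thm:dimension} confirms the resulting group stabilizes a one-dimensional space. All of this is polynomial-time.

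The main obstacle I anticipate is the third step: showing that one can always realize the phase function $f$ — in particular its genuinely cubic part $(-1)^{c(x)}$ — using tensor-product operators from $\PauliSGroupMany$ that leave the support invariant, and bounding the number of such operators by $\poly(n)$. The computation of $f$ itself (step two) is a mechanical, if slightly tedious, exercise in carrying linear exponents through products and invoking Lemma~\ref{lem:parity to high order}; the delicate point is translating the abstract covariant-function description of the amplitudes back into a concise generating set of honest Pauli-S operators.
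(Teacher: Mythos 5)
Your parts (i) and (ii) are essentially the paper's own argument: $W$ and $\vec{\mu}$ are read off from the normal form and from $\vec{\mu}=\vec{\lambda}_2+W\vec{\lambda}_1$, and the paper computes $f(x)$ exactly as you do, as the phase produced when $g_t^{x_t}\cdots g_1^{x_1}$ is applied to $\ket{0,\vec{\mu}}$ (your bit-by-bit recursion through the hypercube, with Lemma~\ref{lem:parity to high order} converting parities into polynomials). That much is correct and efficient.

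The genuine gap is in your third step. A diagonal operator $D$ stabilizes $\sum_v \alpha_v\ket{v}$ if and only if $D\ket{v}=\ket{v}$ for every $v$ in the support; it is completely blind to the relative phases $\alpha_v$. Hence the extra generators you propose -- diagonal operators that \enquote{act as the identity on the support} and whose \enquote{eigenvalue conditions encode the cubic/quadratic structure of $f$} -- cannot do any work: acting as the identity on the support means they impose no constraint whatsoever, and no diagonal operator can ever pin down $f$. The \enquote{main obstacle} you flag (realizing the cubic part of $f$ by support-preserving diagonal Pauli-S operators) is therefore not an obstacle to be overcome but a sign the approach is misdirected; worse, adjoining $S$-containing diagonal operators would generally destroy regularity of the group, which the construction is meant to preserve. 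The paper's resolution is that no phase-encoding operators are needed at all: one adjoins only the $n-t$ Z-type operators $D_j=(-1)^{\vec{z}_j^{\,T}\vec{\lambda}}\,Z(\vec{z}_j)$ for a basis $\{\vec{z}_j\}$ of $\OrbitSpace^{\perp}$ (your $(-1)^{\mu_r}Z(\cdot)$ operators built from the rows of $[\,W\,|\,\IdentityMatrix_{n-t}\,]$ are exactly these, and there are $n-t$ of them, not $t$). The resulting group is still regular, its diagonal subgroup fixes precisely the single coset $\OrbitSpace+\vec{\lambda}$, and then Theorem~\ref{thm:monomial}(b) (equivalently the coset-counting argument behind Lemma~\ref{thm:dimension}) gives dimension one: the phases along the orbit are already forced by the non-diagonal generators $g_1,\dots,g_t$, which belong to the group. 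So your appeal to Lemma~\ref{thm:dimension} is the right instinct, but it must be applied to the diagonal subgroup of the enlarged group -- where it already succeeds without your extra operators -- and your write-up omits exactly this verification that the fixed-point set of the diagonal subgroup collapses to one coset.
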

\begin{proof}
    That the matrix $W$ and the string $\vec{\mu}$ can be computed efficiently was shown in the argument above the lemma.
    In order to prove that the function~$f(x)$ can be computed efficiently, we note that $g_j\mkern2mu\ket{\psi} =\ket{\psi}$ for all $1\leq j\leq t$ and thus
    \begin{equation}
        g_t^{x_t}
        \cdots
        g_1^{x_1}
        \ket{\psi}
        =\ket{\psi}.
    \end{equation}
    Comparing the terms with $\ket{x}$ on both sides, we have
    \begin{equation}
        \label{eq:calculating phases by applying g_first}
        g_t^{x_t}
        \cdots
        g_1^{x_1}
        \ket{0,
             \vec{\mu}}
        =f(x)\mkern2mu
         \ket{x,
              W
              x+
              \vec{\mu}}.
    \end{equation}
    This shows that $f(x)$ can be computed by computing the phase which appears when computing $g_t^{x_t}\cdots g_1^{x_1}\ket{0,\vec{\mu}}$.
    Finally, we construct a complete stabilizer of $|\psi\rangle$, showing that this state is an XS-stabilizer state.
    
    We work with the representation \eqref{eq:stab_state} of $|\psi\rangle$, which implies that $|\psi\rangle$ has the form
    \begin{equation}
        \label{eq:psi_support_coset}
        \ket{\psi}
        =\sum_{v\in
               \OrbitSpace+
               \vec{\lambda}}
         \alpha_v
         \ket{v}
    \end{equation}
    (for some coefficients $\alpha_v$) where we recall the definition of $\OrbitSpace\subset\mathbb{Z}_2^n$, which is the linear subspace of all pairs $(x, Wx)$.
    The orthogonal complement of~$\OrbitSpace$ is the $(n-t)$-dimensional space of all pairs $(W^Ty, y)$ with $y\in \mathbb{Z}_2^{n-t}$.
    A basis $\{\vec{z}_1, \dots, \vec{z}_{n-t}\}$ of the latter space can be computed efficiently.
    It follows that any string $v\in \mathbb{Z}_2^n$ belongs to $\OrbitSpace + \vec{\lambda}$ if and only if $\vec{z}_j^T v = \vec{z}_j^T \vec{\lambda}$.
    Define operators
    \begin{equation}
        \label{eq:D_j}
        D_j
        \coloneqq
         (-1)^{\vec{z}_j\mkern1mu^T
               \vec{\lambda}}\mkern2mu
         Z(\vec{z}_j)
        \qquad
        j
        =1,
         \dots,
         n-
         t.
    \end{equation}
    Then for any string $v\in \mathbb{Z}_2^n$, we have
    \begin{equation}
        v\in
        \OrbitSpace+
        \vec{\lambda}
        \quad
        \Leftrightarrow
        \quad
        \text{$D_j\mkern2mu\ket{v}=\ket{v}$ for every $j=1,\dots,n-t$}.
    \end{equation}
    We now supplement the initial generators $\{g_1, \dots, g_m\}$ of $G$ with the operators $D_j$.
    Denote the resulting set by $\Generators'$ and let $G'$ denote the group generated by $\Generators'$.
    Clearly, $G'$ is regular since we supplemented the initial group $G$, which was regular, with Z-type operators.
    We now claim that $G'$ has $|\psi\rangle$ as its unique stabilized state.
    First, combining \eqref{eq:psi_support_coset} with \eqref{eq:condition_support} shows that $|\psi\rangle$ is stabilized by every operator $D_j$.
    This shows that $|\psi\rangle$ is stabilized by $G'$.
    Second, let $\DiagonalSubgroup'$ be the diagonal subgroup of $G'$.
    Let $U$ be the set of all $u\in \mathbb{Z}_2^n$ satisfying $D|u\rangle = |u\rangle$ for all $D\in\DiagonalSubgroup'$.
    According to the argument above lemma \ref{thm:dimension}, $U$ is the disjoint union of cosets of~$\OrbitSpace$; the number of such cosets is the dimension of the code stabilized by $G'$.
    We show that in fact $U=\OrbitSpace+\vec{\lambda}$, implying that this dimension is one, so that $|\psi\rangle$ is the unique stabilized state.
    Since each $D_j$ belongs to~$\DiagonalSubgroup'$, every $u\in U$ must satisfy $D_j|u\rangle = |u\rangle$ for all $j=1, \dots, n-t$.
    With \eqref{eq:condition_support} this shows that $U\subseteq\OrbitSpace+\vec{\lambda}$.
    Furthermore, since $D|\psi\rangle= |\psi\rangle$ for every $D\in\DiagonalSubgroup'$ and since $|\psi\rangle$ has the form \eqref{eq:psi_support_coset}, it follows from that every $u\in\OrbitSpace+\vec{\lambda}$ satisfies $D|u\rangle = |u\rangle$ for all $D\in\DiagonalSubgroup'$.
    This shows that $\OrbitSpace + \vec{\lambda}\subseteq U$ and thus $\OrbitSpace + \vec{\lambda}= U$.
\end{proof}

Next we determine the form of the function~$f$ in more detail.
This will be e.g. useful for constructing a quantum circuit to generate $|\psi\rangle$ (see section \ref{sec:algorithms}).
For simplicity of notation, we rescale the state $|\psi\rangle$ by multiplying it with a suitable constant so that we can assume $f(0)=1$.
We will show that then $f$ belongs to the class $\CovariantFunctions$ introduced in section \ref{sec:cubic}.
We recall the identity~\eqref{eq:calculating phases by applying g_first}.
To show that $f\in \CovariantFunctions$, we will compute the left hand side of the equation, and we will do this by using induction on $k$.
As for the trivial step of the induction, if $k=1$, it is easy to see that $f\in \CovariantFunctions$.
Set $x[k]\coloneqq(x_1,\dots,x_k,0,\dots,0)$, and assume that
\begin{equation}
    g_k^{x_k}
    \cdots
    g_1^{x_1}\mkern2mu
    \ket{0,
         \vec{\mu}}
    =f(x[k])\mkern2mu
     \ket{x[k],
          W
          x[k]+
          \vec{\mu}}
\end{equation}
with $f\in \CovariantFunctions$.
If we apply an $S$ operator to the $l$-th qubit of $|x[k]\rangle$, we simply obtain the phase $i^{x_l}$.
Second, the $j$-th bit of $W x[k]+\vec{\mu}$ is
\begin{equation}
    W_{j
       1}
    x_1\oplus
    \dots\oplus
    W_{j
       k}
    x_k\oplus
    \mu_j.
\end{equation}
Therefore, if we apply an $S$ operator on the corresponding qubit, we will obtain the phase
\begin{equation}
    \mathrm{i}^{W_{j
                   1}
                x_1\oplus
                \dots\oplus
                W_{j
                   k}
                x_k\oplus
                \mu_j}
    =\mathrm{i}^{W_{j
                    1}
                 x_1+
                 \dots+
                 W_{j
                    k}
                 x_k+
                 \mu_j}\mkern2mu
     (-1)^{q(x[k])},
\end{equation}
for some quadratic polynomial $q(x[k])$ (which contains all possible products of $W_{jl}x_l$ and $\lambda_{j}$, where we have applied lemma \ref{lem:parity to high order}.
It is then easy to check that
\begin{equation}
    g_{k+
       1}^{x_{k+
              1}}\mkern2mu
    \ket{x[k],
         W
         x[k]+
         \vec{\mu}}
    =\alpha^{l(x[k+
                 1])}\mkern2mu
     \mathrm{i}^{q(x[k+
                     1])}\mkern2mu
     (-1)^{c(x[k+
               1])}\mkern2mu
     \ket{x[k+
            1],
          W
          x[k+
            1]+
          \vec{\mu}},
\end{equation}
where $l$, $q$ and $c$ are linear, quadratic and cubic polynomials in~$x_1$, …, $x_{k+1}$ respectively.
So we can conclude that
\begin{equation}
    \label{eq:calculating phases by applying g}
    g_t^{x_{k+
            1}}
    \cdots
    g_1^{x_1}\mkern2mu
    \ket{0,
         \vec{\mu}}
    =f(x[k+
         1])\mkern2mu
     \ket{x[k+
            1],
          W
          x[k+
            1]+
          \vec{\mu}}
\end{equation}
for some $f\in \CovariantFunctions$.

In the argument above, for any $j$ we can also check that in the phase $f(x)= \alpha^{l(x)}i^{q(x)} (-1)^{c(x)}$, the part that depends on both $\mu_{j}$ and $x$ is
\begin{equation}
    \label{eq:coefficient depend on lambda}
    \mathrm{i}^{l(x)\mkern2mu
                \mu_j}\mkern2mu
    (-1)^{q(x)\mkern2mu
          \mu_j}
\end{equation}
This will be useful for finding the logical operators in section \ref{sec:logical_operators}.

Summarizing, we have shown:

\begin{theorem}
    \label{thm:form_of_stab_states}
    Every regular XS-stabilizer state on $n$~qubits has the form
    \begin{equation}
        \label{eq:form_of_stab_states}
        \ket{\psi}
        =\frac{1}
              {\sqrt{2^t}}
         \sum_{x\in
               \mathbb{Z}_2^t}
         f(x)\mkern2mu
         \ket{x,
              W
              x+
              \vec{\mu}}
         \qquad
         \text{with $f(x)=\alpha^{l(x)}\mkern2mu\mathrm{i}^{q(x)}\mkern2mu(-1)^{c(x)}$}.
    \end{equation}
    up to a permutation of the qubits.
    Moreover, the polynomials~$l(x)$, $q(x)$ and~$c(x)$ can be computed efficiently.
\end{theorem}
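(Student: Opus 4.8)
The plan is to assemble the statement from the explicit construction already carried out in Section~\ref{sec:constructing_basis}, so that the proof amounts to collecting the relevant pieces. First I would take $\ket{\psi}$ to be an arbitrary regular XS-stabilizer state with regular stabilizer group $G$ and apply Corollary~\ref{cor:normal_form_regular} to replace the generators of $G$ by a normal-form generating set $g_1,\dots,g_t,g_{t+1},\dots,g_m$, with $g_j=\alpha^{s_j}X(\vec{e}_j,\vec{w}_j)S(\vec{b}_j)$ for $j\le t$ and $g_j=(-1)^{s_j}Z(\vec{c}_j)$ for $j>t$, and $\DiagonalSubgroup=\gen{g_{t+1},\dots,g_m}$. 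This step relabels the qubits via a permutation, which is what accounts for the qualifier ``up to a permutation of the qubits'' in the statement. Since $\ket{\psi}$ is uniquely stabilized, Lemma~\ref{thm:dimension} gives $\abs{\InvariantSetDiagonal}=\abs{\OrbitSpace}=2^t$. The construction based on Theorem~\ref{thm:monomial}(b) then yields $\ket{\psi}$ in the form~\eqref{eq:stab_state}, and the change of variables $x\mapsto x+\vec{\lambda}_1$ described below~\eqref{eq:stab_state} rewrites it as~\eqref{eq:the W matrix}, i.e. $\ket{\psi}=\sum_{x\in\mathbb{Z}_2^t}f(x)\ket{x,Wx+\vec{\mu}}$; the common modulus $\abs{f(x)}=\abs{\OrbitSpace}^{-1/2}=2^{-t/2}$ guaranteed by Theorem~\ref{thm:monomial}(b) produces the prefactor $1/\sqrt{2^t}$.

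Next I would pin down the form of $f$. After rescaling so that $f(0)=1$, the identity~\eqref{eq:calculating phases by applying g_first} says that $f(x)$ is exactly the phase accumulated when $g_t^{x_t}\cdots g_1^{x_1}$ is applied to $\ket{0,\vec{\mu}}$. I would prove by induction on $k$ that $g_k^{x_k}\cdots g_1^{x_1}\ket{0,\vec{\mu}}=f(x[k])\ket{x[k],Wx[k]+\vec{\mu}}$ with $f\in\CovariantFunctions$, writing $x[k]=(x_1,\dots,x_k,0,\dots,0)$. The base case $k=1$ is immediate. For the inductive step, applying $g_{k+1}^{x_{k+1}}$ contributes the scalar $\alpha^{s_{k+1}x_{k+1}}$ (a linear phase), flips the appropriate bits of the ket through its $X$-part, and for each qubit on which its $S$-part acts introduces a phase $i^{(\text{parity of a substring of }x_1,\dots,x_{k+1}\text{ and }\vec{\mu})}$; by Lemma~\ref{lem:parity to high order} each such phase expands as $i^{l(x[k+1])}(-1)^{q(x[k+1])}$, raising the polynomial degree by at most one. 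Since $\CovariantFunctions$ is closed under multiplication, $f(x[k+1])$ again lies in $\CovariantFunctions$; after $t$ steps this gives $f\in\CovariantFunctions$, i.e. $f(x)=\alpha^{l(x)}i^{q(x)}(-1)^{c(x)}$.

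Finally I would address efficiency. The matrix $W$ and the string $\vec{\mu}$ are computed exactly as in Lemma~\ref{thm:compute_form_psi}(i). The polynomials $l,q,c$ are built up along the induction above: at each of the $t$ steps one appends a linear phase and, for each of at most $n$ qubits, a bounded-degree expansion supplied by Lemma~\ref{lem:parity to high order}, so by Remark~\ref{rem:covariant_functions_generators} the running product stays a polynomial of degree at most $3$ with $\poly(n)$ monomials; the whole computation therefore runs in $\poly(n,m)$ time. The only genuinely delicate point is the bookkeeping in the inductive step — checking that applying $g_{k+1}$, whose $S$-part depends on the \emph{updated} second-register bits $Wx[k]+\vec{\mu}$, neither pushes the degree past $3$ nor leaves $\CovariantFunctions$ — but this is precisely what Lemma~\ref{lem:parity to high order} together with the Covariance Lemma of Section~\ref{sec:cubic} is designed to control.
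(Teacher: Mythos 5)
Your proposal is correct and follows essentially the same route as the paper: normal form via Corollary~\ref{cor:normal_form_regular} (which supplies the qubit permutation), the orbit/coset analysis and Theorem~\ref{thm:monomial}(b) to obtain the form~\eqref{eq:the W matrix} with $\abs{f}=2^{-t/2}$, and the induction of equation~\eqref{eq:calculating phases by applying g_first} with Lemma~\ref{lem:parity to high order} and closure of~$\CovariantFunctions$ under multiplication to get $f=\alpha^{l}\mathrm{i}^{q}(-1)^{c}$, with efficiency handled as in Lemma~\ref{thm:compute_form_psi}. The only loose phrasing is in the inductive step, where the phase from the $S$-part is really $\bigl(\mathrm{i}^{\text{parity}}\bigr)^{x_{k+1}}$, so the exponents get multiplied by $x_{k+1}$ after expanding; your remark about the degree rising by one covers this, and it matches the paper's own treatment.
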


Below in theorem~\ref{thm:non-regular state has regular group} we will show that every XS-stabilizer state affords a regular stabilizer group, so that the above theorem in fact applies to all XS-stabilizer states.

It is interesting to compare theorem \ref{thm:form_of_stab_states} with a similar result for Pauli stabilizer states \cite{Dehaene2003the}: every Pauli stabilizer state also has the form~\eqref{eq:form_of_stab_states}, but where
\begin{equation}
    \label{eq:form_of_pauli_stab_state}
    f(x)
    =\mathrm{i}^{l(x)}\mkern2mu
     (-1)^{q(x)},
\end{equation}
i.e. there are no cubic terms $(-1)^{c(x)}$, no quadratic terms $i^{q(x)}$ and no linear terms $\alpha^{l(x)}$.
It is also known that every state~\eqref{eq:form_of_stab_states} with $f$ having the form~\eqref{eq:form_of_pauli_stab_state} is a valid Pauli stabilizer state.
A similar statement does not hold for XS-stabilizer states i.e. not all functions $f\in\CovariantFunctions$ are allowed.
We will revisit this property in section \ref{sec:stronger characterization}.

\subsection{Logical Operators}
\label{sec:logical_operators}

Logical operators play a very important role in understanding the Pauli stabilizer formalism and performing fault tolerant quantum computation.
They represent the $X$ and $Z$ operators on the encoded qubits.
Since it is not clear yet how to define logical operators for an XS-stabilizer group, in this section we will just construct a set of operators that preserve the code space $\RegularXSCode$ (which we assume to have dimension~$2^s$) and obey the same commutation relations as $\{X_j, Z_j\}$ on $s$ qubits.

We already know there is a set of the basis of a regular XS-stabilizer code $\RegularXSCode$ which has the form
\begin{equation}
    \ket{\psi_j}
    =\smashoperator[l]{\sum_{x\in
                             \mathbb{Z}_2^t}}
     f_j(x)\mkern2mu
     \ket{x,
          W
          x+
          \vec{\mu}_j}.
\end{equation}
Note that $\vec{\mu}_j$ can be viewed as elements of the quotient space~$\InvariantSetDiagonal/\OrbitSpace$ (for definition of~$\InvariantSetDiagonal$ and~$\OrbitSpace$ see section~\ref{sec:constructing_basis}), which is an affine subspace.
So we know that up to some permutation of qubits, each $\vec{\mu}_j$ has the form
\begin{equation}
    (y_1,
     \dots,
     y_s,
     \Lambda
     y+
     \vec{\lambda}).
\end{equation}
Here $\Lambda$ and $\vec{\lambda}$ can be found by Gaussian elimination, and $s$ is the dimension of the space~$\InvariantSetDiagonal/\OrbitSpace$.
So for each $\mu_j$, we can find a corresponding $y$.
Assume we do not need to do permutation of qubits in the above step, the vector $\ket{x,Wx+\vec{\mu}_j}$ becomes
\begin{equation}
    \ket{x,
         W
         x+
         (y,
          \Lambda
          y+
          \vec{\lambda})}.
\end{equation}
For simplicity of notation, we will assume $\vec{\lambda}=0$.
The case $\vec{\lambda}\neq 0$ can be dealt with similarly by the following procedure.
We will show how to find operators $\bar{Z}_k$ and $\bar{X}_k$ for $k=1,\dots,s$, such that they act on the state $\ket{\psi(y)}\equiv\ket{\psi_j}$ by the following
\begin{align}
    \bar{Z}_k\mkern2mu
    \ket{\psi(y)}
    & =(-1)^{y_k}\mkern2mu
       \ket{\psi(y)} \\
    \bar{X}_k\mkern2mu
    \ket{\psi(y)}
    & =\ket{\psi(y+
                 \vec{e}_k)},
\end{align}
where $\vec{e}_k$ is the $k$-th canonical basis vector.
It is then easy to see within the space spanned by $\{\ket{\psi_j}\}$, we have $\bar{X}_j\bar{Z}_k=(-1)^{\delta(j,k)}\bar{Z}_k\bar{X}_j$, where $\delta(j,k)$ is the Kronecker delta function.
The $\bar{Z}_k$ can be found straightforwardly.
For example, we can construct $\bar{Z}_k$ by noticing
\begin{equation}
    (-1)^{y_k}
    =(-1)^{\sum_l
           W_{k
              l}
           x_l+
           y_k}
     \prod_l
     (-1)^{W_{k
              l}
           x_l},
\end{equation}
where $W_{kl}$ is the matrix element of $W$.
In the r.h.s of the above equation, the term $(-1)^{\sum_l W_{kl}x_l+y_k}$ can be achieved by applying $Z$ on the $k$-th qubit in the block $\ket{Wx+(y,\Lambda y)}$, and the terms $(-1)^{W_{kl}x_l}$ can be obtained by applying $Z$ on the corresponding qubits in the block $\ket{x}$.

The construction of $\bar{X}_k$ is more involved and different from the one used in the Pauli stabilizer formalism.
To do the map from $\ket{\psi(y)}$ to $\ket{\psi(y+\vec{e}_k)}$, we need to flip $y_k$ and the corresponding qubits of $\Lambda y$ in the block $\ket{Wx+(y,\Lambda y)}$.
This can be done by an X-type operator, which we will denote as $\bar{X}'$.
However, we also need to change the coefficient functions $f_j(x)$ correspondingly.
By changing $y$ to $y+\vec{e}_k$, we will change $\vec{\mu}_j$ to $\vec{\mu}_{j'}=\mu_j+(\vec{e}_k, \Lambda \vec{e}_k)$.
Assume that $\bar{X}'D$ achieves the task
\begin{equation}
    \bar{X}'
    D
    \sum_{x\in
          \mathbb{Z}_2^t}
    f_j(x)\mkern2mu
    \ket{x,
         W
         x+
         \vec{\mu}_j}
    =\smashoperator[l]{\sum_{x\in
                             \mathbb{Z}_2^t}}
     f_{j'}(x)\mkern2mu
     \ket{x,
          W
          x+
          \vec{\mu}_{j'}}.
\end{equation}
Then $D$ can be found by noticing that $f_j(x)$ depends on $\vec{\mu}_j$ by the relation~\eqref{eq:coefficient depend on lambda} (note that in relation~\eqref{eq:coefficient depend on lambda}, $\mu_j$ is the $j$-th coordinate of $\vec{\mu}$).
We can compute straightforwardly that
\begin{equation}
    \frac{f_{j'}(x)}
         {f_j(x)}
    =\mathrm{i}^{l(x)}\mkern2mu
     (-1)^{q(x)}
     \prod_h
     (-1)^{\mu_{j
                h}\mkern2mu
           l(x)},
\end{equation}
where $\mu_{jh}$ is the $h$-th coordinate of $\vec{\mu}_j$.
The terms $\mathrm{i}^{l(x)}\mkern2mu(-1)^{q(x)}$ can be obtained by including $S$ and $\ControlledZ$ on the corresponding qubits in $D$.
The terms $(-1)^{\mu_{jh}\mkern2mu l(x)}$ can also be obtained by $Z$ and $\ControlledZ$ by noticing $\mu_{jh}$ is the parity of some qubits in $\ket{x, Wx+\vec{\mu}_j}$.
Thus we have found $\bar{X}=\bar{X}'D$.

\subsection{A Stronger Characterization}
\label{sec:stronger characterization}

Consider an XS-stabilizer state $|\psi\rangle$.
We have shown that $|\psi\rangle$ has the form given in theorem \ref{thm:form_of_stab_states}.
However, not all covariant phases $f\in \CovariantFunctions$ are valid amplitudes.
Here we characterize precisely the subclass of valid amplitudes.

First we note that the group $\PauliSGroupMany$ is closed under conjugation by~$X$, $\sqrt{S}$ and $\ControlledZ$, hence we can make two assumptions about the state:
\begin{enumerate}
    \item We can assume $\vec{\mu}=0$ in theorem \ref{thm:form_of_stab_states}, since we can apply $X$ to the corresponding qubits and update the stabilizers by conjugation.
    \item We can restrict ourselves to studying covariant phases of the form
        \begin{equation}
            \label{eq:nontrivial part of coefficients}
            f(x)
            =\mathrm{i}^{\sum_{j
                               <k}
                         \zeta_{j
                                k}
                         x_j
                         x_k}\mkern2mu
             (-1)^{\sum_{j
                         <k
                         <l}
                   \zeta_{j
                          k
                          l}
                   x_j
                   x_k
                   x_l}
        \end{equation}
        where $\zeta_{jk}$ and $\zeta_{jkl}$ take values in $\mathbb{Z}_2$.
        Indeed, if $f$ is a valid amplitude, then so is
        \begin{equation}
            f(x)\mkern2mu
            \alpha^{l(x)}\mkern2mu
            \mathrm{i}^{l'(x)}\mkern2mu
            (-1)^{q(x)}
        \end{equation}
        for all linear polynomials~$l$, $l'$ and quadratic polynomials~$q$.
        After all, we can always generate these additional phases by applying suitable combinations of the gates $\sqrt{S}$ and $\ControlledZ$ to the state $|\psi\rangle$.
\end{enumerate}
We will treat this class of $f$ as a vector space $V_1$ over $\mathbb{Z}_2$,
with a natural basis given by the functions $i^{x_jx_k}$ and $(-1)^{x_jx_kx_l}$.
Similarly, we consider the set of all functions of the form
\begin{equation}
    (-1)^{\sum_{j
                <k}
          \eta_{j
                k}
          x_j
          x_k}
\end{equation}
which can also be viewed as a vector space $V_2$ over $\mathbb{Z}_2$ with basis functions $(-1)^{x_jx_k}$.
We define a set of linear mappings $\{F_h\mid h=1, \dots, t\}$ from $V_1$ to $V_2$ by the rules
\begin{equation}
    \label{eq:derivative of -1 cubic}
    F_h\mkern2mu
    (-1)^{x_j
          x_k
          x_l}
    =\begin{cases}
         (-1)^{x_k
               x_l}
         & \text{if $h=j$}, \\
         (-1)^{x_j
               x_l}
         & \text{if $h=k$}, \\
         (-1)^{x_j
               x_k}
         & \text{if $h=l$}, \\
         1
         & \text{otherwise},
     \end{cases}
\end{equation}
and
\begin{equation}
    \label{eq:derivative of i quadratic}
    F_h\mkern2mu
    \mathrm{i}^{x_j
                x_k}
    =\begin{cases}
         (-1)^{x_j
               x_k}
         & \text{if $h=j$ or $h=k$}, \\
         1
         & \text{otherwise}.
     \end{cases}
\end{equation}

Recall the matrix $W$ that appears in equation~\eqref{eq:the W matrix}.
Let $\vec{w}_j= (w_{j1}, \dots, w_{jt})$ denote the $j$-th row of $W$, for every $j=1,\dots,n-t$.
We define the quadratic functions $\gamma_j\in V_2$ by
\begin{equation}
    \label{eq:definition of gamma_j}
    \gamma_j(x)
    =(-1)^{\sum_{k
                 <l}
           w_{j
              k}
           w_{j
              l}
           x_k
           x_l}.
\end{equation}
This definition stems from the fact that when we apply the $S$ gate on the single-qubit standard basis state described by $\ket{w_{j1}x_1\oplus\dots\oplus w_{jt}x_t}$, we obtain the phase
\begin{equation}
    \label{eq:discussion_gamma_j}
    \mathrm{i}^{w_{j
                   1}
                x_1\oplus
                \dots\oplus
                w_{j
                   t}
                x_t}
    =\mathrm{i}^{\sum_k
                 w_{j
                    k}
                 x_k}\mkern2mu
     \gamma_j(x),
\end{equation}
where we have used lemma \ref{lem:parity to high order}.
Then we set
\begin{equation}
    \label{eq:definition of Gamma}
    \Gamma
    =\operatorname{span}
     \set{\gamma_j}.
\end{equation}
We will prove the following theorem

\begin{theorem}
    \label{thm:condition for f}
    Consider any state of the form
    \begin{equation}
        \ket{\psi}
        =\frac{1}
              {\sqrt{2^t}}
         \sum_{x\in
               \mathbb{Z}_2^t}
         f(x)\mkern2mu
         \ket{x,
              W
              x}
    \end{equation}
    with $f\in V_1$.
    If $f$ satisfies
    \begin{equation}
        \label{eq:condition for f}
        F_h(f)\in
        \Gamma
        \qquad
        \text{for all $1\leq h\leq t$}
    \end{equation}
    then $|\psi\rangle$ is an XS-stabilizer state.
\end{theorem}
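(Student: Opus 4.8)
The plan is to exhibit explicitly a regular XS-stabilizer group that has $|\psi\rangle$ as its unique stabilized state. I would first build, for each direction $h=1,\dots,t$, a single generator implementing the shift $x\mapsto x\oplus\vec{e}_h$ on the support of $|\psi\rangle$, namely one of the form
\[
g_h = X(\vec{e}_h,\,W\vec{e}_h)\,S(\vec{b}_h),
\]
where $W\vec{e}_h$ is the $h$-th column of $W$ (so the $X$-part sends $|x,Wx\rangle$ to $|x\oplus\vec{e}_h,\,W(x\oplus\vec{e}_h)\rangle$, staying inside the support $\OrbitSpace$) and $\vec{b}_h$ is a still-to-be-chosen vector of $S$-exponents. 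Applying $g_h$ to $|\psi\rangle$ and reindexing the sum, one checks that $g_h|\psi\rangle=|\psi\rangle$ is equivalent to demanding that the diagonal phase that $S(\vec{b}_h)$ produces on the basis state $|x,Wx\rangle$ equal the discrete derivative $f(x\oplus\vec{e}_h)/f(x)$ for all $x\in\mathbb{Z}_2^t$.

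Next I would compute both sides of this requirement and match them. On one hand, since $f\in V_1$, a direct expansion (substituting $x_h\mapsto 1-x_h$) shows that $f(x\oplus\vec{e}_h)/f(x)$ equals $F_h(f)(x)$ times a purely linear phase $\mathrm{i}^{\ell(x)}$, where $F_h$ is the map of~\eqref{eq:derivative of -1 cubic}--\eqref{eq:derivative of i quadratic}; this is where one verifies the corner cases of $F_h$ on the basis monomials $\mathrm{i}^{x_jx_k}$ and $(-1)^{x_jx_kx_l}$. On the other hand, an $S$ placed on the $j$-th qubit of the second block acts on $|Wx\rangle$ by the phase $\mathrm{i}^{(Wx)_j}=\mathrm{i}^{\sum_k w_{jk}x_k}\,\gamma_j(x)$ by Lemma~\ref{lem:parity to high order} and the definition~\eqref{eq:definition of gamma_j} of $\gamma_j$, while $S$-exponents on the first block contribute only linear phases in $x_1,\dots,x_t$. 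Hence the most general diagonal phase obtainable from some $S(\vec{b}_h)$ (with exponent $1$ on a subset $J_h$ of second-block qubits and arbitrary exponents on the first block) has the form $\bigl(\textstyle\sum_{j\in J_h}\gamma_j\bigr)(x)\cdot\mathrm{i}^{\ell'(x)}$, i.e.\ an element of $\Gamma$ times an arbitrary linear $\mathrm{i}$-phase. Comparing with the required $F_h(f)(x)\,\mathrm{i}^{\ell(x)}$: the quadratic $(-1)$-parts match precisely when $F_h(f)\in\Gamma$ — which is exactly hypothesis~\eqref{eq:condition for f} — by choosing $J_h$ with $\sum_{j\in J_h}\gamma_j=F_h(f)$; the remaining linear discrepancy is then absorbed by a suitable choice of the first-block $S$-exponents; and the global phase is $\alpha^0=1$. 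This yields generators $g_1,\dots,g_t$ with $g_h|\psi\rangle=|\psi\rangle$.

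Finally I would upgrade $\gen{g_1,\dots,g_t}$ to a group stabilizing $|\psi\rangle$ \emph{uniquely}. Since the $X$-parts $(\vec{e}_h,W\vec{e}_h)$ are linearly independent over $\mathbb{Z}_2$ and $\gen{g_1,\dots,g_t}$ is an XS-stabilizer group (it fixes $|\psi\rangle\neq0$), Lemma~\ref{lemma:sufficient condition for regular group} shows it is regular. I would then supplement the generators with the Z-type operators $D_j=Z(\vec{z}_j)$, $j=1,\dots,n-t$, where $\{\vec{z}_j\}$ is a basis of the orthogonal complement of $\OrbitSpace$ — exactly as in the proof of Lemma~\ref{thm:compute_form_psi} with $\vec{\lambda}=0$. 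The resulting group $G$ is still regular (Lemma~\ref{lemma:sufficient condition for regular group} again), still stabilizes $|\psi\rangle$ (whose support lies in $\OrbitSpace$), and its diagonal subgroup forces any stabilized computational basis state into $\OrbitSpace$; by the argument preceding Lemma~\ref{thm:dimension} this gives $\InvariantSetDiagonal=\OrbitSpace$, hence $|\OrbitSpace|=|\InvariantSetDiagonal|$ and the code is one-dimensional, so $|\psi\rangle$ is an XS-stabilizer state.

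The main obstacle is the bookkeeping of the second paragraph: recognizing that, up to linear and constant phases, the discrete derivative $f(\cdot\oplus\vec{e}_h)/f(\cdot)$ is exactly $F_h(f)$, and that the quadratic $(-1)$-phases realizable by $S$-operators on the second block are exactly the span $\Gamma$ of the $\gamma_j$. Once this correspondence is set up, the hypothesis $F_h(f)\in\Gamma$ emerges as precisely the obstruction one must clear, and the rest (regularity via Lemma~\ref{lemma:sufficient condition for regular group}, uniqueness via Lemmas~\ref{thm:compute_form_psi} and~\ref{thm:dimension}) is routine.
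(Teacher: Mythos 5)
Your proposal is correct and follows essentially the same route as the paper's proof: the same X-type generators $X(\vec{e}_h,W\vec{e}_h)S(\vec{b}_h)$ whose $S$-part is fixed by matching the discrete derivative $f(x\oplus\vec{e}_h)/f(x)=\mathrm{i}^{\ell(x)}F_h(f)(x)$ against products of the $\gamma_j$ (which is exactly where $F_h(f)\in\Gamma$ enters), supplemented by Z-type operators from a basis of $\OrbitSpace^\perp$, with uniqueness obtained from the coset structure of the set fixed by the diagonal subgroup. The only cosmetic difference is that you deduce $\InvariantSetDiagonal=\OrbitSpace$ from the single inclusion plus the coset-union argument, whereas the paper checks both inclusions explicitly; this is an equivalent argument.
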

\begin{proof}
    Assuming that $f$ satisfies condition~\eqref{eq:condition for f}, we will show how to construct a set of XS-stabilizers that uniquely stabilize $|\psi\rangle$.
    Consider XS-operators
    \begin{equation}
        g_j
        =X_j
         X(\vec{a}_j)
         S(\vec{b}_j),
        \qquad
        1\leq
        j\leq
        t,
    \end{equation}
    where $X_j$ denotes the Pauli matrix $X$ acting on the $j$-th qubit, where the $X(\vec{a}_j)$ are X-type operators that only act on qubits $t+1$ to $n$ with $\vec{a}_j$ the $j$-th column of the matrix $W$.
    The strings $\vec{b}_j$ are at the moment unspecified.
    Furthermore, define
    \begin{equation}
        g_j
        =Z(\vec{c}_j)
        \qquad
        t+
        1
        \leq
         j
        \leq
         n
    \end{equation}
    where $\{\vec{c}_{t+1}, \dots, \vec{c}_n\}\subseteq\mathbb{Z}_2^n$ form a basis of the orthogonal complement of the subspace
    \begin{equation}
        \OrbitSpace
        =\setbuilder{(x,
                      W
                      x)}
                    {x\in
                     \mathbb{Z}_2^t}
        \subseteq
         \mathbb{Z}_2^n.
    \end{equation}
    For every $y, z\in\mathbb{Z}_2^n$ we have
    \begin{equation}
        Z(z)\mkern2mu
        \ket{y}
        =(-1)^{z^T
               y}\mkern2mu
         \ket{y}.
    \end{equation}
    This implies that $g_j|\psi\rangle = |\psi\rangle$ for every $j=t+1,\dots,n$.
    Next we show that, for a suitable choice of $\vec{b}_j$, the operator $g_j$ stabilizes $|\psi\rangle$ for every $j=1,\dots,t$.
    This last condition is equivalent to $X_jX(\vec{a}_j)|\psi\rangle = S(\vec{b}_j)|\psi\rangle$.
    Note that
    \begin{equation}
        X_j
        X(\vec{a}_j)\mkern2mu
        \ket{\psi}
        =\sum_x
         f(x)\mkern2mu
         \ket{x+
              e_j,
              W
              (x+
               e_j)}
        =\sum_x
         f(x+
           e_j)\mkern2mu
         \ket{x,
              W
              x}
    \end{equation}
    since $\vec{a}_j$ is the $j$-th column of $W$ and thus $\vec{a}_j = We_j$.
    Thus, the condition $g_j|\psi\rangle = |\psi\rangle$ is equivalent to
    \begin{equation}
        f(x+
          e_j)\mkern2mu
        \ket{x,
             W
             x}
        =S(\vec{b}_j)\mkern2mu
         f(x)\mkern2mu
         \ket{x,
              W
              x}.
    \end{equation}
    By using the fact that $f$ has the form~\eqref{eq:condition for f} and by applying the definition of $F_j$, it is easy to check that we have
    \begin{equation}
        \frac{f(x+
                e_j)}
             {f(x)}
        =\mathrm{i}^{l_j(x)}\mkern2mu
         (F_j\circ
          f)(x).
    \end{equation}
    for some linear function $l_j$.
    Summarizing so far, we find that $g_j|\psi\rangle = |\psi\rangle$ if and only if
    \begin{equation}
        \label{eq:S (b_j) on x}
        S(\vec{b}_j)\mkern2mu
        \ket{x,
             W
             x}
        =\mathrm{i}^{l_j(x)}\mkern2mu
         F_j(f)\mkern2mu
         \ket{x,
              W
              x}.
    \end{equation}
    Since, by assumption, we have
    \begin{equation}
        F_j(f)\in
        \Gamma,
    \end{equation}
    we can find a vector $\vec{b}_j'\in\mathbb{Z}_2^{n-t}$, such that
    \begin{equation}
        \prod_{1
               \leq
                k
               \leq
                n-
                t}
        \gamma_k^{b_{j
                     k}'}
        =F_j(f).
    \end{equation}
    This in turn means that~\eqref{eq:S (b_j) on x} is equivalent to
    \begin{equation}
        \label{eq:fincal_condition_bj}
        S(\vec{b}_j)\mkern2mu
        \ket{x,
             W
             x}
        =\mathrm{i}^{l_j(x)}
         \prod_{1
                \leq
                 k
                \leq
                 n-
                 t}
         \gamma_k(x)^{b_{j
                         k}'}
         \ket{x,
              W
              x}.
    \end{equation}
    We now claim that a string $\vec{b}_j$ satisfying this condition exists.
    To see this, first recall~\eqref{eq:discussion_gamma_j} and the surrounding discussion, which implies that
    \begin{equation}
        S(\vec{b}_j')\mkern2mu
        \ket{W
             x}
        =\mathrm{i}^{l_j'(x)}
         \prod_{1
                \leq
                 k
                \leq
                 n-
                 t}
         \gamma_k(x)^{b_{j
                         k}'}
         \ket{W
              x}
    \end{equation}
    for some linear function $l_j'$.
    Second, there exists a string $b_j''\in\{0, 1, 2, 3\}^t$ such that $S(\vec{b}_j'')|x\rangle = i^{l_j(x) - l_j'(x)}|x\rangle$.
    This shows that
    \begin{equation}
        S(\vec{b}_j)
        \coloneqq
         S(\vec{b}_j'')\otimes
         S(\vec{b}_j')
    \end{equation}
    satisfies the condition~\eqref{eq:fincal_condition_bj}.
    We have shown that the operators $g_j$ ($j=1,\dots,n$) stabilize $|\psi\rangle$.
    
    Finally we show that $|\psi\rangle$ is uniquely stabilized by these operators.
    Let $G$ be the group generated by $\{g_1, \dots, g_n\}$.
    Since the $\vec{c}_k$ form a basis of~$\OrbitSpace^\perp$, any string $v\in \mathbb{Z}_2^n$ belongs to~$\OrbitSpace$ if and only if $\vec{z}_j^T v = 0$.
    Thus we have
    \begin{equation}
        \label{eq:condition_support'}
        v\in
        \OrbitSpace
        \quad
        \Leftrightarrow
        \quad
        \text{$g_j\mkern2mu\ket{v}=\ket{v}$ for every $j=t+1,\dots,n$}.
    \end{equation}
    Let $\DiagonalSubgroup$ be the diagonal subgroup of $G$.
    Let $U$ be the set of all $u\in \mathbb{Z}_2^n$ satisfying $D|u\rangle = |u\rangle$ for all $D\in\DiagonalSubgroup$.
    According to the argument above lemma \ref{thm:dimension}, $U$ is the disjoint union of cosets of~$\OrbitSpace$; the number of such cosets is the dimension of the code stabilized by $G$.
    We show that in fact $U=\OrbitSpace$, implying that this dimension is one, so that $|\psi\rangle$ is the unique stabilized state.
    Since each $g_j$ $(j= t+1, \dots, n)$ belongs to~$\DiagonalSubgroup$, every $u\in U$ must satisfy $g_j|u\rangle = |u\rangle$ for all $j=t+1, \dots, n$.
    With \eqref{eq:condition_support'} this shows that $U\subseteq\OrbitSpace$.
    Furthermore, since $D|\psi\rangle= |\psi\rangle$ for every $D\in\DiagonalSubgroup$ and since $|\psi\rangle$ has the form
    \begin{equation}
        \label{eq:psi_support_coset'}
        \ket{\psi}
        =\smashoperator[l]{\sum_{v\in
                                 \OrbitSpace}}
         \alpha_v
         \ket{v}
    \end{equation}
    for some coefficients $\alpha_v$, it follows from that every $v\in\OrbitSpace$ satisfies $D|v\rangle = |v\rangle$ for all $D\in\DiagonalSubgroup'$.
    This shows that $\OrbitSpace\subseteq U$ and thus $\OrbitSpace=U$.
\end{proof}
\begin{remark}
    Since $F_h$ are linear transformations from $V_1$ to $V_2$, and $\Gamma$ is a linear subspace, the condition $F_h(f)\in \Gamma$ can be written as linear equations.
\end{remark}

Note that there will be multiple ways to write down the same XS-stabilizer state.
Let us consider the following example with five qubits:
\begin{equation}
    \ket{\psi}
    =\sum_x
     f(x)\mkern2mu
     \ket{x_1,
          x_2,
          x_3,
          x_1\oplus
          x_2,
          x_2\oplus
          x_3}.
\end{equation}
Equally, we can set $(x_1',x_2',x_3')=(x_1,x_1\oplus x_2, x_2\oplus x_3)$, and the state becomes
\begin{equation}
    \ket{\psi}
    =\sum_{x'}
     f'(x')\mkern2mu
     \ket{x_1',
          x_1'\oplus
          x_2',
          x_1'\oplus
          x_2'\oplus
          x_3',
          x_2',
          x_3'}.
\end{equation}
We can define $\Gamma'$ in the same way as we defined $\Gamma$.
We will show that if $F_{x_j}f(x)\in \Gamma$, then $F_{x_j'}f'(x')\in \Gamma'$.

More formally, we consider the state $\ket{\psi}=\sum_x f(x)\ket{x,Wx}$.
For an invertible matrix $R$ over $\mathbb{Z}_2$ and $x'=R^{-1}x$, we have
\begin{align}
    \ket{\psi}
    & =\sum_x
       f(x)\mkern2mu
       \ket{R
            R^{-1}
            x,
            W
            R
            R^{-1}
            x}
       \nonumber \\
    & =\sum_{x'}
       f(R
         x')\mkern2mu
       \ket{R
            x',
            W
            R
            x'}
       \nonumber \\
    & \equiv
       \sum_{x'}
       f'(x')\mkern2mu
       \ket{W'
            x'}.
\end{align}
In the above equation, we can change the summation from over $x$ to $x'$ because $R$ is invertible.
Then we define $\Gamma'$ from $W'$ in the same way as~\eqref{eq:definition of gamma_j} and~\eqref{eq:definition of Gamma}.
We have the following theorem

\begin{theorem}
    \label{thm:stronger_characterization_converse}
    Let $\ket{\psi}$ be an XS-stabilizer state in the form given in theorem \ref{thm:form_of_stab_states}.
    Then $f$ satisfies the condition~\eqref{eq:condition for f}.
    What is more, for any invertible matrix $R$ over $\mathbb{Z}_2$, the function $f'$ defined by $f'(x')\coloneqq f(Rx')$ also satisfies 
    \begin{equation}
        F_{x_j'}
        f'\in
        \Gamma'.
    \end{equation}
\end{theorem}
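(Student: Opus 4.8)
The statement has two parts. The first is exactly the converse of Theorem~\ref{thm:condition for f}: one must show that the amplitude $f$ of \emph{any} XS-stabilizer state, written in the standard form of Theorem~\ref{thm:form_of_stab_states}, satisfies $F_h(f)\in\Gamma$ for every $h$. The second is a covariance statement; granted the first part, it reduces to the purely algebraic implication \enquote{$F_hf\in\Gamma$ for all $h$ $\Rightarrow$ $F_hf'\in\Gamma'$ for all $h$}, where $f'=f\circ R$ and $\Gamma'$ is built, exactly as $\Gamma$ was, from the rows of the coefficient matrix of the reparametrized state, whose first $t$ rows are the rows of $R$ and whose last $n-t$ rows are the rows of $WR$.

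For the first part I would begin with the normalizations of Section~\ref{sec:stronger characterization}: conjugating $\ket{\psi}$ by suitable $X$, $\sqrt{S}$ and $\ControlledZ$ gates (which preserve $\PauliSGroupMany$ and leave $W$ unchanged, since conjugation by $X$-type Paulis or by diagonal gates preserves the permutation parts of monomial operators) I may assume $\vec{\mu}=0$ and that $f$ has the reduced form~\eqref{eq:nontrivial part of coefficients}, so $f\in V_1$. Being an XS-stabilizer state, $\ket{\psi}$ has a regular stabilizer group (Theorem~\ref{thm:non-regular state has regular group}, or the explicit one produced in Lemma~\ref{thm:compute_form_psi}), which by Corollary~\ref{cor:normal_form_regular} may be put in normal form; the matrix $W$ occurring there coincides with the one in the standard form of $\ket{\psi}$. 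For $j\le t$ the generator reads $g_j=\alpha^{s_j}X_jX(\vec{a}_j)S(\vec{b}_j)$ with $\vec{a}_j=We_j$; equating $g_j\ket{\psi}=\ket{\psi}$ and comparing coefficients of $\ket{x,Wx}$ (after reindexing $x\mapsto x+e_j$) yields, modulo the scalar $\alpha^{s_j}$, an identity
\[
 \frac{f(x+e_j)}{f(x)}=i^{\,l_j(x)}\cdot\bigl(\text{phase of }S(\vec{b}_j)\text{ acting on }\ket{x,Wx}\bigr)^{-1}.
\]
Expanding the $S(\vec{b}_j)$-phase with Lemma~\ref{lem:parity to high order}, the contributions of the last $n-t$ qubits carrying an odd $S$-exponent are precisely the factors $\gamma_k$ of~\eqref{eq:discussion_gamma_j}, and everything else is linear, so the right-hand side equals $i^{\,(\text{linear})}\prod_{k\in I_j}\gamma_k(x)$ with $\prod_{k\in I_j}\gamma_k\in\Gamma$. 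On the other hand, by the very definition~\eqref{eq:derivative of -1 cubic}--\eqref{eq:derivative of i quadratic} of $F_j$ as the discrete derivative on $V_1$, the left-hand side equals $i^{\,(\text{linear})}(F_jf)(x)$ with $F_jf\in V_2$. Since $F_jf$ and $\prod_{k\in I_j}\gamma_k$ are both purely quadratic phases, the leftover linear factors must cancel and $F_jf=\prod_{k\in I_j}\gamma_k\in\Gamma$.

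For the second part, the structural point is that because the original coefficient matrix carries the block $\IdentityMatrix_t$ on top of $W$, after the substitution $x=Rx'$ the new coefficient matrix carries the rows of $R$ on top of $WR$, so $\Gamma'$ already contains the \enquote{$\gamma$ of row $k$ of $R$} for every $k$. This makes the following substitution lemma work: if $G\in\Gamma$ then the quadratic part of $G\circ R$ lies in $\Gamma'$. To prove it, write $G$ as a product of the generators $\gamma_j$; for a single $\gamma_j$, apply Lemma~\ref{lem:parity to high order} to $i^{(\text{row }j\text{ of }W)\cdot Rx'}$ in two ways --- contracting first with the $W$-row and then substituting $x=Rx'$, versus contracting directly with the row of $WR$ --- to obtain, modulo linear phases,
\[
 \gamma_j(Rx')=(-1)^{\,(\text{linear in }x')}\,\gamma^{WR}_j(x')\prod_{k:\,w_{jk}=1}\gamma^{R}_k(x'),
\]
all of whose factors lie in $\Gamma'$. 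Finally, write $F_h(f')$ as the quadratic part of $f(Rx'+Re_h)/f(Rx')$ and telescope this ratio along the support of the $h$-th column of $R$: each telescope factor is $f(y+e_i)/f(y)$ evaluated at $y=Rx'+(\text{const})$, which by the first part equals $i^{(\text{linear})}G_i(y)$ with $G_i=F_if\in\Gamma$; shifting the argument of $G_i\in V_2$ by a constant changes it only by a linear phase, and the substitution lemma then places the quadratic part of $G_i(Rx')$ in $\Gamma'$. Collecting the factors --- and discarding, as in Theorem~\ref{thm:condition for f}, the trivial linear and quadratic phases that $f\circ R$ picks up --- shows $F_h(f')\in\Gamma'$.

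The main obstacle is the bookkeeping inside the substitution lemma: expanding $i^{(\cdots)\cdot Rx'}$ when the rows involved have several nonzero entries produces, besides the wanted $\gamma$-factors, a tail of linear $i$- and $(-1)$-phases (from the carries mod $4$ and from the cross-terms of the nested parities), and one has to check that every such tail term is either trivial or already in $\Gamma'$ --- which is exactly where the presence of the identity block, guaranteeing $\gamma^{R}_k\in\Gamma'$, is indispensable. Once this lemma is in place, both parts follow by routine manipulation of covariant phases in $\CovariantFunctions$.
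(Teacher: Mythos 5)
Your proposal is correct, and its first half (deriving $F_jf\in\Gamma$ by applying the normal-form generators $g_j$, comparing coefficients of $\ket{x,Wx}$, and identifying the quadratic content of the $S(\vec{b}_j)$-phase with a product of the $\gamma_k$) is the same mechanism the paper uses. Where you genuinely diverge is the covariance half: the paper forms the composite generators $g_k'=\prod_j g_j^{R_{jk}}$, which shift $x'\mapsto x'+\vec{e}_k$ in the new frame, and simply reruns the Theorem~\ref{thm:condition for f} comparison on the components $f'(x')\,\ket{Rx',WRx'}$, reading off that the diagonal phase of $g_k'$ has quadratic part in $\Gamma'$; you instead never touch the stabilizer group again, deducing the general-$R$ statement from the $R=\IdentityMatrix$ case by telescoping $f(Rx'+Re_h)/f(Rx')$ along the support of the $h$-th column of $R$ together with a substitution lemma stating that the quadratic part of $\gamma_j\circ R$ equals $\gamma_j^{WR}\prod_{k:\,w_{jk}=1}\gamma_k^{R}$ up to trivial phases, hence lies in $\Gamma'$. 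The paper's route is shorter because it recycles the earlier operator argument verbatim in the new coordinates; yours isolates a reusable, purely function-level fact (condition~\eqref{eq:condition for f} is covariant under invertible changes of variables because $\Gamma\circ R\subseteq\Gamma'$ modulo linear phases), and makes explicit why the rows of $R$ sitting in the top block of the new coefficient matrix are indispensable — which the paper leaves implicit in its definition of $\Gamma'$ from $W'$. One small remark: the bookkeeping you worry about in the substitution lemma is not actually delicate; applying Lemma~\ref{lem:parity to high order} to $\mathrm{i}^{(w_j^TR)x'\bmod 2}$ in the two ways you describe gives exactly your displayed identity, and since both $\gamma_j\circ R$ and the product of $\Gamma'$-generators are $\pm1$-valued, the leftover factor is automatically a $(-1)$-linear phase — no mod-$4$ carries survive. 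Likewise your cancellation of linear factors in the first part is justified by the uniqueness of the decomposition $\mathrm{i}^{l(x)}(-1)^{q(x)}$ (evaluate at $0$, $e_j$ and $e_j+e_k$), which is worth one sentence but is not a gap.
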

\begin{proof}
    First, we note that we have slightly abused the notation here, since $f'$ is in general not a function in $V_1$ (which is defined in section \ref{sec:stronger characterization}).
    However, we can simply ignore the terms $\alpha^{l(x')}\mkern2mu(-1)^{q(x')}$ in $f'$, again by the reasoning in section \ref{sec:stronger characterization}.
    After the transformation $x'=R^{-1}x$, the state can be written as
    \begin{equation}
        \ket{\psi}
        =\sum_{x'}
         f'(x')\mkern2mu
         \ket{R
              x',
              W
              R
              x'}.
    \end{equation}
    Note that if we apply $g_1'=\prod_{j=1}^t g_j^{R_{j1}}$ on the component $f'(x')\mkern2mu\ket{Rx',WRx'}$, we would have
    \begin{equation}
        g_1'\mkern2mu
        f'(x')\mkern2mu
        \ket{R
             x',
             W
             R
             x'}
        =f'(x'+
            \vec{e}_1)\mkern2mu
         \ket{R
              (x'+
               \vec{e}_1),
              W
              R
              (x'+
               \vec{e}_1)},
    \end{equation}
    where $\vec{e}_1$ is the first canonical basis vector.
    By the same reasoning we used in the proof of theorem \ref{thm:condition for f}, we know that
    \begin{equation}
        \frac{f'(x'+
                 \vec{e}_1)}
             {f'(x')}
        =\mathrm{i}^{l(x')}\mkern2mu
         F_{x_1'}[f'](x'),
    \end{equation}
    while at the same time
    \begin{equation}
        g_1'\mkern2mu
        \ket{R
             x',
             W
             R
             x'}
        =\mathrm{i}^{l'(x')}\mkern2mu
         h(x)\mkern2mu
         \ket{R
              (x'+
               \vec{e}_1),
              W
              R
              (x'+
               \vec{e}_1)},
    \end{equation}
    where $h(x)\in\Gamma'$.
    Thus $F_{x_1'}[f']\in\Gamma'$.
    Similarly we can show $F_{x_j'}[f']\in\Gamma'$.
\end{proof}

To illustrate how theorem \ref{thm:condition for f} works, we give two examples here, which demonstrate extreme cases.
First consider the state
\begin{equation}
    \ket{\psi}
    =\sum_x
     f(x)\mkern2mu
     \ket{x_1,
          x_2,
          \dots,
          x_n}.
\end{equation}
By definition, $\Gamma$ is a trivial vector space.
It is then straightforward to check that $f(x)$ has to be of the form $f(x)=i^{l(x)}(-1)^{q(x)}$, and thus $\ket{\psi}$ is a Pauli stabilizer state.
On the other hand, consider the state
\begin{equation}
    \ket{\psi}
    =\sum_x
     f(x)\mkern2mu
     \ket{x_1,
          \dots,
          x_t}
     \bigotimes_{j
                 <k
                 \leq
                  t}
     \ket{x_j\oplus
          x_k}.
\end{equation}
It is easy to check $\Gamma$ is the full vector space $V_2$.
Thus the condition~\eqref{eq:condition for f} becomes trivial, which means $f(x)$ can be an arbitrary function in $\CovariantFunctions$.

\section{Entanglement}
\label{sec:entanglement}

\subsection{Bipartite Entanglement}

In this section we study the bipartite entanglement in XS-stabilizer states.
We consider an $n$-qubit XS-stabilizer state $|\psi\rangle$ (with regular stabilizer group) in the form given in theorem \ref{thm:form_of_stab_states}.
Thus we have
\begin{equation}
    \ket{\psi}
    =\sum_{x\in
           \mathbb{Z}_2^t}
     f(x)\mkern2mu
     \ket{x,
          W
          x}
    =\sum_x
     f(x)\mkern2mu
     \ket{W'
          x}
\end{equation}
where we have denoted
\begin{equation}
    W'
    \coloneqq
     \begin{bmatrix}
         \IdentityMatrix_t \\
         W
     \end{bmatrix}
\end{equation}
which is an $n\times t$ matrix.
The function $f$ belongs to $\CovariantFunctions$ and can be evaluated efficiently owing to theorem \ref{thm:form_of_stab_states}.

Let $(A, B)$ be a bipartition of the qubits.
For convenience we call the two parties Alice and Bob.
We will show that there exists a diagonal operation $D_A\otimes D_B$ mapping the state $|\psi\rangle$ to a Pauli stabilizer state.

There exist permutation matrices $P_A$ and $P_B$ acting in $A$ and $B$, respectively, and an invertible matrix $R$ of dimension $t$, such that
\begin{equation}
    \begin{bmatrix}
        P_A & \\
            & P_B
    \end{bmatrix}
    W'
    R^{-1}
    =\left[\begin{array}{@{}cc@{}}
               \IdentityMatrix_k &           0           \\
                       M         &           0           \\
               \hline
                       0         & \IdentityMatrix_{t-k} \\
                      C_1        &          C_2
           \end{array}\right]
\end{equation}
for some $k$ and some $M$, $C_1$ and $C_2$ (and where, in the r.h.s, the upper block refers to the qubits in $A$ and the lower block refers to the qubits in $B$).
We carry out a change of variables $x\mapsto x'=Rx$.
Furthermore we denote $x'=(u, v)$ where $u$ denotes the first $k$ bits of $x'$.
Also, we set $f'(x')\coloneqq f(x)$.
We can then write the state in the form
\begin{equation}
    \ket{\psi}
    =\sum_{x'
           =(u,
             v)}
     f'(x')\mkern2mu
     \ket{u,
          M
          u}_A\otimes
     \ket{v,
          C_1
          u+
          C_2
          v}_B,
\end{equation}
where $u$ ranges over $\mathbb{Z}_2^k$ and $v$ ranges over $\mathbb{Z}_2^{t-k}$.
Let $r$ denote the rank of $C_1$.
Notice that there exists a full-rank matrix $D_1$ (its rank being $r$) and an invertible matrix $T$ such that $C_1= [D_1\ |\ 0] T$ where $0$ is the zero matrix of appropriate dimensions.
This means there is a further change of variables from $u$ to $w=Tu$ such that only the first $r$ bits of $w$ appear on Bob's side.
In a more explicit way, we can rewrite the computational basis states appearing on Bob's side as
\begin{equation}
    \label{eq:standard form on bob side}
    \ket{v,
         D_1(w_1,
             \dots,
             w_r)+
         C_2
         v}.
\end{equation}
And it is easy to see that the form of the state on Alice's side still does not involve $v$, since it can be written as:
\begin{equation}
    \ket{T
         w,
         M
         T
         w}.
\end{equation}
We write $x''=(w, v)$ and $f''(x'')\coloneqq f'(x')$.
Note that the function $f''$ is related to the function $f$ by a linear change of variables.
Hence, owing to theorem \ref{thm:stronger_characterization_converse}, $f''$ must also satisfy condition~\eqref{eq:condition for f}.
We will use this condition to gain insight in the form of $f''$.
We will only consider the non-trivial part $h(x'')$ of $f''(x'')$, as defined in~\eqref{eq:nontrivial part of coefficients}.
The reason we only focus on $h$ is because we can obtain all linear terms $\alpha^{l(x'')}$ by acting locally in each party, and we can simply ignore $(-1)^{q(x'')}$ because these terms are all allowed for a Pauli stabilizer state (recall that our goal is to map $|\psi\rangle$ to a Pauli stabilizer state by means of an operation $D_A\otimes D_B$).
We claim $h$ only depends on $w_1$, …, $w_r$ and $v$.
We prove this fact by contradiction.
For example, assume that $h$ contains a term $(-1)^{w_{r+1} v_{a}v_{b}}$.
Then by equations~\eqref{eq:derivative of -1 cubic} and~\eqref{eq:derivative of i quadratic}, we know
\begin{equation}
    (F_a\circ
     h)(w,
        v)
    =(-1)^{w_{r+
               1}
           v_b+
           q(w,
             v)},
\end{equation}
where $q(w,v)$ is a quadratic function that does not contain the term $w_{r+1}v_b$.
By theorem \ref{thm:condition for f}, we know that
\begin{equation}
    F_a(h)\in
    \Gamma.
\end{equation}
However, by observing~\eqref{eq:standard form on bob side}, we notice that none of the functions $\gamma_j$ (which are defined in~\eqref{eq:definition of gamma_j}) can contain $(-1)^{w_{r+1}v_b}$.
This implies that
\begin{equation}
    F_a(f'')\not\in
    \Gamma,
\end{equation}
which leads to the contradiction.
Similarly we can show that there are no $i^{w_jv_b}$ terms in $h$ for $j\geq r+1$.

We have shown that the function $h$ only depends on $w_1$, …, $w_r$ and $v$.
This implies that, by acting with a suitable diagonal unitary operation within Bob's side, we can remove the corresponding phase in the state $|\psi\rangle$; the resulting state is a Pauli stabilizer state.
To see how these phases can be removed locally, we argue as follows.
The standard basis kets on Bob's side have the form \eqref{eq:standard form on bob side} with $D_1$ full rank.
Thus there exists a (full rank) matrix $E$ such that $ED_1w = w$ for every $w= (w_1, \dots, w_r)$.
Let $U_1$ be the unitary operation which implements $E$
\begin{equation}
    U_1\colon
    \ket{v,
         D_1(w_1,
             \dots,
             w_r)+
         C_2
         v}
    \mapsto
     \ket{v,
          (w_1,
           \dots,
           w_r)+
          E
          C_2
          v}
\end{equation}
After applying $U_1$, we apply the operation $U_2$ defined by
\begin{equation}
    U_2\colon
    \ket{v,
         (w_1,
          \dots,
          w_r)+
         E
         C_2
         v}
    \mapsto
     \ket{v,
          (w_1,
           \dots,
           w_r)}
\end{equation}
Note that both $U_1$ and $U_2$ can be realized as circuits of $\ControlledNOT$ gates.
Then we apply the diagonal operation $D$ defined by
\begin{equation}
    D\colon
    \ket{v,
         (w_1,
          \dots,
          w_r)}
    \mapsto
     h(w_1,
       \dots,
       w_r,
       v)^{-1}
     \ket{v,
          (w_1,
           \dots,
           w_r)}.
\end{equation}
Since $h$ can be evaluated efficiently (this follows from the fact that $f$ can be evaluated efficiently), the operation $D$ can be implemented efficiently.
Finally, we apply $U_2^\dagger$ followed by $U_1^\dagger$, yielding
\begin{equation}
    h(w_1,
      \dots,
      w_r,
      v)\mkern2mu
    \ket{D_1(w_1,
             \dots,
             w_r)+
         C_2
         v}.
\end{equation}
This total procedure $D_B\coloneqq U_1^\dagger U_2^\dagger D U_2U_1$ thus allows us to multiply each ket \eqref{eq:standard form on bob side} with the inverse of $h(w_1, \dots, w_r, v)$, thereby \enquote{canceling out} the function $h$.
Note that $D_B$ is a diagonal operation, since $D$ is diagonal and since $U_2U_1$ is a permutation of the standard basis.

We have shown:

\begin{theorem}
    \label{thm:bipartite_reduction}
    Let $|\psi\rangle$ be an XS-stabilizer state and $(A, B)$ a bipartition of its qubits.
    Then there exists a Pauli stabilizer state~$\ket{\phi_{A,B}}$ and diagonal operators~$D_A$ and~$D_B$ such that
    \begin{equation}
        \ket{\psi}
        =D_A\otimes
         D_B\mkern2mu
         \ket{\phi_{A,
                    B}}.
    \end{equation}
    A description of~$\ket{\phi_{A,B}}$ can be computed efficiently.
\end{theorem}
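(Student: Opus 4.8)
The plan is to peel $|\psi\rangle$ apart in two stages: a linear-algebra normalization of the \emph{support} of the state relative to the cut, followed by the removal of the non-Pauli part of the \emph{amplitude} function by local diagonal operations. I start from the normal form of Theorem~\ref{thm:form_of_stab_states}, $|\psi\rangle = 2^{-t/2}\sum_{x\in\mathbb{Z}_2^t} f(x)\,\ket{x, Wx}$ with $f\in\mathcal{F}$, and regard the support as the column span of the $n\times t$ matrix stacking $\IdentityMatrix_t$ over $W$. Running Gaussian elimination and column permutations \emph{separately} inside Alice's and Bob's blocks — i.e. multiplying the support matrix on the left by $P_A\oplus P_B$ and on the right by an invertible $R$ over $\mathbb{Z}_2$ — I can bring it into a canonical block form in which the $t$ free variables split into a group supported only on Bob's side and a group supported (partly) on both sides; a further invertible substitution on the latter group, governed by the rank $r$ of the cut-crossing coupling block, leaves only $r$ variables $w_1,\dots,w_r$ actually occurring on Bob's side. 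After these changes of variable the state reads $\sum_{(w,v)} f''(w,v)\,\ket{\cdot}_A\otimes\ket{v, D_1(w_1,\dots,w_r)+C_2v}_B$ with $D_1$ of full column rank, where $f''$ is obtained from $f$ by an invertible $\mathbb{Z}_2$-linear change of variables.

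By Theorem~\ref{thm:stronger_characterization_converse}, $f''$ still satisfies the characterization~\eqref{eq:condition for f}, so its ``nontrivial part'' $h$ — the homogeneous $i$-quadratic and $(-1)$-cubic piece of~\eqref{eq:nontrivial part of coefficients} — obeys $F_j(h)\in\Gamma$ for all $j$; the remaining factors in $f''$ are linear $\alpha$-phases (which split into single-qubit $\alpha$-gates, hence can be assigned to $A$ or $B$ qubit by qubit) and a quadratic $(-1)^{q}$ phase (harmless, since a Pauli stabilizer state may carry it). The key structural claim is that $h$ is a function of the variables physically present on Bob's side only, namely $w_1,\dots,w_r$ and $v$. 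I would prove this by contradiction, along the lines already sketched before the theorem: if, say, $h$ contained a cubic monomial $(-1)^{w_{r+1}v_av_b}$ with $w_{r+1}$ an Alice-only variable, then $F_a(h)$ would contain the cross-term $(-1)^{w_{r+1}v_b}$, which no generator $\gamma_j$ of $\Gamma$ can produce — by~\eqref{eq:standard form on bob side} the variable $w_{r+1}$ never appears in any Bob ket, so no $\gamma_j$ couples $w_{r+1}$ to $v_b$ — contradicting $F_a(h)\in\Gamma$. The remaining forbidden monomial types (cubics with exactly one, or exactly two, Alice-only variables; and $i$-quadratics $i^{w_jv_b}$ with $j\geq r+1$) are ruled out by the same device applied to an appropriate $F_j$.

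Once $h$ depends only on $(w_1,\dots,w_r,v)$, which are precisely the variables appearing in Bob's kets, it can be cancelled by a diagonal operator supported on $B$ alone. Concretely, full column rank of $D_1$ gives a left inverse $E$ with $ED_1 = \IdentityMatrix$, so a $\ControlledNOT$ circuit $U_1$ ``decodes'' $(w_1,\dots,w_r)$ from each Bob ket and a further $\ControlledNOT$ circuit $U_2$ removes the leftover $EC_2v$ offset; the diagonal phase $\ket{v,w}\mapsto h(w,v)^{-1}\ket{v,w}$ conjugated by $U_2U_1$ is then an honest diagonal operator $D_B$ (conjugation of a diagonal operator by a basis permutation stays diagonal) that multiplies each Bob basis ket by $h^{-1}$. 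Folding the single-qubit $\alpha$-phases into $D_A$ and $D_B$ as appropriate, the state $(D_A\otimes D_B)^{-1}|\psi\rangle$ has amplitude of the Pauli form $i^{l(x)}(-1)^{q(x)}$ supported on an affine subspace of $\mathbb{Z}_2^n$, hence is a Pauli stabilizer state $|\phi_{A,B}\rangle$ by the standard characterization~\eqref{eq:form_of_pauli_stab_state}. Efficiency is then immediate: all the linear algebra over $\mathbb{Z}_2$ (Gaussian elimination, constructing $R$, the second substitution, $E$, a generating set of $\Gamma$) is polynomial-time, and $f$ — hence $h$ — evaluates efficiently by Theorem~\ref{thm:form_of_stab_states}, so a description of $|\phi_{A,B}\rangle$ (its affine support together with the amplitude, or an equivalent Pauli stabilizer group) comes out in polynomial time.

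I expect the real work to be the structural claim of the second paragraph: one must enumerate \emph{every} monomial of $h$ that could mention an Alice-only variable and, for each, exhibit a derivative $F_j$ whose image is visibly outside $\Gamma$, reading off the allowed quadratic cross-terms of the $\gamma_j$ from the precise shape~\eqref{eq:standard form on bob side} of Bob's kets. Everything else is bookkeeping with $\mathbb{Z}_2$-linear algebra and with the covariance of the class $\mathcal{F}$ from Section~\ref{sec:cubic}.
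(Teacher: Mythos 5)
Your proposal is correct and follows the paper's proof essentially step for step: the same $(P_A\oplus P_B)$--$R$ block normal form and rank-$r$ substitution $w=Tu$ with $C_1=[D_1\,|\,0]\,T$, the same appeal to Theorem~\ref{thm:stronger_characterization_converse} together with the $F_j$/$\Gamma$ contradiction to exclude cross terms, and the same construction $D_B=U_1^\dagger U_2^\dagger D\, U_2 U_1$ from $\ControlledNOT$ circuits and an efficiently evaluable diagonal phase. One caveat, shared with the paper's own write-up: the $F_j$/$\Gamma$ device only rules out monomials that couple an Alice-only variable $w_j$ with $j>r$ to a $v$-variable, not every monomial mentioning an Alice-only variable (a $v$-free cubic such as $(-1)^{w_1w_2w_3}$ with $r=1$ can genuinely occur, e.g. for the state~\eqref{eq:6qubit_example} cut as five qubits versus one), but such factors depend on $w$ only and Alice's kets determine $w$, so they are absorbed into $D_A$ and the conclusion is unaffected.
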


\begin{corollary}
    The von Neumann entanglement entropy of $|\psi\rangle$ w.r.t. $(A, B)$ can be computed efficiently.
\end{corollary}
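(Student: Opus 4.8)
The corollary is essentially a one-line consequence of Theorem~\ref{thm:bipartite_reduction} combined with the known efficient algorithm for Pauli stabilizer states, so the plan is to assemble these two ingredients carefully. First I would invoke Theorem~\ref{thm:bipartite_reduction} to obtain, in polynomial time, a generating set of a Pauli stabilizer group whose unique stabilized state is $\ket{\phi_{A,B}}$, together with the diagonal operators $D_A$ and $D_B$ satisfying $\ket{\psi}=D_A\otimes D_B\mkern2mu\ket{\phi_{A,B}}$. I would then observe that $D_A$ and $D_B$ are genuine unitaries acting strictly within the respective parties: by the construction in the proof of Theorem~\ref{thm:bipartite_reduction}, each is a composition of permutations of the computational basis (realized by $\ControlledNOT$ circuits of the form $U_1$, $U_2$) and of a diagonal matrix whose entries are the phases $h(\cdot)^{-1}$, all of unit modulus.

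Since $D_A\otimes D_B$ is a unitary of product form with respect to the cut $(A,B)$, it leaves the spectrum of the reduced density operator unchanged. Explicitly, if $\rho_A=\operatorname{tr}_B\ketbra{\psi}{\psi}$ and $\rho_A^{\phi}=\operatorname{tr}_B\ketbra{\phi_{A,B}}{\phi_{A,B}}$, then $\rho_A=D_A\,\rho_A^{\phi}\,D_A^{\dagger}$, so $\rho_A$ and $\rho_A^{\phi}$ are unitarily equivalent. Consequently the von Neumann entanglement entropy of $\ket{\psi}$ across $(A,B)$ equals that of the Pauli stabilizer state $\ket{\phi_{A,B}}$ (and, more generally, all $\alpha$-R\'enyi entropies coincide, which recovers the remark in the entanglement section that these reduce to the logarithm of the Schmidt rank).

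Finally I would apply the standard procedure for Pauli stabilizer states (see e.g.~\cite{Fattal2004entanglement,Hein2006entanglement}): given a generating set of the Pauli stabilizer group $S$ of $\ket{\phi_{A,B}}$, one computes by Gaussian elimination over $\mathbb{Z}_2$ on the associated check matrix the size of the subgroup $S_A$ of elements supported entirely on $A$, and obtains the entanglement entropy (in bits) as $|A|-\log_2\abs{S_A}$. Each of these steps runs in time polynomial in $n$; combined with the polynomial-time guarantee of Theorem~\ref{thm:bipartite_reduction}, this yields an efficient overall algorithm.

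I do not expect a real obstacle here: the genuinely non-trivial work is packaged in Theorem~\ref{thm:bipartite_reduction}, whose explicit construction of the local operator $D_A\otimes D_B$ is the key ingredient. The only point deserving a moment's care is confirming that $D_A$ and $D_B$ are honest unitaries acting within their parties (so that entanglement is provably preserved rather than merely not increased), which follows directly from their description in that proof; everything else is bookkeeping plus the textbook stabilizer-entropy computation.
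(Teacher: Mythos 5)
Your proof is correct and follows essentially the same route as the paper: invoke Theorem~\ref{thm:bipartite_reduction} to reduce to the Pauli stabilizer state $\ket{\phi_{A,B}}$, note that the local (diagonal) unitary $D_A\otimes D_B$ preserves the reduced-state spectrum, and then apply the known polynomial-time entropy computation for Pauli stabilizer states~\cite{Fattal2004entanglement}. The extra detail you supply (unitarity of $D_A$, $D_B$ and the formula $|A|-\log_2\abs{S_A}$) is accurate bookkeeping that the paper leaves implicit.
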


\begin{corollary}
    Let $\rho_A$ be the reduced density operator of $|\psi\rangle$ for the qubits in $A$.
    Then $\rho_A$ is proportional to a projector i.e. all nonzero eigenvalues of $\rho_A$ coincide.
\end{corollary}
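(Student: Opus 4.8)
The plan is to deduce this immediately from Theorem~\ref{thm:bipartite_reduction}, which already writes $\ket{\psi}=D_A\otimes D_B\mkern2mu\ket{\phi_{A,B}}$ with $\ket{\phi_{A,B}}$ a Pauli stabilizer state. The key observation I would stress is that $D_A$ and $D_B$ are not merely diagonal but diagonal \emph{unitaries}: inspecting the construction in the proof of Theorem~\ref{thm:bipartite_reduction}, $D_B=U_1^\dagger U_2^\dagger D\mkern1mu U_2 U_1$ where $U_1,U_2$ are $\ControlledNOT$ circuits (hence permutation unitaries) and $D$ multiplies each computational basis ket by a phase of modulus one (a value of the function $h$, which lies in $\CovariantFunctions$); the same holds for $D_A$. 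Hence $D_A\otimes D_B$ is a local unitary, and local unitaries preserve the spectrum of a reduced density operator.

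Concretely, the steps I would carry out are: (i) take the partial trace over $B$ of both sides of $\ket{\psi}=D_A\otimes D_B\mkern2mu\ket{\phi_{A,B}}$; since $D_B$ acts only on the qubits of $B$, cyclicity of the partial trace over $B$ gives $\rho_A=\operatorname{tr}_B\!\big(\ketbra{\psi}{\psi}\big)=D_A\mkern2mu\sigma_A\mkern2mu D_A^\dagger$, where $\sigma_A$ is the reduced density operator of the Pauli stabilizer state $\ket{\phi_{A,B}}$ on $A$; (ii) observe that, $D_A$ being unitary, $\rho_A$ and $\sigma_A$ are unitarily equivalent and thus have identical spectra; (iii) invoke the known fact that the reduced density operator of any Pauli stabilizer state is proportional to a projector~\cite{Hein2006entanglement}. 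Then $\rho_A$ is proportional to a projector as well, i.e. all its nonzero eigenvalues coincide, which is the claim.

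If a self-contained argument for step (iii) is wanted, one writes the Pauli stabilizer projector as $\ketbra{\phi_{A,B}}{\phi_{A,B}}=2^{-n}\sum_{g\in\mathcal{S}}g$ for the abelian stabilizer group $\mathcal{S}$ of $\ket{\phi_{A,B}}$, and notes that $\operatorname{tr}_B(g)$ vanishes unless the $B$-part of $g$ equals $\pm\Identity$; the surviving terms are, up to signs, the image in $A$ of the subgroup of $\mathcal{S}$ acting trivially on $B$, so $\sigma_A$ is a uniform multiple of a sum over an abelian group of commuting Pauli operators with eigenvalues $\pm1$, i.e. a multiple of a projector. I do not anticipate a genuine obstacle: the only point needing a moment of care is confirming that $D_A$ is unitary (so the passage to $\sigma_A$ is spectrum-preserving), which is immediate from its explicit form, and everything else is either Theorem~\ref{thm:bipartite_reduction} or a standard property of Pauli stabilizer states.
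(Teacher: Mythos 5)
Your proof is correct and takes essentially the same route as the paper, which also deduces the corollary directly from Theorem~\ref{thm:bipartite_reduction} together with the fact that reduced density operators of Pauli stabilizer states are proportional to projectors~\cite{Hein2006entanglement}. The extra details you supply (unitarity of the diagonal operators and the resulting unitary equivalence $\rho_A=D_A\,\sigma_A\,D_A^\dagger$, plus the self-contained stabilizer-group argument) are just an expanded version of the paper's one-line justification.
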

The first corollary holds since the entanglement entropy of Pauli stabilizer states can be computed efficiently~\cite{Fattal2004entanglement}.
The second corollary holds since reduced density operators of Pauli stabilizer states are proportional to projectors~\cite{Hein2006entanglement}.

The state~$\ket{\phi_{A,B}}$ generally depends on the bipartition~$(A,B)$.
It would be interesting to understand whether Theorem~\ref{thm:bipartite_reduction} can be made independent of that:

\begin{problem}
    For every XS-stabilizer state~$\ket{\psi}$, does there exist a single Pauli stabilizer state~$\ket{\phi}$ such that for \emph{every} bipartition~$(A,B)$ we have
    \begin{equation}
        \ket{\psi}
        =U_A\otimes
         U_B\mkern2mu
         \ket{\phi}
    \end{equation}
    for some local unitaries~$U_A$ and~$U_B$?
\end{problem}

\subsection{LU-Inequivalence of XS- and Pauli Stabilizer States}
\label{sec:lu non equivalence}

Here we show that the XS-stabilizer state~\eqref{eq:6qubit_example} is not LU-equivalent to any Pauli stabilizer state, i.e. there does not exist any Pauli stabilizer state~$\ket{\phi}$ satisfying $\ket{\psi}=U\mkern2mu\ket{\phi}$ for any $U\coloneqq U_1\otimes\cdots\otimes U_6\in\mathrm{U}(2)^{\otimes 6}$.
This result demonstrates that there exist XS-stabilizer states whose multipartite entanglement (w.r.t LU-equivalence) is genuinely different from that of any Pauli stabilizer state.

In order to prove the claim, we consider the classification of 6-qubit stabilizer (graph) states under LU-equivalence as given in Fig. 4 and Table II of Ref. \cite{Hein2004multiparty}.
In the latter figure, 11 distinct LU-equivalence classes are shown to exist for fully entangled 6-qubit stabilizer states; the classes are labeled from 9 to 19.
A representative of each class is given in Fig 4.
Furthermore each class is uniquely characterized by its list of Schmidt ranks, i.e. the Schmidt ranks for all possible bipartitions of the system.
In table II it is shown that, for any of the classes 9--17, there is at least one bipartition of the form (two qubits -- rest) for which the state is not maximally entangled.
This shows that the XS-stabilizer state $|\psi\rangle$ cannot be LU-equivalent to any of the states in the classes 9--17, since $|\psi\rangle$ is maximally entangled for all such bipartitions.
Furthermore, for class 19, the entanglement is maximal for all bipartitions of the form (3 qubits -- rest); since this is not the case for $|\psi\rangle$, the latter cannot be LU-equivalent to any state in class 19.
This leaves class 18.
Consider the state
\begin{equation}
    \ket{\phi}
    =\smashoperator[l]{\sum_{x_j
                             =0}^1}
     \ket{x_1,
          x_2,
          x_3,
          x_1\oplus
          x_2,
          x_2\oplus
          x_3,
          x_3\oplus
          x_1}
\end{equation}
which is a Pauli stabilizer state.
By direct computation of all Schmidt ranks, one verifies that this state belongs to class 18.
We prove by contradiction that $|\psi\rangle$ is not LU-equivalent to~$\ket{\phi}$.
First, it is straightforward to show that for both~$\ket{\phi}$ and $|\psi\rangle$, the 3-qubit reduced density matrix of the 1st, 2nd and 4th qubits is
\begin{align}
    \rho_{1
          2
          4}
    & =\frac{1}
            {4}
       (\ketbra{0
                0
                0}
               {0
                0
                0}+
        \ketbra{0
                1
                1}
               {0
                1
                1}+
        \ketbra{1
                0
                1}
               {1
                0
                1}+
        \ketbra{1
                1
                0}
               {1
                1
                0})
       \nonumber \\
    & =\frac{1}
            {8}
       (\Identity+
        Z_1\otimes
        Z_2\otimes
        Z_4).
\end{align}
If there is a local unitary transformation~$U$ from $\ket{\phi}$ to $\ket{\psi}$, then $U_1\otimes U_2\otimes U_4$ must leave $\rho_{124}$ unchanged.
This implies that $U_1\otimes U_2\otimes U_4$ must leave $Z_1\otimes Z_2\otimes Z_4$ unchanged, so that $U_1Z_1U_1^\dagger \propto Z_1$ and similarly for $U_2$ and $U_4$.
This implies that $U_1$, $U_2$ and $U_4$ must have the form~$D_j$ or~$D_jX$ for some diagonal matrix~$D_j$.
Analogously, we can show that the same holds for all other~$j$.
Thus $U=D\mkern2mu X(\vec{a}\mkern1mu)$ for some diagonal operator $D\coloneqq D_1\otimes\dots\otimes D_6$ and some~$\vec{a}\in\mathbb{Z}_2^6$.
Note that $|\psi\rangle$ and~$\ket{\phi}$ have the form
\begin{equation}
    \ket{\phi}
    =\smashoperator[l]{\sum_{v\in
                             \OrbitSpace}}
     \ket{v}
    \qquad
    \text{and}
    \qquad
    \ket{\psi}
    =\smashoperator[l]{\sum_{v\in
                             \OrbitSpace}}
     \alpha_v\mkern2mu
     \ket{v},
\end{equation}
for some linear subspace $\OrbitSpace\subseteq\mathbb{Z}_2^6$ and real coefficients $\alpha_v$.
Then
\begin{equation}
    D\mkern2mu
    X(\vec{a}\mkern1mu)\mkern2mu
    \ket{\phi}
    =\smashoperator[l]{\sum_{v\in
                             \OrbitSpace}}
     \beta_v\mkern2mu
     \ket{v+
          \vec{a}\mkern1mu}
\end{equation}
for some coefficients $\beta_v$.
Thus $U\mkern2mu\ket{\phi}=\ket{\psi}$ implies that $\OrbitSpace=\OrbitSpace+ \vec{a}$.
This shows that $\vec{a}\in\OrbitSpace$.
But then $X(\vec{a}\mkern1mu)\mkern2mu\ket{\phi}=\ket{\phi}$.
The identity $U\mkern2mu\ket{\phi}=\ket{\psi}$ thus implies that $D\mkern2mu\ket{\phi}=\ket{\psi}$.
It is straightforward to verify that this cannot be true.
We have thus shown that $|\psi\rangle$ and~$\ket{\phi}$ are not LU-equivalent.
In conclusion, $|\psi\rangle$ does not belong to any LU-equivalence class of Pauli stabilizer states.

\section{Efficient Algorithms}
\label{sec:algorithms}

In this section we will give a list of problems that can be solved with efficient classical algorithms for regular XS-stabilizer states (codes).
We consider an arbitrary $n$-qubit regular XS-stabilizer stabilizer code $\RegularXSCode$ specified in terms of a generating set of $m$ stabilizers in the standard form given in Corollary~\ref{cor:normal_form_regular}.
Then the following holds:
\begin{enumerate}
    \item The {\bf degeneracy} $d$ of the code can be computed in $\poly(n,m)$ time (recall section \ref{sec:constructing_basis}).
    \item An efficient algorithm exists to {\bf determine $d$ basis states} $|\psi_1\rangle$, …, $|\psi_d\rangle$, each of which is an XS-stabilizer state with regular stabilizer group and each state having the form
        \begin{equation}
            \ket{\psi_i}
            =\smashoperator[l]{\sum_{x\in
                                     \mathbb{Z}_2^t}}
             f_i(x)\mkern2mu
             \ket{x,
                  W
                  x+
                  \vec{\mu}_i}
             \qquad
             \text{with $f_i\in\CovariantFunctions$}.
        \end{equation}
        The matrix $W$ (which is the same for all $\ket{\psi_i}$) can be computed in $\poly(n,m)$ time.
        The list $\{\vec{\mu}_1$, …, $\vec{\mu}_d\}$ can be computed in $\poly(n,m,d)$ time.
        Given a specific $\vec{\mu}_i$, a complete generating set of stabilizer operators having $|\psi_i\rangle$ as unique stabilized state can be computed in $\poly(n,m)$ time.
        Furthermore, given $\vec{\mu}_i$, the function $x\mapsto f_i(x)$ can be computed in $\poly(n,m)$ time as well.
        See section \ref{sec:constructing_basis}.
    \item The {\bf logical operators} of $\RegularXSCode$ can be computed in $\poly(n,m,d)$ time.
        See section \ref{sec:logical_operators}.
    \item The {\bf commuting Hamiltonian} described in section \ref{sec:hamiltonians} can be computed in $\poly(n,m)$ time.
\end{enumerate}

On input of~$\vec{\mu}_i$ the following holds in addition:
\begin{enumerate}
    \setcounter{enumi}{4}
    \item The von Neumann {\bf entanglement entropy} of any $|\psi_i\rangle$ with regular stabilizer group can be computed, for any bipartition, in $\poly(n,m)$ time.
        This claim holds since we have shown in section \ref{sec:entanglement} how to efficiently compute the description of a Pauli stabilizer state with the same entanglement as $|\psi_i\rangle$; furthermore an efficient algorithm to compute the von Neumann entanglement entropy of Pauli stabilizer states is known \cite{Fattal2004entanglement}.
    \item A $\poly(n)$ size {\bf quantum circuit} to generate any $|\psi_i\rangle$ can be computed for in $\poly(n,m)$ time.
        This circuit can always be chosen to be a Clifford circuit followed by a circuit composed of the diagonal gates $\ControlledControlledZ$ (controlled-$\ControlledZ$), $\ControlledS\coloneqq\ControlledSLong$ (controlled-$S$) and $T$.
        To see this, we recall theorem \ref{thm:form_of_stab_states}.
        This implies that the state $|\psi_i\rangle$ can be prepared as follows:
        \begin{itemize}
            \item Using a Clifford circuit $\CliffordCircuit_1$, prepare the state $\sum |x, Wx + \vec{\mu}_i\rangle$.
                In fact, this can be done using a circuit composed of Hadamard, $X$ and $\ControlledNOT$ gates.
            \item Since the function $f_i$ belongs to the class $\CovariantFunctions$, it has the form
                \begin{equation}
                    \alpha^{l(x)}\mkern2mu
                    \mathrm{i}^{q(x)}\mkern2mu
                    (-1)^{c(x)}.
                \end{equation}
                Note that the we have the following gate actions on the standard basis:
                \begin{align*}
                    T
                    & \colon
                       \ket{x}\mapsto
                       \alpha^x\mkern2mu
                       \ket{x}, \\
                    S
                    & \colon
                       \ket{x}\mapsto
                       \mathrm{i}^x\mkern2mu
                       \ket{x}, \\
                    \ControlledS
                    & \colon
                       \ket{x,
                            y}\mapsto
                       \mathrm{i}^{x
                                   y}\mkern2mu
                       \ket{x}, \\
                    \ControlledZ
                    & \colon
                       \ket{x,
                            y}\mapsto
                       (-1)^{x
                             y}\mkern2mu
                       \ket{x,
                            y}, \\
                    \ControlledControlledZ
                    & \colon
                       \ket{x,
                            y,
                            z}\mapsto
                       (-1)^{x
                             y
                             z}\mkern2mu
                       \ket{x,
                            y,
                            z}.
                \end{align*}
                Therefore, the phase $f_i(x)$ can be generated by first applying a suitable circuit $\CliffordCircuit_2$ of Clifford gates $\ControlledZ$ and $S$ to generate the quadratic part of $c(x)$ and the linear part of $q(x)$, and by subsequently applying a suitable circuit $\TCSCCZCircuit$ composed of the (non-Clifford) gates $T$, $\ControlledS$ and $\ControlledControlledZ$ to generate $l(x)$, the quadratic part of $q(x)$ and the cubic part of $c(x)$, respectively.
                Since the function $f_i$ can be computed efficiently, the descriptions of $\CliffordCircuit_2$ and $\TCSCCZCircuit$ can be computed efficiently.
                The overall circuit is $\TCSCCZCircuit\CliffordCircuit_2\CliffordCircuit_1$.
        \end{itemize}
    \item Given any $|\psi_i\rangle$ and Pauli operator $P$, we can {\bf compute the expectation value} $\langle \psi_i|P|\psi_i\rangle$ in $\poly(n,m)$ time.
        This implies in particular that the expectation of any local observable (i.e. an observable acting on a subset of qubits of constant size) can be computed efficiently as well, since every such observable can be written as a sum of $\poly(n)$ Pauli observables.
        To see that $\langle \psi_i|P|\psi_i\rangle$ can be computed efficiently, recall from point 6 above that $|\psi_i\rangle$ can be decomposed as $|\psi_i\rangle = \TCSCCZCircuit|\psi_i'\rangle$ where $\TCSCCZCircuit$ is a circuit composed of $T$, $\ControlledS$ and $\ControlledControlledZ$, and where $|\psi_i'\rangle= \CliffordCircuit_2\CliffordCircuit_1|0\rangle$ is a Pauli stabilizer state.
        Then
        \begin{equation}
            \expval{\psi_i}
                   {P}
                   {\psi_i}
            =\expval{\psi_i'}
                    {\TCSCCZCircuit^\dagger
                     P
                     \TCSCCZCircuit}
                    {\psi_i'}.
        \end{equation}
        Its is easily verified that $\TCSCCZCircuit^\dagger P\TCSCCZCircuit\eqqcolon\CliffordCircuit''$ is a Clifford operation, for every circuit $\TCSCCZCircuit$ composed of $T$, $\ControlledS$ and $\ControlledControlledZ$ (for example $TXT \propto S$).
        Thus we have
        \begin{equation}
            \expval{\psi_i}
                   {P}
                   {\psi_i}
            =\expval{\psi_i'}
                    {\CliffordCircuit''}
                    {\psi_i'}.
        \end{equation}
        Recall that $|\psi_i'\rangle= \CliffordCircuit_2\CliffordCircuit_1|\psi_i\rangle$, we know that
        \begin{equation}
            \expval{\psi_i}
                   {P}
                   {\psi_i}
            =\expval{0}
                    {\CliffordCircuit'''}
                    {0},
        \end{equation}
        where $\CliffordCircuit'''=\CliffordCircuit_1^{\dagger} \CliffordCircuit_2^{\dagger}\CliffordCircuit \CliffordCircuit_2\CliffordCircuit_1$.
        Note that $\expval{0}{\CliffordCircuit'''}{0}$ is simply the coefficient of the basis $\ket{0}$ in the Pauli stabilizer state $\CliffordCircuit'''\ket{0}$, which can be computed efficiently according to~\cite{Vandennest2010classical}.
\end{enumerate}

\section{Non-Regular XS-Stabilizer Groups}
\label{sec:non regular xs stabilizer}

Though we have tried to avoid non-regular XS-stabilizer groups due to the computational hardness, there are situations where they appear naturally.
For example, let us look at~\eqref{eq:non abelian anyon stabilizer1} through~\eqref{eq:non abelian anyon stabilizer3} in the appendix.
They describe a code space $\RegularXSCode$ that is equivalent to the ground space of the twisted quantum double model $\mathrm{D}^\omega(\mathbb{Z}_2 \times\mathbb{Z}_2 \times \mathbb{Z}_2)$ by a local unitary circuit (as defined in \cite{Chen2010local}).
These stabilizer operators have an interesting property: if they are on an infinite lattice or a lattice with open boundary, then they generate a regular XS-stabilizer group.
On the other hand, for example, if they are on a torus, the group they generated will not be regular.
This is related to the fact that this model has a ground state degeneracy of 22 when it is on a torus, which cannot be the degeneracy of a regular XS-stabilizer code.
It is also known that this twisted quantum double model support non-Abelian anyons, which has shown to be impossible for Pauli stabilizer codes on 2D.

Given the existence of interesting non-regular XS-stabilizer groups, we want to make a few comments about which results in this paper still hold for non-regular groups.
First, we have the following theorem:
\begin{theorem}
    \label{thm:non-regular state has regular group}
    Every XS-stabilizer state has a \emph{regular} XS-stabilizer group which uniquely stabilizes it.
\end{theorem}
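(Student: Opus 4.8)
The plan is to take the (possibly non-regular) XS-stabilizer group $G$ that uniquely stabilizes $\ket{\psi}$ and surgically replace its diagonal generators by $Z$-type ones. First I would bring $G$ into the normal form of Theorem~\ref{thm:normal_form}, so that $G=\gen{g_1,\dots,g_t,g_{t+1},\dots,g_m}$ with $g_j=\alpha^{s_j}X(\vec{e}_j,\vec{w}_j)\mkern2mu S(\vec{b}_j)$ for $j\le t$ (the $X$-parts $(\vec{e}_j,\vec{w}_j)$ being linearly independent over $\mathbb{Z}_2$) and $g_{t+1},\dots,g_m$ diagonal, generating $\DiagonalSubgroup$. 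The key observation is that the non-diagonal part $G''\coloneqq\gen{g_1,\dots,g_t}$ is \emph{already regular}: it is an XS-stabilizer group because it stabilizes $\ket{\psi}$ (being a subgroup of $G$), and since the $(\vec{e}_j,\vec{w}_j)$ are linearly independent, Lemma~\ref{lemma:sufficient condition for regular group} applies (with an empty $Z$-type block). In other words the diagonal generators carrying $S$'s --- precisely the ones that can spoil regularity --- are dispensable; the price is that $G''$ in general stabilizes a strictly larger code, so the remaining task is to cut this code back down to $\ket{\psi}$ using $Z$-type operators only.

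Next I would pin down the support of $\ket{\psi}$. Put $W\coloneqq[\vec{w}_1\mkern2mu|\dots|\mkern2mu\vec{w}_t]$ and let $\OrbitSpace\coloneqq\set{(x,Wx):x\in\mathbb{Z}_2^t}$; this is the orbit space of $G''$, and also of $G$ since the $g_j$ with $j>t$ carry no $X$-part. Comparing the supports on the two sides of $g\mkern2mu\ket{\psi}=\ket{\psi}$ for each $g\in G$ shows that the support of $\ket{\psi}$ is invariant under translation by $\OrbitSpace$, hence a union of cosets of $\OrbitSpace$; moreover it must be a \emph{single} coset, since otherwise splitting $\ket{\psi}$ along two distinct cosets would yield two nonzero states each still stabilized by $G$ (compare supports on both sides of $g\mkern2mu\ket{\psi}=\ket{\psi}$ again), contradicting uniqueness. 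Write this coset as $\OrbitSpace+\vec{\lambda}$. Since $G''$ acts monomially and transitively on it, all amplitudes of $\ket{\psi}$ have equal modulus; hence $\ket{\psi}$ is, up to normalization, a state of the form~\eqref{eq:stab_state} stabilized by the \emph{regular} group $G''$ --- equivalently, it is proportional to one of the disjoint-support basis states of $G''$'s code furnished by Theorem~\ref{thm:monomial}(b).

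At this point I am exactly in the setting of Lemma~\ref{thm:compute_form_psi}: given a regular XS-stabilizer group stabilizing a state of the form~\eqref{eq:stab_state}, the construction in its proof supplements the generating set with the $Z$-type operators $D_j\coloneqq(-1)^{\vec{z}_j^{T}\vec{\lambda}}\mkern2mu Z(\vec{z}_j)$ for a basis $\set{\vec{z}_1,\dots,\vec{z}_{n-t}}$ of $\OrbitSpace^\perp$. Adjoining $Z$-type operators to a regular group keeps it regular, and the argument there shows that for the enlarged group $G'$ the set of basis states stabilized by its diagonal subgroup is exactly $\OrbitSpace+\vec{\lambda}$, a single coset; hence the code of $G'$ is one-dimensional, so $G'$ uniquely stabilizes $\ket{\psi}$. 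This $G'$ is the regular XS-stabilizer group we want.

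The one step calling for genuine care --- and the place where the hypothesis that $\ket{\psi}$ is \emph{uniquely} stabilized is used essentially --- is the claim that the support of $\ket{\psi}$ is a single coset of $\OrbitSpace$; everything else amounts to bookkeeping with the lemmas already in hand. Conceptually, the point to be exploited is that the diagonal $S$-generators of $G$ are only doing the job of selecting the correct support coset, a task that can always be reproduced with $Z$-type operators, whereas the nontrivial phase structure of $\ket{\psi}$ is already captured by $g_1,\dots,g_t$ alone.
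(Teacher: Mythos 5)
Your proposal is correct and follows essentially the same route as the paper: bring the generators to normal form, keep the non-diagonal ones (regular by Lemma~\ref{lemma:sufficient condition for regular group}), adjoin $Z$-type operators that pin down the support coset, and conclude regularity and unique stabilization via the argument of Lemma~\ref{thm:compute_form_psi}. The only cosmetic difference is that the paper identifies the coset as $\InvariantSetDiagonal$ through Theorem~\ref{thm:monomial}(b) together with one-dimensionality of the stabilized space, whereas you obtain the single-coset support by a direct splitting argument---the same uniqueness hypothesis, used slightly more explicitly.
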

\begin{proof}
    Let $G$ be the initial (generally non-regular) stabilizer group of $|\psi\rangle$.
    We show that $G$ can be replaced with a regular stabilizer group.
    Let $\{D_1, \dots, D_r\}$ be generators of the diagonal subgroup of $G$.
    Extend this set to a generating set of $G$, say $\Generators=\{D_1 ,\dots, D_r, g_1, \dots, g_k\}$.
    Here each $g_j$ has the form $g_j=i^{s_j} X(\vec{a}_j)S(\vec{b}_j)$.
    We can assume $g_j$ is non-diagonal, and $\vec{a}_j$ are linearly independent of each other.
    Since if this is not the case, we can use the procedure in Theorem~\ref{thm:normal_form} to transform $g_j$ to satisfy this condition.
    Notice that $g_j$ are all monomial unitary matrix.
    It follows that the permutation group $\PermutationGroup$ associated with $G$ is generated by the operators $X(\vec{a}_j)$.
    Let $\OrbitSpace\subseteq \mathbb{Z}_2^n$ (where $n$ denotes the number of qubits) be the linear span of the $\vec{a}_j$.
    Then $\PermutationGroup= \{X(v)\mid v\in\OrbitSpace\}$.
    Furthermore, the orbit of a computational basis state $|x\rangle$ is the coset of~$\OrbitSpace$ containing $x$ i.e. $\Orbit_x= x+\OrbitSpace$.
    
    Consider the set $\InvariantSetDiagonal$ of those $n$-bit strings $z$ satisfying $D|z\rangle = |z\rangle$ for all $D\in\DiagonalSubgroup$.
    Furthermore, recall from the proof of theorem \ref{thm:inside NP} that $G_x =\DiagonalSubgroup$ for every $x$.
    Applying theorem \ref{thm:monomial}(b) and using that the dimension of the space stabilized by $G$ is 1 (since $|\psi\rangle$ is an XS-stabilizer state) we conclude that $\InvariantSetDiagonal=x+\OrbitSpace$ for some $x$, and that $|\psi\rangle$ must have the form
    \begin{equation}
        \ket{\psi}
        =\smashoperator[l]{\sum_{v\in
                                 \OrbitSpace}}
         f(v)\mkern2mu
         \ket{v+
              x}.
    \end{equation}
    Now define Z-type operators $h_k$ of the form $h_k(-1)^{s_k} Z(\vec{b}_k)$ $(k=1, \dots, q)$, where $s_k$ and $\vec{b}_k$ are chosen such that $\InvariantSetDiagonal$ coincides with the set of all $z$ satisfying $\vec{b}_k^Tz = s_k$ for all $k$.
    This means $h_k|z\rangle = |z\rangle$ for all $z\in x+\OrbitSpace$ and in turn $h_k|\psi\rangle = |\psi\rangle$.
    It follows that $|\psi\rangle$ is stabilized by $\Generators':=\{h_1 ,\dots, h_q, g_1, \dots, g_k\}$.
    Finally, by lemma~\ref{lemma:sufficient condition for regular group}, we know the group $G'$ is regular, and by the argument in lemma~\ref{thm:compute_form_psi}, we know $\ket{\psi}$ is uniquely stabilized by $G'$.
\end{proof}

Now consider the procedure in lemma~\ref{thm:compute_form_psi}.
It is easy to see even if the group $G$ is non-regular, as long as we have a $\vec{\lambda}_j\in\InvariantSetDiagonal$, we can still find the $\ket{\psi_j}$ corresponds to $\vec{\lambda}_j$.
By theorem \ref{thm:monomial}, there is a set of $\{\vec{\lambda}_j\}$ such that the corresponding $\ket{\psi_j}$ form a basis for the space stabilized by $G$.
Again by the procedure in lemma~\ref{thm:compute_form_psi}, we know $G$ can be expanded to uniquely stabilize each $\ket{\psi_j}$.
Thus by theorem~\ref{thm:non-regular state has regular group} we know $\ket{\psi_j}$ is a regular XS-stabilizer state.
This means although it is (computationally) hard to find $\vec{\lambda}_j$, the basis $\ket{\psi_j}$ for the code space still satisfies all the properties we proved, including the form of the phases $f(x)$ and the bipartite entanglement.
The construction of commuting Hamiltonian also does not require the stabilizer group to be regular.

On the other hand, for non-regular stabilizer groups, there is no general formula for the degeneracy.
We also cannot find logical operators that have a similar form as the ones in section~\ref{sec:logical_operators}, since the degeneracy of $\RegularXSCode$ is not necessarily $2^k$.

\section{Open Questions}

In this section we will summarize a few interesting questions about the XS-stabilizer formalism, some of which have already been mentioned in the text.
\begin{description}
    \item[The group structure]
        While the tractability of XS-stabilizer states $\ket{\psi_j}$ is closely related to the fact that each XS-stabilizer group $G$ is a rather particular finite group, the properties of $\ket{\psi_j}$ are not.
        It would be interesting to establish some direct link between the group $G$ and the states $\ket{\psi_j}$ (e.g. a relation between the reduced density matrix $\rho$ and $G$).
    \item[Properties of entanglement]
        As we mentioned in section~\ref{sec:entanglement}, it is not known whether for any XS-stabilizer state $\ket{\psi}$ there exists a single Pauli stabilizer state $\ket{\varphi}$ that has the same von Neumann entropy across \emph{all} bipartitions.
        It would also be interesting to know to what extent the inequalities described in~\cite{Linden2013quantum} hold for XS-stabilizer states.
    \item[Logical operators and transversal gates]
        We have shown how to construct $\bar{Z}_j$ and $\bar{X}_j$ operators in section~\ref{sec:logical_operators}.
        The $\bar{Z}_j$ are transversal gates by definition.
        While we showed that the $\bar{X}_j$ operators include~$X$, $S$, and~$\ControlledZ$ in general, it is possible that for many codes the~$\bar{X}_j$ only contain $X$ and $S$.
        In particular, $S$ and $CZ$ are interchangeable in some cases.
        For example, consider the state
        \begin{equation}
            \ket{\psi}
            =\sum_{x_1,
                   x_2}
             \ket{x_1,
                  x_2,
                  x_1\oplus
                  x_2}.
        \end{equation}
        It is easy to check that
        \begin{equation}
            \ControlledZ_{1
                          2}\mkern2mu
            \ket{\psi}
            =S_1^3
             S_2^3
             S_3\mkern2mu
             \ket{\psi}.
        \end{equation}
        Thus it would be interesting to know when a certain XS-stabilizer code has transversal $\bar{X}_j$ operators, and possibly some other transversal gates.
    \item[Quantum phases]
        Understanding topological phases is an extremely important but also very hard task.
        Compared to general local Hamiltonians, the Hamiltonians generated by local Pauli stabilizer codes are much easier to analyze.
        Thus the Pauli stabilizer formalism has proved a gateway both to studying the behaviour of topological phases and to constructing new models.
        It is then natural to ask whether we can classify all topological phases described by XS-stabilizer codes or whether we can construct new models in 2D and 3D.
    \item[Non-regular XS-stabilizer]
        As we have shown in this paper, it is in general computationally hard to study the states stabilized by non-regular XS-stabilizer groups.
        Restricting to regular groups is sufficient to circumvent this problem, but not necessary.
        It is thus desirable to find the necessary conditions under which the XS-stabilizer problem will become efficient.
        For example, it is not clear whether the XS-stabilizer problem is still hard if the number of S-type operators in the generators of the diagonal subgroup is constant.
\end{description}

\section{Acknowledgements}

This research was supported in part by Perimeter Institute for Theoretical Physics.
Research at Perimeter Institute is supported by the Government of Canada through Industry Canada and by the Province of Ontario through the Ministry of Research and Innovation.

\appendix

\section{Twisted Quantum Double Models}
\label{sec:twisted_quantum_double}

\begin{figure}
    \centering
    \includegraphics[scale=0.25]{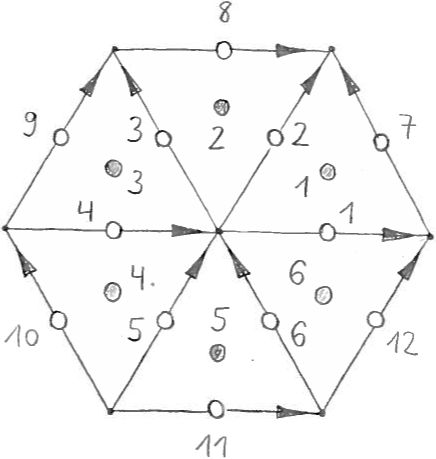}
    \caption{A branching structure on the triangular lattice.
        White circles denote qudits~$\ket{x_i}$, grey circles denote possible ancilla qudits~$\ket{y_p}$.}
    \label{fig:branching}
\end{figure}

We study the twisted quantum double models~$\mathrm{D}^\omega(\mathbb{Z}_2^n)$ with the groups~$\mathbb{Z}_2^n$ and twists $\omega\in H^3\bigl(\mathbb{Z}_2^n,U(1)\bigr)$ on a triangular lattice.
Although every such group is Abelian, for certain~$n$ and~$\omega$ the twisted quantum double model~$\mathrm{D}^\omega(\mathbb{Z}_2^n)$ will harbour \emph{non}-Abelian anyons as excitations.

Without loss of generality we choose the branching structure shown in Figure~\ref{fig:branching} for the triangular lattice.
Each lattice edge~$i$ carries a Hilbert space with basis $\{\ket{x_i}\mid x_i\in\mathbb{Z}_2^n\}$.
By abuse of notation $\ket{x_i}$ is either the state of an actual qubit if $n=1$ or the state of a qudit if $n>1$.
In the latter case we write elements $t=(t_1,\dots,t_n)\in\mathbb{Z}_2^n$ as binary strings over the alphabet~$\{0,1\}$ and accordingly expand the qudit state~$\ket{x_i}=\ket{x_{i,1},\dots,x_{i,n}}$ in terms of qubit states~$\ket{x_{i,\sigma}}$ where $i$ denotes the position on the lattice and $\sigma$ the \enquote{layer}.
Furthermore we write group multiplication in~$\mathbb{Z}_2^n$ additively.

The Hamiltonian is given as a sum of commuting projectors:
\begin{equation}
    H
    =-\sum_s
     A^\omega(s)
     -\sum_p
     B(p).
\end{equation}
Each operator~$B(p)$ is associated with a triangle~$p$ of the lattice and reads
\begin{equation}
    B(p)
    =\delta(x_i+
            x_j+
            x_k)\mkern2mu
     \ketbra{x_i,
             x_j,
             x_k}
            {x_i,
             x_j,
             x_k}
\end{equation}
where $i$, $j$ and~$k$ denote the edges of~$p$.
It enforces a flat connection on the triangle~$p$ in the ground state subspace.
The operator~$A^\omega(s)$ associated with a vertex~$s$ is defined by
\begin{equation}
    A^\omega(s)
    =\frac{1}
          {2^n}
     \sum_{t\in
           \mathbb{Z}_2^n}
     A_t^\omega(s).
\end{equation}
If $s$ is the central vertex of Figure~\ref{fig:branching} the individual terms are given by
\begin{equation}
    A_t^\omega(s)
    =\smashoperator[l]{\sum_{x_i\in
                             \mathbb{Z}_2^n}}
     f_t^\omega(x)\mkern2mu
     \ketbra{x_1+
             t,
             \dots,
             x_6+
             t}
            {x_1,
             \dots,
             x_6}\otimes
     \ketbra{x_7,
             \dots,
             x_{12}}
            {x_7,
             \dots,
             x_{12}}
\end{equation}
with the phases%
\footnote{Note that these phases do not explicitly depend on the values~$x_2$, $x_5$, $x_9$ and~$x_{12}$.
    This may change if one fixes a different branching structure on the triangular lattice.}
\begin{equation}
    f_t^\omega(x)
    =\frac{\omega(t,
                  x_4,
                  x_{10})\mkern2mu
           \omega(x_3+
                  t,
                  t,
                  x_4)\mkern2mu
           \omega(x_8,
                  x_3+
                  t,
                  t)}
          {\omega(t,
                  x_6,
                  x_{11})\mkern2mu
           \omega(x_1+
                  t,
                  t,
                  x_6)\mkern2mu
           \omega(x_7,
                  x_1+
                  t,
                  t)}
    =\pm1.
\end{equation}
Note that each~$\omega$ couples two distinct qudit variables~$x_i$ and~$x_j$ which always belong to some triangle.
Also, these phases enjoy the property
\begin{equation}
    f_{t
       t'}^\omega(x)
    =f_t^\omega(x_1+
                t',
                \dots,
                x_6+
                t',
                x_7,
                \dots,
                x_{12})\mkern2mu
     f_{t'}^\omega(x)
\end{equation}
which implies $A_t^\omega(s)\mkern2mu A_{t'}^\omega(s)=A_{tt'}^\omega(s)$.
The phases arising from a product of 3-cocycles~$\omega$ and~$\omega'$ factorize as
\begin{equation}
    \label{eq:phases_product_cocycle}
    f_t^{\omega
         \omega'}(x)
    =f_t^{\omega}(x)\mkern2mu
     f_t^{\omega'}\mkern-3mu(x)
\end{equation}
because $(\omega\omega')(a,b,c)=\omega(a,b,c)\mkern2mu\omega'(a,b,c)$ is the multiplication of 3-cocycles.

Since $x_i+x_j+x_k=0$ in~$\mathbb{Z}_2^n$ is equivalent to $x_{i,\sigma}\oplus x_{j,\sigma}\oplus x_{k,\sigma}=0$ for all layers~$\sigma$ we can describe the common $+1$~eigenspace of all triangle operators~$B(p)$ as the subspace stabilized by
\begin{equation}
    Z_{i,
       \sigma}
    Z_{j,
       \sigma}
    Z_{k,
       \sigma}
\end{equation}
for all edges~$i$, $j$ and~$k$ forming a triangle and all layers~$\sigma$.
This subspace is exactly the gauge-invariant subspace of Section~\ref{sec:hamiltonians}.
In order to describe the ground state subspace of the complete Hamiltonian it suffices to add the stabilizers $A_t^\omega(s)$ for all vertices~$s$ and all generators~$t$ of~$\mathbb{Z}_2^n$.
While a vertex operator~$A_t^\omega(s)$ itself may not belong to the Pauli-S group we will find an equivalent stabilizer~$\PauliSVertexStabilizer_t^\omega(s)\in\PauliSGroup$ which coincides with~$A_t^\omega(s)$ on the gauge-invariant subspace.

\subsection{$\mathbb{Z}_2$}
\label{sec:twisted_double_Z2}

The third cohomology group $H^3\bigl(\mathbb{Z}_2,U(1)\bigr)\simeq\mathbb{Z}_2$ is generated by
\begin{equation}
    \omega(a,
           b,
           c)
    =(-1)^{a
           b
           c}.
\end{equation}
It is well known that all twisted quantum double models for the group~$\mathbb{Z}_2$ support Abelian anyons only.

For this~$\omega$ we obtain the phases
\begin{equation}
    f_1^\omega(x)
    =(-1)^{x_1
           x_6+
           x_1
           x_7+
           x_3
           x_4+
           x_3
           x_8+
           x_4
           x_{10}+
           x_6
           x_{11}}\mkern2mu
     (-1)^{x_4+
           x_6+
           x_7+
           x_8}.
\end{equation}
The phases with linear exponent can always be generated by applying~$Z$.
On the gauge-invariant subspace we can also generate all quadratic phases~$(-1)^{x_ix_j}$ by applying suitable powers of~$S$ because the edges~$i$ and~$j$ always belong to some triangle.
Denoting the third edge of the triangle by~$k$ we can indeed get~$(-1)^{x_ix_j}$ from $S_i^3S_j^3S_k$ because $\mathrm{i}^{x_k}=\mathrm{i}^{x_i\oplus x_j}=\mathrm{i}^{x_i+x_j}\mkern2mu(-1)^{x_ix_j}$ holds by Lemma~\ref{lem:parity to high order}.
Hence the operator
\begin{equation}
    X_1
    \cdots
    X_6
    Z_1
    Z_2
    Z_3
    Z_5
    S_7
    S_8
    S_9
    S_{10}^\dagger
    S_{11}^\dagger
    S_{12}
\end{equation}
coincides with~$A_1^\omega(s)$ on the gauge-invariant subspace.
We can recover a more symmetric expression by multiplying with Z-type stabilizers and obtain
\begin{equation}
    \PauliSVertexStabilizer_1^\omega(s)
    =X_1
     \cdots
     X_6
     Z_1
     \cdots
     Z_6
     S_7
     \cdots
     S_{12}.
\end{equation}
This is the same stabilizer as the one in the doubled semion model~\cite{LevinWen} up to conjugation by $S_1\cdots S_6$.
The subspace stabilized by all $\PauliSVertexStabilizer_1^\omega(s)$ and Z-type stabilizers is thus equivalent to the ground state subspace of the doubled semion model up to local unitaries.

Now for a given lattice, we can define $g_j$ with $j\leq t$ to be $\PauliSVertexStabilizer_1^\omega(s)$ on each vertex $s$, and the rest of $g_j$ to be the operator $B(p)$.
One thing needs to be taken care of is when the lattice periodic boundary condition (e.g. torus), $g_j$ will no long be in the standard form as we defined in~\eqref{eq:standard form when no S}, since we have
\begin{equation}
    \prod_{j
           \leq
            t}
    X(\vec{a}_j)
    =\Identity.
\end{equation}
To check that in this case the stabilizer group is still regular, we only need to check the product
\begin{equation}
    \label{eq:product of all g_j on a torus}
    \prod_{j
           \leq
            t}
    g_j
\end{equation}
is a Z-type operator.
We notice that by lemma~\eqref{lem:commutators}, we can exchange $X$ and $S$ in the product~\eqref{eq:product of all g_j on a torus} with the only price being introducing new $Z$ operators into the product.
Thus as long as for each $j$, the $S_j$ operator ($S$ on the $j$th qubit) appears even number of times in the product, we know the product will be a Z-type operator.
And this can be readily checked.
With a straightforward but more involved calculation, we can show that the product~\eqref{eq:product of all g_j on a torus} is satisfied by the gauge-invariant subspace, or in other words, the product can be generated by $\{B(p)\}$.

\subsection{$\mathbb{Z}_2\times\mathbb{Z}_2$}

The third cohomology group $H^3\bigl(\mathbb{Z}_2\times\mathbb{Z}_2,U(1)\bigr)\simeq\mathbb{Z}_2^3$ is generated by
\begin{align}
    \omega_1(a,
             b,
             c)
    & =(-1)^{a_1
             b_1
             c_1}, \\
    \omega_2(a,
             b,
             c)
    & =(-1)^{a_2
             b_2
             c_2}, \\
    \omega_3(a,
             b,
             c)
    & =(-1)^{a_1
             b_2
             c_2}.
\end{align}
It is known that all twisted quantum double models for the group~$\mathbb{Z}_2\times\mathbb{Z}_2$ support Abelian anyons only.

It is not difficult to see that the 3-cocycles~$\omega_1$ and~$\omega_2$ do not lead to anything qualitatively new compared to the case~$\mathbb{Z}_2$.%
\footnote{Indeed, for~$\omega_1$ the phases~$f_{(1,0)}^{\omega_1}$ are confined to layer~1 where we can apply the methods of~\ref{sec:twisted_double_Z2}.
    In contrast, the other generator~$(0,1)$ yields trivial phases only so that $\PauliSVertexStabilizer_{(0,1)}^{\omega_1}=A_{(0,1)}^{\omega_1}$ is an X-type element confined to layer~2.}

The 3-cocycle~$\omega_3$ is much more interesting.
We obtain the phases
\begin{align}
    f_{(1,
        0)}^{\omega_3}(x)
    & =(-1)^{x_{4,
                2}
             x_{10,
                2}+
             x_{6,
                2}
             x_{11,
                2}}, \\
    f_{(0,
        1)}^{\omega_3}(x)
    & =(-1)^{x_{1,
                1}
             x_{6,
                2}+
             x_{3,
                1}
             x_{4,
                2}+
             x_{1,
                2}
             x_{7,
                1}+
             x_{3,
                2}
             x_{8,
                1}}\mkern2mu
       (-1)^{x_{7,
                1}+
             x_{8,
                1}}.
\end{align}
Clearly, the phases associated with~$(1,0)$ are confined to layer~2 and we can apply the methods of~\ref{sec:twisted_double_Z2}.
This results in
\begin{equation}
    \PauliSVertexStabilizer_{(1,
                              0)}^{\omega_3}(s)
    =X_{1,
        1}
     \cdots
     X_{6,
        1}
     S_{4,
        2}^3
     Z_{5,
        2}
     S_{6,
        2}^3
     S_{10,
        2}^3
     S_{11,
        2}^3.
\end{equation}
However, the quadratic phases $(-1)^{x_{i,1}x_{j,2}}$ arising from~$(0,1)$ are of a different kind.
Although all pairs of edges~$i$ and~$j$ continue to belong to some triangle we can no longer exploit the flat connection since the qubits reside on different layers.
Instead we introduce the ancilla qubits
\begin{equation}
    \ket{y_p}
    =\ket{x_{i,
             1}\oplus
          x_{j,
             2}}
\end{equation}
for $(p,i,j)\in\{(1,7,1),(2,8,3),(3,3,4),(4,4,10),(5,6,11),(6,1,6)\}$ and these may be associated with the triangles of the lattice as shown in Figure~\ref{fig:branching}.
Clearly, the above coupling can be enforced by additional Z-type stabilizers.
We will write $\tilde{O}_p$ for an operator~$O$ acting on the ancilla qubit~$y_p$ in the triangle~$p$.
On the gauge-invariant subspace coupled to the ancilla layer we then have
\begin{equation}
    \PauliSVertexStabilizer_{(0,
                              1)}^{\omega_3}(s)
    =X_{1,
        2}
     \cdots
     X_{6,
        2}
     S_{1,
        1}^3
     S_{1,
        2}^3
     S_{3,
        1}^3
     S_{3,
        2}^3
     S_{4,
        2}^3
     S_{6,
        2}^3
     S_{7,
        1}
     S_{8,
        1}
     \tilde{S}_1
     \tilde{S}_2
     \tilde{S}_3
     \tilde{S}_6.
\end{equation}

Similar to~\ref{sec:twisted_double_Z2}, we can also compute the additional diagonal operators when we have a lattice with periodic boundary condition.
Notice that $\PauliSVertexStabilizer_{(1,0)}^{\omega_3}(s)$ ($\PauliSVertexStabilizer_{(0,1)}^{\omega_3}(s)$) commute with each other for any two vertices.
It is then straightforward to check the multiplication of all $\PauliSVertexStabilizer_{(1,0)}^{\omega_3}(s)$ is identity, and $\PauliSVertexStabilizer_{(0,1)}^{\omega_3}(s)$ can be generated by $B(p)$.

\subsection{$\mathbb{Z}_2\times\mathbb{Z}_2\times\mathbb{Z}_2$}
\label{sec:twisted_double_Z2Z2Z2}

The third cohomology group $H^3\bigl(\mathbb{Z}_2\times\mathbb{Z}_2\times\mathbb{Z}_2,U(1)\bigr)\simeq\mathbb{Z}_2^7$ is generated by
\begin{align}
    \omega_1(a,
             b,
             c)
    & =(-1)^{a_1
             b_1
             c_1}, \\
    \omega_2(a,
             b,
             c)
    & =(-1)^{a_2
             b_2
             c_2}, \\
    \omega_3(a,
             b,
             c)
    & =(-1)^{a_3
             b_3
             c_3}, \\
    \omega_4(a,
             b,
             c)
    & =(-1)^{a_1
             b_2
             c_2}, \\
    \omega_5(a,
             b,
             c)
    & =(-1)^{a_1
             b_3
             c_3}, \\
    \omega_6(a,
             b,
             c)
    & =(-1)^{a_2
             b_3
             c_3}, \\
    \omega_7(a,
             b,
             c)
    & =(-1)^{a_1
             b_2
             c_3}.
\end{align}
It turns out that the twisted quantum double models~$\mathrm{D}^\omega(\mathbb{Z}_2^3)$ support non-Abelian anyons if and only if the twist~$\omega$ contains~$\omega_7$ \cite{deWildPropitius}.

Again, the 3-cocycles~$\omega_1$, …, $\omega_6$ lead to situations which qualitatively resemble the cases~$\mathbb{Z}_2$ and~$\mathbb{Z}_2\times\mathbb{Z}_2$.

Now the 3-cocycle~$\omega_7$ leads to truly interesting results.
We obtain the phases
\begin{align}
    f_{(1,
        0,
        0)}^{\omega_7}(x)
    & =(-1)^{x_{4,
                2}
             x_{10,
                3}+
             x_{6,
                2}
             x_{11,
                3}}, \\
    f_{(0,
        1,
        0)}^{\omega_7}(x)
    & =(-1)^{x_{1,
                1}
             x_{6,
                3}+
             x_{3,
                1}
             x_{4,
                3}}, \\
    f_{(0,
        0,
        1)}^{\omega_7}(x)
    & =(-1)^{x_{1,
                2}
             x_{7,
                1}+
             x_{3,
                2}
             x_{8,
                1}}.
\end{align}
Let us introduce the ancilla qubits
\begin{align}
    \ket{y_{p,
            1}}
    & =\ket{x_{i,
               1}\oplus
            x_{j,
               2}}, \\
    \ket{y_{p,
            2}}
    & =\ket{x_{i,
               1}\oplus
            x_{j,
               3}}, \\
    \ket{y_{p,
            3}}
    & =\ket{x_{i,
               2}\oplus
            x_{j,
               3}}
\end{align}
for positions $(p,i,j)\in\{(1,7,1),(2,8,3),(3,3,4),(4,4,10),(5,6,11),(6,1,6)\}$.
This coupling can again be enforced by additional Z-type stabilizers.
We can then write
\begin{align}
    \label{eq:non abelian anyon stabilizer1}
    \PauliSVertexStabilizer_{(1,
                              0,
                              0)}^{\omega_7}(s)
    & =X_{1,
          1}
       \cdots
       X_{6,
          1}
       S_{4,
          2}^3
       S_{10,
          3}^3
       S_{6,
          2}^3
       S_{11,
          3}^3
       \tilde{S}_{4,
                  3}
       \tilde{S}_{5,
                  3}, \\
    \label{eq:non abelian anyon stabilizer2}
    \PauliSVertexStabilizer_{(0,
                              1,
                              0)}^{\omega_7}(s)
    & =X_{1,
          2}
       \cdots
       X_{6,
          2}
       S_{1,
          1}^3
       S_{3,
          1}^3
       S_{4,
          3}^3
       S_{6,
          3}^3
       \tilde{S}_{3,
                  2}
       \tilde{S}_{6,
                  2}, \\
    \label{eq:non abelian anyon stabilizer3}
    \PauliSVertexStabilizer_{(0,
                              0,
                              1)}^{\omega_7}(s)
    & =X_{1,
          3}
       \cdots
       X_{6,
          3}
       S_{1,
          2}^3
       S_{3,
          2}^3
       S_{7,
          1}^3
       S_{8,
          1}^3
       \tilde{S}_{1,
                  1}
       \tilde{S}_{2,
                  1}.
\end{align}

For a given $(j,k,l)\in \{(0,0,1), (0,1,0), (0,0,1)\}$, again $\PauliSVertexStabilizer_{(j,k,l)}^{\omega_7}(s)$ commute with each other for different $s$.
Thus it is easy to compute the product $\prod_s \PauliSVertexStabilizer_{(j,k,l)}^{\omega_7}(s)$ for a lattice with periodic boundary condition.
However, in this case, the product would be some tensor product that contains $S$ operators.
Thus the stabilizer group $G$ for this model on a torus is not a regular XS-stabilizer group, which is different from the previous two models that are based on $\mathbb{Z}_2$ and $\mathbb{Z}_2\times \mathbb{Z}_2$.
However, on a 2D lattice with suitable boundary the stabilizer group~$G$ \emph{is} regular and the unique ground state continues to support non-Abelian anyons since these excitations can be created locally.

\subsection{$\mathbb{Z}_2^n$}

In general, the third cohomology group $H^3\bigl(\mathbb{Z}_2^n,U(1)\bigr)$ is generated by the following types of generators~\cite{deWildPropitius}:
\begin{align}
    \omega_i(a,
             b,
             c)
    & =(-1)^{a_i
             b_i
             c_i}, \\
    \omega_{i
            j}(a,
               b,
               c)
    & =(-1)^{a_i
             b_j
             c_j}, \\
    \omega_{i
            j
            k}(a,
               b,
               c)
    & =(-1)^{a_i
             b_j
             c_k}.
\end{align}
Here $i$, $j$ and~$k$ denote distinct factors (layers) of the direct product group~$\mathbb{Z}_2^n$.
We have shown above how the phases $f_t^\omega$ for each such generator~$\omega$ can be expressed within the XS-stabilizer formalism by coupling ancilla qubits to the original ones as necessary.

This clearly extends to arbitrary elements of the third cohomology group.
Suppose we want to obtain the phases associated with the 3-cocycle~$\omega\omega'$ where $\omega$ and~$\omega'$ are any of the above generators.
From~\eqref{eq:phases_product_cocycle} we see that we can construct these phases independently for~$\omega$ and $\omega'$.
This shows that we can describe the ground state subspaces of arbitrary twisted quantum double models~$\mathrm{D}^\omega(\mathbb{Z}_2^n)$ with our XS-stabilizer formalism.

\begin{raggedright}
    \printbibliography
\end{raggedright}

\end{document}